\def\doi{8 (1:14) 2012}
\DeclareFontFamily{U}{matha}{\hyphenchar\font45} 
\DeclareFontShape{U}{matha}{m}{n}{
      <5> <6> <7> <8> <9> <10> gen * matha
      <10.95> matha10 <12> <14.4> <17.28> <20.74> <24.88> matha12
      }{}
\DeclareSymbolFont{matha}{U}{matha}{m}{n}
\DeclareMathSymbol{\cll}{3}{matha}{"CE} 
\newcommand\Min{\mathop{\text{Min}}}
\newcommand\limp{\Rightarrow}
\newcommand\nat{\mathbb{N}}
\newcommand\Smyth{\mathcal Q}
\newcommand\V{{\mathcal V}}
\newcommand\SV{\Smyth_\V}
\newcommand\img{\mathop{\mathrm{im}}}
\newcommand\Open{\mathcal O}
\newcommand\upc{\mathop{\uparrow}\nolimits}
\newcommand\dc{\mathop{\downarrow}\nolimits}
\newcommand\uuarrow{\rlap{$\uparrow$}\raise.5ex\hbox{$\uparrow$}}
\newcommand\ddarrow{\rlap{$\downarrow$}\raise.5ex\hbox{$\downarrow$}}
\newcommand\Fin{\mathop{\text{Fin}}}
\newcommand{\interior}[1]{int ({#1})} 
\newcommand\QRB{\mathbf{QRB}}
\newcommand\B{\mathbf{B}}
\newcommand\RB{\mathbf{RB}}
\newcommand\FS{\mathbf{FS}}
\newcommand{\identity}[1]{\mathrm{id}_{#1}}
\newcommand\qs{\varsigma}
\newcommand\Val{\mathbf V}
\newcommand{\real}{\mathbb{R}}
\newcommand{\creal}{\overline{\real^+_\sigma}}
\newcommand\dG{{\mathsf{d}}}
\newcommand\patch{{\mathsf{patch}}}
\begin{document}

\title[QRB-Domains]{$\QRB$-Domains and the Probabilistic Powerdomain\rsuper*}

\author[J.~Goubault-Larrecq]{Jean Goubault-Larrecq}	
\address{LSV, ENS Cachan, CNRS, INRIA, France}	
\email{goubault@lsv.ens-cachan.fr}  




\keywords{Domain theory, quasi-continuous domains, probabilistic powerdomain.}
\subjclass{D.3.1, F.1.2, F.3.2}

\titlecomment{{\lsuper*}An extended abstract already appeared in Proc. 25th {A}nnual {IEEE} {S}ymposium on {L}ogic in {C}omputer {S}cience ({LICS}'10).}



\begin{abstract}
  Is there any Cartesian-closed category of continuous domains that
  would be closed under Jones and Plotkin's probabilistic powerdomain
  construction?  This is a major open problem in the area of
  denotational semantics of probabilistic higher-order languages.  We
  relax the question, and look for quasi-continuous dcpos instead.
  We introduce a natural class of such quasi-continuous dcpos, the
  omega-QRB-domains.  We show that they form a category omega-QRB with
  pleasing properties: omega-QRB is closed under the probabilistic
  powerdomain functor, under finite products, under taking bilimits of
  expanding sequences, under retracts, and even under so-called
  quasi-retracts.  But\ldots{} omega-QRB is not Cartesian closed.  We
  conclude by showing that the QRB domains are just one half of an
  FS-domain, merely lacking control.
\end{abstract}

\maketitle


\section{Introduction}
\label{sec:intro}

\subsection{The Jung-Tix Problem}
A famous open problem in denotational semantics is whether the
probabilistic powerdomain $\Val_1 (X)$ of an $\FS$-domain $X$ is again
an $\FS$-domain \cite{JT:troublesome}, and similarly with
$\RB$-domains in lieu of $\FS$-domains.  $\Val_1 (X)$ (resp.\
$\Val_{\leq 1} (X)$) is the dcpo of all continuous probability (resp.,
subprobability) valuations over $X$: this construction was introduced
by Jones and Plotkin to give a denotational semantics to higher-order
probabilistic languages \cite{JP:proba}.

More generally, is there a category of nice enough dcpos that would be
Cartesian-closed and closed under $\Val_1$?  We call this the {\em
  Jung-Tix problem\/}.  By ``nice enough'', we mean nice enough to do
any serious mathematics with, e.g., to establish definability or full
abstraction results in extensional models of higher-order,
probabilistic languages.  It is traditional to equate ``nice enough''
with ``continuous'', and this is justified by the rich theory of
continuous domains \cite{GHKLMS:contlatt}.

However, {\em quasi-continuous\/} dcpos (see \cite{GLS:quasicont}, or
\cite[III-3]{GHKLMS:contlatt}) generalize continuous dcpos and are
almost as well-behaved.  We propose to widen the scope of the problem,
and ask for a category of quasi-continuous dcpos that would be closed
under $\Val_1$.  We show that, by mimicking the construction of
$\RB$-domains \cite{AJ:domains},
with some flavor of ``quasi'', we obtain a category $\omega\QRB$ of
so-called $\omega\QRB$-domains that not only has many desired, nice
mathematical properties (e.g., it is closed under taking bilimits of
expanding sequences, and every $\omega\QRB$-domain is stably compact),
but is also closed under $\Val_1$.

We failed to solve the Jung-Tix problem: $\omega\QRB$ is indeed not
Cartesian-closed.  In spite of this, we believe our contribution to
bring some progress towards settling the question, and at least to
understand the structure of $\Val_1 (X)$ better.  To appreciate this,
recall what is currently known about $\Val_1$.  There are two landmark
results: $\Val_1 (X)$ is a continuous dcpo as soon as $X$ is
(\cite{Edalat:int}, building on Jones
\cite{JP:proba}), and $\Val_1 (X)$ is stably compact (with its weak
topology) whenever $X$ is \cite{JT:troublesome,AMJK:scs:prob}.  Since
then, no significant progress has been made.  When it comes to solving
the Jung-Tix problem, we must realize that there is {\em little
  choice\/}: the only known Cartesian-closed categories of (pointed)
continuous dcpos that may suit our needs are $\RB$ and $\FS$
\cite{JT:troublesome}.  I.e., all other known Cartesian-closed
categories of continuous dcpos, e.g., bc-domains or L-domains, are
{\em not\/} closed under $\Val_1$.
Next, we must recognize that {\em little is known\/} about the
(sub)probabilistic powerdomain of an $\RB$ or $\FS$-domain.  In trying
to show that either $\RB$ or $\FS$ was closed under $\Val_1$, Jung and
Tix \cite{JT:troublesome} only managed to show that the
subprobabilistic powerdomain $\Val_{\leq 1} (X)$ of a {\em finite
  tree\/} $X$ was an $\RB$-domain, and that the subprobabilistic
powerdomain of a {\em reversed finite tree\/} was an $\FS$-domain.
This is still far from the goal.

\begin{figure}
  \centering
  \ifpdf
  \input{v3.pdf_t}
  \else
  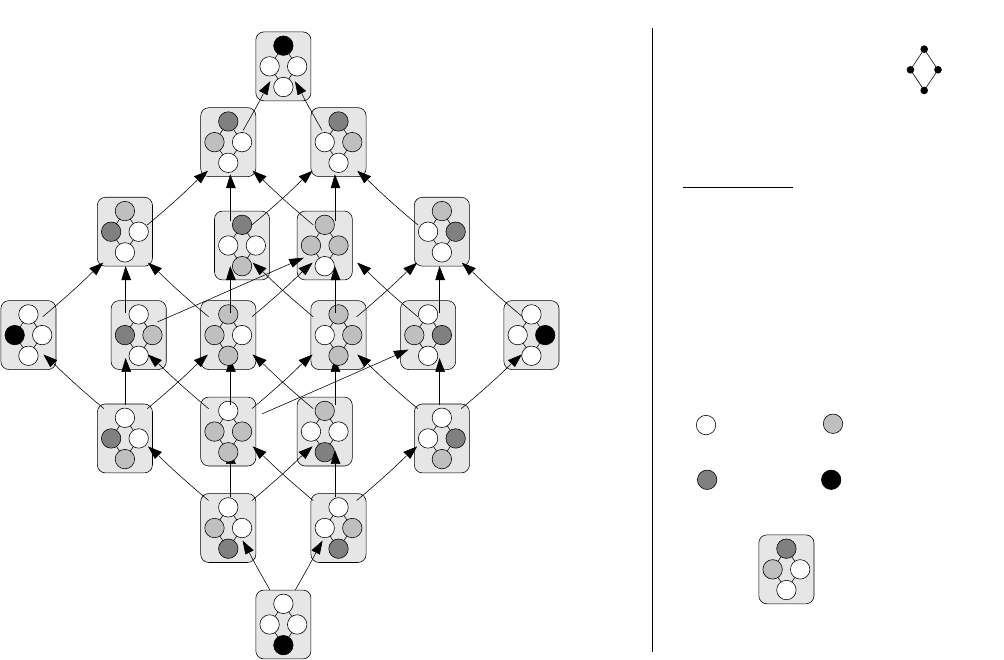
  \fi
  \caption{Part of the Hasse Diagram of $\Val_1 (X)$}
  \label{fig:v3}
\end{figure}

In fact, we do not know whether $\Val_1 (X)$ is an $\RB$-domain when
$X$ is even the simple poset $\{\bot, a, b, \top\}$ ($a$ and $b$
incomparable, $\bot \leq a, b \leq \top$, see Figure~\ref{fig:v3},
right)---but it is an $\FS$-domain.  For a more complex (arbitrarily
chosen) example, take $X$ to be the finite pointed poset of
Figure~\ref{fig:ex1}~$(i)$: then $\Val_1 (X)$ and $\Val_{\leq 1} (X)$
are continuous and stably compact, but not known to be $\RB$-domains
or $\FS$-domains (and they are much harder to visualize, too).

\begin{figure}
  \centering
  \ifpdf
  \input{ex1.pdftex_t}
  \else
  \input{ex1.pstex_t}
  \fi
  \caption{Poset Examples}
  \label{fig:ex1}
\end{figure}

No progress seems to have been made on the question since Jung and
Tix' 1998 attempt.  As part of our results, we show that for every
finite pointed poset $X$, e.g.\ Figure~\ref{fig:ex1}~$(i)$, $\Val_1
(X)$ is a continuous $\omega\QRB$-domain.
This is also one of the basic results that we then leverage to show
that $\Val_1 (X)$ is an $\omega\QRB$-domain for {\em any\/}
$\omega\QRB$-domain, in particular every $\RB$-domain, not just every
finite pointed poset, $X$.


One may obtain some intuition as to why this should be so, and at the
same time give an idea of what ($\omega$)$\QRB$-domains are.  Let $X$
be a finite pointed poset.  In attempting to show that $\Val_1 (X)$ is
an $\RB$-domain, we are led to study the so-called {\em deflations\/}
$f : \Val_1 (X) \to \Val_1 (X)$, i.e., the continuous maps $f$ with
{\em finite\/} range such that $f (\nu) \leq \nu$ for every continuous
probability valuation $\nu$ on $X$, and we must try to find deflations
$f$ such that $f (\nu)$ is as close as one desires to $\nu$.  All
natural definitions of $f$ fail to be continuous, and in fact to be
monotonic.  (E.g., Graham's construction \cite{Graham:rb:V} is not
monotonic, see Jung and Tix.)  Looking for maps $f$ such that $f
(\nu)$ is instead a finite, non-empty {\em set\/} of valuations below
$\nu$ shows more promise---the monotonicity requirements are slightly
more relaxed.  Such a set-valued function is what we call a {\em
  quasi-deflation\/} below.  For example, one may think of fixing $N
\geq 1$ ($N=3$ in Figure~\ref{fig:v3}), and mapping $\nu$ to the
collection of all valuations $\nu'$ below $\nu$ such that the measure
of any subset is a multiple of $1/N$, keeping only those $\nu'$ that
are maximal.  (Pick them from the left of Figure~\ref{fig:v3}, in our
example.)  This still does not provide anything monotonic, but we
managed to show that one can indeed approximate every element $\nu$ of
$\Val_1 (X)$, continuously in $\nu$, using quasi-deflations.  The
proof is non-trivial, and rests on deep properties relating
$\QRB$-domains and \emph{quasi-retractions}, all notions that we
define and study.


\subsection{Outline}

We introduce most of the required notions in Section~2.  Since we
shall only start studying the probabilistic powerdomain in
Section~\ref{sec:qretr:V}, we shall refrain from defining valuations,
probabilities, and related concepts until then.

We introduce $\QRB$-domains in Section~\ref{sec:qrb}.  They are
defined just as $\RB$-domains are, only with a flavor of ``quasi'',
i.e., replacing approximating elements by approximating {\em sets\/}
of elements.  We establish their main properties there, in particular
that they are quasi-continuous, stably compact, and Lawson-compact.
Much as $\RB$-domains are also characterized as the retracts of
bifinite domains, we show that, up to a few details, the
$\QRB$-domains are the quasi-retracts of bifinite domains in
Section~\ref{sec:qretr}.  This allows us to parenthesize $\QRB$ as
quasi-(retract of bifinite domain) or as (quasi-retract) of bifinite
domain.  Quasi-retractions are an essential concept in the study of
$\QRB$-domains, and we introduce them here, as well as the related
notion of quasi-projections---images by proper maps.

We also show that the category of countably-based $\QRB$-domains is
closed under finite products (easy) and taking bilimits of expanding
sequences (hard, but similar to the case of $\RB$-domains) in
Section~\ref{sec:bilim}.

The core of the paper is Section~\ref{sec:qretr:V}, where we show that
the category $\omega\QRB$ of countably-based $\QRB$-domains is closed
under the probabilistic powerdomain construction.  This capitalizes on
all previous sections, and will follow from a variant of Jung and Tix'
result that $\Val_1 (X)$ is an $\RB$-domain whenever $X$ is a finite
tree, and applying suitable quasi-projections and bilimits.
The key result will then be Theorem~\ref{thm:qretr:V}, which shows
that for any quasi-projection $Y$ of a stably compact space $X$,
$\Val_1 (Y)$ is again a quasi-projection of $\Val_1 (X)$, again up to
a few details.

We conclude in Section~\ref{sec:conc}.

\subsection{Other Related Work}

Instead of solving the Jung-Tix problem, one may try to circumvent it.
One of the most successful such attempts led to the discovery of {\em
  qcb-spaces\/} \cite{BSS:qcb} and to compactly generated
countably-based monotone convergence spaces \cite{BSS:cgdom}, as
Cartesian-closed categories of topological spaces where a reasonable
amount of semantics can be done.  This provides exciting new
perspectives.  The category of qcb-spaces accommodates two
probabilistic powerdomains \cite{BS:twoval}.  The observationally
induced one is essentially $\Val_1 (X)$ (with the weak topology), but
differs from the one obtained as a free algebra.

\section{Preliminaries}
\label{sec:prelim}

We refer to \cite{AJ:domains,GHKLMS:contlatt,Mislove:topo:CS} for
background material.
A {\em poset\/} $X$ is a set with a partial ordering $\leq$.  Let $\dc
A$ be the downward closure $\{x \in X \mid \exists y \in A \cdot x
\leq y\}$; we write $\dc x$ for $\dc \{x\}$, when $x \in X$.  The
upward closures $\upc A$, $\upc x$ are defined similarly.  When $x
\leq y$, $x$ is {\em below\/} $y$ and $y$ is {\em above\/} $x$.  $X$
is {\em pointed\/} iff it has a least element $\bot$.  A {\em dcpo\/}
is a poset $X$ where every directed family ${(x_i)}_{i \in I}$ has a
least upper bound $\sup_{i \in I} x_i$; directedness means that $I\neq
\emptyset$ and for every $i, i' \in I$, there is an $i'' \in I$ such
that $x_i, x_{i'} \leq x_{i''}$.

Every poset, and more generally each preordered set $X$ comes with a
topology, whose opens $U$ are the upward closed subsets such that, for
every directed family ${(x_i)}_{i \in I}$ that has a least upper bound
in $U$, $x_i \in U$ for some $i \in I$.  This is the {\em Scott
  topology\/}.  When we see a poset or dcpo $X$ as a topological
space, we will implicitly assume the latter, unless marked otherwise.

There is a deep connection between order and topology.  Given %
any topological space $X$, its {\em specialization preorder\/}
$\leq$ is defined by $x \leq y$ iff every open containing $x$ also
contains $y$.  $X$ is $T_0$ iff $\leq$ is an ordering, i.e., $x \leq
y$ and $y \leq x$ imply $x=y$.  The specialization preorder of a
dcpo $X$ (with ordering $\leq$, and equipped with its Scott topology),
is the original ordering $\leq$.

A subset $A$ of a topological space $X$ is {\em saturated\/} iff it is
the intersection of all opens $U$ containing $A$.  Equivalently, $A$
is upward closed in the specialization preorder \cite[Remark
after Definition~4.34]{Mislove:topo:CS}.  So we can, and shall often
prove inclusions $A \subseteq B$ where $B$ is upward closed by showing
that every open $U$ containing $B$ also contains $A$.

A map $f : X \to Y$ between topological spaces is {\em continuous\/}
iff $f^{-1} (V)$ is open for every open subset $V$ of $Y$.  Every
continuous map is monotonic with respect to the underlying
specialization preorders.  When $X$ and $Y$ are preordered sets, it is
equivalent to require $f$ to be {\em Scott-continuous\/}, i.e., to be
monotonic and to preserve existing directed least upper bounds.  A
{\em homeomorphism\/} is a bijective continuous map whose inverse is
also continuous.

Given a set $X$, and a family $\mathcal B$ of subsets of $X$, there is
a smallest topology containing $\mathcal B$: then $\mathcal B$ is a
{\em subbase\/} of the topology, and its elements are the {\em
  subbasic opens\/}.  To show that $f : X \to Y$ is continuous, it is
enough to show that the inverse image of every subbasic open of $Y$ is
open in $X$.  A subbase $\mathcal B$ is a \emph{base} if and only if
every open is a union of elements of $\mathcal B$.  This is the case,
for example, if $\mathcal B$ is closed under finite intersections.

The {\em interior\/} $\interior A$ of a subset $A$ of a topological
space $X$ is the largest open contained in $A$.  $A$ is a
\emph{neighborhood} of $x$ if and only if $x \in \interior A$, and a
neighborhood of a subset $B$ if and only if $B \subseteq \interior A$.
A subset $Q$ of a topological space $X$ is {\em compact\/} iff one can
extract a finite subcover from every open cover of $Q$.  The important
ones are the {\em saturated\/} compacts.
$X$ is {\em locally compact\/} iff for each open $U$ and each $x \in
U$, there is a compact saturated subset $Q$ such that $x \in \interior
Q$ and $Q \subseteq U$.  In any locally compact space, we have the
following interpolation property: whenever $Q$ is a compact subset of
some open $U$, then there is a compact saturated subset $Q_1$ such
that $Q \subseteq \interior {Q_1} \subseteq Q_1 \subseteq U$.

$X$ is {\em sober\/} iff every irreducible closed subset is the
closure of a unique point; in the presence of local compactness (and
when $X$ is $T_0$),
it is equivalent to require that $X$ be {\em well-filtered\/}
\cite[Theorem~II-1.21]{GHKLMS:contlatt}, i.e., to require that, for
every open $U$, for every filtered family ${(Q_i)}_{i \in I}$ of
saturated compacts such that $\bigcap_{i \in I}^\downarrow Q_i
\subseteq U$, $Q_i \subseteq U$ for some $i \in I$ already.  We say
that the family is {\em filtered\/} iff it is directed in the
$\supseteq$ ordering, and make it explicit by using
$\downarrow$ as superscript.  (Symmetrically, we write $\bigcup^\uparrow$ for
directed unions.)

Given a topological space $X$, let $\Smyth (X)$ be the collection of
all non-empty compact saturated subsets $Q$ of $X$.  There are two
prominent topologies one can put on $\Smyth (X)$.  The {\em upper
  Vietoris\/} topology has a subbase of opens of the form $\Box U$,
$U$ open in $X$, where we write $\Box U$ for the collection of compact
saturated subsets $Q'$ included in $U$.  We shall write $\SV (X)$ for
the space $\Smyth (X)$ with the upper Vietoris topology, and call it
the \emph{Smyth powerspace}.  The specialization ordering of $\SV (X)$
is reverse inclusion $\supseteq$.  On the other hand, we shall reserve
the notation $\Smyth_\sigma (X)$ for the {\em Smyth powerdomain\/} of
$X$, which is equipped with the Scott topology of $\supseteq$ instead.
When $X$ is well-filtered, $\Smyth (X)$ is a dcpo, with least upper
bounds of directed families computed as filtered intersections, and
$\Box U$ is Scott-open for every open subset $U$ of $X$, i.e., the
Scott topology is finer than the upper Vietoris topology.  When $X$ is
locally compact and sober (in particular, well-filtered), the two
topologies coincide, and $\Smyth_\sigma (X)$ is then a continuous dcpo
(see below), where $Q \ll Q'$ iff $Q' \subseteq \interior Q$
\cite[Proposition~I-1.24.2]{GHKLMS:contlatt}.  Schalk
\cite[Chapter~7]{Schalk:PhD} provides a deep study of these spaces.

For every finite subset $E$ of a topological space $X$, $E$ is compact
and $\upc E$ is saturated compact in $X$.  We call {\em finitary
  compact\/} those subsets of the form $\upc E$ with $E$ finite, and
let $\Fin (X)$ be the subset of $\Smyth (X)$ consisting of the
non-empty finitary compacts.  $\Fin (X)$ can be topologized with the
subspace topology from $\SV (X)$, in which case we obtain a space we
write $\Fin_\V (X)$, or with the Scott topology of reverse inclusion
$\supseteq$, yielding a space that we write $\Fin_\sigma (X)$.

Given any poset $X$, any finite subset $E$ of $X$, and any element $x$
of $X$, we write $E \leq x$ iff $x \in \upc E$, i.e., iff there is a
$y \in E$ such that $y\leq x$.  Given any upward closed subset $U$ of
$X$, we shall write $U \cll x$ iff for every directed family
${(x_i)}_{i \in I}$ that has a least upper bound above $x$, then $x_i$
is in $U$ for some $i \in I$.  Then a finite set $E$ {\em
  approximates\/} $x$ iff $\upc E \cll x$.  This is usually written $E
\ll x$ in the literature.
We shall also write $y \ll x$, when $y \in X$, as shorthand for $\upc
y \cll x$.  This is the more familiar way-below relation, and a poset
is {\em continuous\/} if and only if the set $\ddarrow x$ of all
elements $y$ such that $y \ll x$ is directed and has $x$ as least
upper bound.  One should be aware that $\upc E \cll x$ means that the
elements of $E$ approximate $x$ {\em collectively\/}, while none in
particular may approximate $x$ individually.  E.g., in the poset
$\mathcal N_2$ (Figure~\ref{fig:ex1}~$(ii)$), the sets $\{(0, m), (1,
n)\}$ approximate $\omega$, for all $m, n \in \nat$; but $(0, m) \not
\ll \omega$, $(1, n) \not\ll \omega$.

It may be helpful to realize that $\Fin (X)$ can also be presented in
the following equivalent way.  Given two finitary compacts $\upc E$
and $\upc E'$, $\upc E \supseteq \upc E'$ if and only if for every $x'
\in E'$, there is an $x \in E$ such that $x \leq x'$, and then we
write $E \leq^\sharp E'$: this is the so-called {\em Smyth
  preorder\/}.  Then we can equate the finitary compacts $\upc
E$ with the equivalence classes of finite subsets $E$, up to the
equivalence $\equiv$ defined by $E \equiv E'$ iff $\upc E = \upc E'$
iff $E \leq^\sharp E'$ and $E' \leq^\sharp E$, declare that $\Fin (X)$
is the set of equivalence classes of non-empty finite sets, ordered by
$\leq^\sharp$.  But the approach based on finitary compacts is
mathematically smoother.

Among the Cartesian-closed categories of continuous dcpos, one finds
in particular the $\B$-domains (a.k.a., the bifinite domains), the
$\RB$-domains, i.e., the retracts of bifinite domains
\cite[Section~4.2.1]{AJ:domains}, and the $\FS$-domains
\cite[Section~4.2.2]{AJ:domains}\cite[Section~II.2]{GHKLMS:contlatt}.
There are several equivalent definitions of the first two.

For our purposes, an {\em $\RB$-domain\/} is a pointed dcpo $X$ with a
directed family ${(f_i)}_{i \in I}$ of deflations such that $\sup_{i
  \in I} f_i = \identity X$ \cite[Exercise~4.3.11(9)]{AJ:domains}.  A
{\em deflation\/} $f$ on $X$ is a continuous map from $X$ to $X$ such
that $f (x) \leq x$ for every $x \in X$, and that has {\em finite
  image\/}.  We order deflations, as well as all maps with codomain a
poset, pointwise: i.e., $f \leq g$ iff $f (x) \leq g (x)$ for every $x
\in X$; knowing this, directed families and least upper bounds of
deflations make sense.  Every $\RB$-domain is a continuous dcpo, and
$f_i (x) \ll x$ for every $i \in I$ and every $x \in X$.

A {\em $\B$-domain\/} is defined similarly, except the deflations
$f_i$ are now required to be {\em idempotent\/}, i.e., $f_i \circ f_i
= f_i$ \cite[Theorem~4.2.6]{AJ:domains}.  This implies that $f_i (x)
\ll f_i (x)$, i.e., that all the elements $f_i (x)$ are finite; hence
all bifinite domains are also algebraic.  Every bifinite domain is an
$\RB$-domain.  Conversely, the $\RB$-domains are exactly the retracts
of bifinite domains: we shall define what this means and extend this
in Section~\ref{sec:qretr}.

An {\em $\FS$-domain\/} is defined similarly again, except the
functions $f_i$ are no longer deflations, but continuous functions
that are {\em finitely separated from $\identity X$\/}.  That is, we
now require that there is a finite set $M_i$ such that for every $x
\in X$, there is an $m \in M_i$ such that $f_i (x) \leq m \leq x$.  We
say that $M_i$ is {\em finitely separating\/} for $f_i$ on $X$.

Every deflation is finitely separated from $\identity X$: take $M_i$
to be the image of $f_i$.  The converse fails.  E.g., for every
$\epsilon > 0$, the function $x \mapsto \max (x - \epsilon, 0)$ is
finitely separated from the identity on $[0,1]$, but is not a
deflation \cite[Section~3.2]{JT:troublesome}.  Every $\RB$-domain is
an $\FS$-domain.  The converse is not known.

A {\em quasi-continuous dcpo\/} $X$ (see \cite{GLS:quasicont} or
\cite[Definition~III-3.2]{GHKLMS:contlatt}) is a dcpo such that, for
every $x \in X$, the collection of all $\upc E \in \Fin (X)$ that
approximate $x$ ($\upc E \cll x$) is directed (w.r.t.\ $\supseteq$)
and their least upper bound in $\Smyth (X)$ is $\upc x$, i.e.,
$\bigcap_{\substack{\upc E \in \Fin (X)\\ \upc E \cll x}} \upc E =
\upc x$.  %
The theory of quasi-continuous dcpos is less well explored than that
of {\em continuous dcpos\/}, but quasi-continuous dcpos retain many of
the properties of the latter.  (Every continuous dcpo is
quasi-continuous, but not conversely.  A counterexample is given by
$\mathcal N_2$, see Figure~\ref{fig:ex1}~$(ii)$.)
Every quasi-continuous dcpo $X$ is locally compact and sober in its
Scott topology \cite[III-3.7]{GHKLMS:contlatt}.  In a quasi-continuous
dcpo $X$, for every $\upc E \in \Fin (X)$, the set $\uuarrow E$
defined as $\{x \in X \mid \upc E \cll x\}$, is open, and equals the
interior $\interior {\upc E}$ \cite[III-3.6(ii)]{GHKLMS:contlatt};
every open $U$ is the union of all the subsets $\uuarrow E$ with $\upc
E \in \Fin (X)$ contained in $U$ \cite[III-5.6]{GHKLMS:contlatt}; and
for every compact saturated subset $Q$ and every open subset $U$
containing $Q$, there is a finitary compact subset $\upc E$ of $X$
such that $Q \subseteq \uuarrow E$ and $\upc E \subseteq U$
\cite[III-5.7]{GHKLMS:contlatt}.
In particular, $Q = \bigcap_{\upc E \in \Fin (X),\: Q \subseteq \uuarrow
  E}^\downarrow \upc E$.  Another consequence is {\em
  interpolation\/}: writing $\upc E \cll \upc E'$ for $\upc E \cll y$
for every $y$ in $E'$ (equivalently, $\upc E' \subseteq \uuarrow E$),
if $\upc E \cll x$ in a quasi-continuous dcpo $X$, for some $\upc E
\in \Fin (X)$, and $x \in X$, then $\upc E \cll \upc E' \cll x$ for
some $\upc E' \in \Fin (X)$.

If $X$ is a quasi-continuous dcpo, the formula $Q = \bigcap_{\upc E
  \in \Fin (X),\: Q \subseteq \uuarrow E}^\downarrow \upc E$, valid
for every $Q \in \Smyth (X)$, shows that $Q$ is the filtered
intersection of its finitary compact neighborhoods, equivalently the
directed least upper bound of those non-empty finitary compacts $\upc
E$ ($E \in \Fin (X)$) that are way-below $Q$.  In other words, the
finitary compacts form a {\em basis\/} of $\Smyth (X)$.


\section{$\QRB$-Domains}
\label{sec:qrb}

We model $\QRB$-domains after $\RB$-domains, replacing single
approximating elements $f_i (x)$, where $f_i$ is a deflation, by
finite subsets, as in quasi-continuous dcpos.

\begin{defi}[$\QRB$-Domain]
  \label{defn:qrb}
  A {\em quasi-deflation\/} on a poset $X$ is a continuous map
  $\varphi : X \to \Fin_\sigma (X)$ such that $x \in \varphi (x)$ for
  every $x \in X$, and $\img \varphi = \{\varphi (x) \mid x \in X\}$
  is finite.

  A {\em $\QRB$-domain\/} is a pointed dcpo $X$ with a {\em generating
    family of quasi-deflations\/}, i.e., a directed family of
  quasi-deflations ${(\varphi_i)}_{i \in I}$ with $\upc x = \bigcap_{i
    \in I}^\downarrow {\varphi_i (x)}$ for each $x\in X$.
\end{defi}
We order quasi-deflations pointwise, i.e., $\varphi \leq \psi$ iff
$\varphi (x) \supseteq \psi (x)$ for every $x \in X$.  Above, we write
$\bigcap^\downarrow$ instead of $\bigcap$ to stress the fact that the
family ${({\varphi_i (x)})}_{i \in I}$ of which we are taking the
intersection is {\em filtered\/}, i.e., for any two $i, i' \in I$,
there is an $i'' \in I$ such that ${\varphi_{i''} (x)}$ is contained
in both ${\varphi_{i} (x)}$ and ${\varphi_{i'} (x)}$.  It is
equivalent to say that ${(\varphi_i (x))}_{i \in I}$ is directed in
the $\supseteq$ ordering of $\Fin (X)$.

One can see the finitary compacts $\varphi_i (x)$ as being smaller and
smaller upward closed sets containing $x$.  The intersection
$\bigcap_{i \in I}^\downarrow {\varphi_i (x)}$ is then just the least
upper bound of ${({\varphi_i (x)})}_{i \in I}$ in the Smyth
powerdomain $\Smyth (X)$.  On the other hand, $X$ embeds into $\SV
(X)$ by equating $x \in X$ with $\upc x \in \Smyth (X)$.  Modulo this
identification, the condition $\upc x = \bigcap_{i \in I}^\downarrow
{\varphi_i (x)}$ requires that $x$ is the least upper bound of ${(
  {\varphi_i (x)})}_{i \in I}$ in $\Smyth (X)$.

That $\varphi$ is continuous means that $\varphi$ is monotonic ($x
\leq y$ implies $\varphi (x) \supseteq \varphi (y)$), and that for
every directed family ${(x_j)}_{j \in J}$ of elements of $X$, $\varphi
(\sup_{j \in J} x_j)$ is equal to $\bigcap_{i \in I}^\downarrow
\varphi (x_j)$---this implies that the latter is finitary compact, in
particular.

\begin{prop}
  \label{prop:FS:QRB}
  Every $\RB$-domain 
  is a $\QRB$-domain.
\end{prop}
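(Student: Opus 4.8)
The plan is to start from the defining data of an $\RB$-domain $X$—a directed family ${(f_i)}_{i \in I}$ of deflations with $\sup_{i \in I} f_i = \identity X$—and manufacture from each $f_i$ a quasi-deflation $\varphi_i$. The obvious candidate is $\varphi_i (x) = \upc \{f_i (x), x\}$, or really $\upc \{f_i (x)\} \cap \cdots$; but since $f_i (x) \leq x$, the finitary compact $\upc \{f_i (x)\}$ already contains $x$, so we may simply set $\varphi_i (x) = \upc f_i (x)$. Then $x \in \varphi_i (x)$ holds because $f_i (x) \leq x$, and $\img \varphi_i$ is finite because $f_i$ has finite image (distinct values of $f_i$ give at most that many values of $\varphi_i$, since $\upc$ applied to a finite subset of a fixed finite set takes finitely many values—indeed $\varphi_i (x)$ is determined by $f_i (x)$).

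First I would check that $\varphi_i : X \to \Fin_\sigma (X)$ is continuous, i.e.\ Scott-continuous into $\Fin (X)$ ordered by $\supseteq$. Monotonicity: if $x \leq y$ then $f_i (x) \leq f_i (y)$ since $f_i$ is monotonic, hence $\upc f_i (y) \subseteq \upc f_i (x)$, i.e.\ $\varphi_i (x) \supseteq \varphi_i (y)$, as required for monotonicity in the $\supseteq$-ordering (recall from the text that we order so that $\varphi \leq \psi$ iff $\varphi(x) \supseteq \psi(x)$, and monotonicity of the map $\varphi_i$ means $x \leq y \implies \varphi_i(x) \supseteq \varphi_i(y)$). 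Preservation of directed sups: for a directed family ${(x_j)}_{j \in J}$ with sup $x$, we have $f_i (x) = \sup_j f_i (x_j)$ since $f_i$ is Scott-continuous; I then need $\upc \bigl(\sup_j f_i(x_j)\bigr) = \bigcap_j^\downarrow \upc f_i(x_j)$. The inclusion $\subseteq$ is monotonicity; for $\supseteq$, note that since $f_i$ has \emph{finite} image, the directed family ${(f_i(x_j))}_j$ is eventually constant, equal to its sup $f_i(x)$, so the filtered intersection $\bigcap_j^\downarrow \upc f_i(x_j)$ already equals $\upc f_i(x)$ outright. This finiteness of the image is what makes the continuity verification clean; without it one would need the specific structure of $\Fin_\sigma$.

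Next I would verify that ${(\varphi_i)}_{i \in I}$ is directed in the pointwise ordering of quasi-deflations: given $i, i'$, pick $i''$ with $f_i, f_{i'} \leq f_{i''}$ (possible since ${(f_i)}_i$ is directed); then $f_i (x) \leq f_{i''}(x)$ and $f_{i'}(x) \leq f_{i''}(x)$, so $\upc f_{i''}(x) \subseteq \upc f_i(x) \cap \upc f_{i'}(x)$, i.e.\ $\varphi_{i''}(x) \supseteq$ both $\varphi_i(x)$ and $\varphi_{i'}(x)$, which is exactly $\varphi_i, \varphi_{i'} \leq \varphi_{i''}$.

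Finally, the generating condition $\upc x = \bigcap_{i \in I}^\downarrow \varphi_i (x)$: the inclusion $\subseteq$ is immediate since $x \in \varphi_i(x)$ for all $i$ and $\varphi_i(x)$ is upward closed. For $\supseteq$, suppose $y$ lies in every $\varphi_i(x) = \upc f_i(x)$, i.e.\ $f_i(x) \leq y$ for all $i$. Then $y$ is an upper bound of the directed family ${(f_i(x))}_{i \in I}$, whose sup is $\sup_{i} f_i(x) = (\sup_i f_i)(x) = \identity X (x) = x$; hence $x \leq y$, so $y \in \upc x$. This uses precisely that $\sup_i f_i = \identity X$. The one point deserving a sentence is that $\varphi_i$ indeed lands in $\Fin (X)$, the \emph{non-empty} finitary compacts: $\varphi_i(x) = \upc f_i(x) = \upc \{f_i(x)\}$ is the upward closure of a one-element set, hence non-empty and finitary compact as noted in the Preliminaries. \textbf{I expect no serious obstacle here}: the only mildly delicate point is the Scott-continuity of $\varphi_i$ into $\Fin_\sigma(X)$, and that is rendered trivial by the finite-image hypothesis on $f_i$, which forces the relevant directed families in $\Fin(X)$ to be eventually constant.
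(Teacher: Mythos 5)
Your proposal is correct and follows exactly the paper's own construction: define $\varphi_i (x) = \upc f_i (x)$, observe directedness, and note that $\bigcap_{i \in I}^\downarrow \varphi_i (x)$ is the set of upper bounds of ${(f_i (x))}_{i \in I}$, whose least element is $x$. Your additional verifications (Scott-continuity via the finite image forcing eventual constancy, and non-emptiness of $\varphi_i(x)$) are correct details the paper leaves implicit.
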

\begin{proof}
  Given a directed family of deflations ${(f_i (x))}_{i \in I}$,
  define $\varphi_i (x)$ as $\upc f_i (x)$.  If $f_i \leq f_j$, then
  $\varphi_i (x) \supseteq \varphi_j (x)$ for every $x \in X$, so
  ${(\varphi_i)}_{i \in I}$ is directed.  Also, $\bigcap_{i \in
    I}^\downarrow {\varphi_i (x)}$ is the set of upper bounds of
  ${(f_i (x))}_{i \in I}$, of which the least is $x$.  So this set is
  exactly $\upc x$.
\end{proof}
We shall improve on this in Theorem~\ref{thm:ctrlQRB}, which implies
that not only the $\RB$-domains, but all $\FS$-domains, are
$\QRB$-domains.

For any deflation $f$, and more generally whenever $f$ is finitely
separated from the identity, $f (x)$ is way-below $x$
\cite[Lemma~II-2.16]{GHKLMS:contlatt}.  Similarly:
\begin{lem}
  \label{lemma:qdefl:ll}
  Let $X$ be a poset, and $\varphi$ be a quasi-deflation on $X$.  For
  every $x \in X$, $\varphi (x) \cll x$.
\end{lem}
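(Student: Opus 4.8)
The plan is to exploit the continuity of $\varphi$ as a map into $\Fin_\sigma(X)$ directly, testing the defining condition of $\cll$ against an arbitrary directed family. Recall that $\varphi(x)\cll x$ unfolds to: for every directed family ${(x_j)}_{j\in J}$ in $X$ whose least upper bound $x^*=\sup_{j\in J}x_j$ satisfies $x\leq x^*$, we have $x_j\in\varphi(x)$ (equivalently $\varphi(x)\leq^\sharp \upc x_j$, i.e.\ $\varphi(x)\supseteq\upc x_j$ fails only trivially; precisely $x_j\in\varphi(x)$) for some $j\in J$. So fix such a family and such an $x^*$.

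First I would use monotonicity of $\varphi$: from $x\leq x^*$ we get $\varphi(x)\supseteq\varphi(x^*)$, so it suffices to find $j$ with $x_j\in\varphi(x^*)$. Next I would use Scott-continuity of $\varphi$ into $\Fin_\sigma(X)$: since $\Fin_\sigma(X)$ carries the Scott topology of reverse inclusion, and least upper bounds of directed families there are filtered intersections, continuity gives $\varphi(x^*)=\varphi(\sup_{j\in J}x_j)=\bigcap_{j\in J}^\downarrow\varphi(x_j)$. Here it is important that the family ${(\varphi(x_j))}_{j\in J}$ is genuinely directed in $\supseteq$ — which follows from monotonicity of $\varphi$ together with directedness of ${(x_j)}_{j\in J}$ — so that the filtered intersection really is the directed sup in $\Fin_\sigma(X)$.

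The key step is then to observe that the image $\img\varphi$ is \emph{finite}. Hence ${(\varphi(x_j))}_{j\in J}$ ranges over a finite set, and being directed in $\supseteq$ it must contain a least element, attained at some index $j_0\in J$; that least element is exactly $\bigcap_{j\in J}^\downarrow\varphi(x_j)=\varphi(x^*)$. So $\varphi(x_{j_0})=\varphi(x^*)$. Finally, $x_{j_0}\in\varphi(x_{j_0})=\varphi(x^*)\subseteq\varphi(x)$, using the defining property $y\in\varphi(y)$ of a quasi-deflation and the monotonicity inclusion from the first step. This exhibits the required index $j_0$, so $\varphi(x)\cll x$.

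The only subtle point — what I'd flag as the main obstacle — is the bookkeeping around what "continuity into $\Fin_\sigma(X)$" delivers: one must be careful that $\Fin_\sigma(X)$ is equipped with the Scott topology of reverse inclusion (not the upper Vietoris topology $\Fin_\V(X)$), so that Scott-continuity is available and directed sups are filtered intersections, and one must check that the diagram family is directed in the right direction before invoking preservation of sups. Everything else is a short monotonicity-and-finiteness argument. (This mirrors the classical fact that a map finitely separated from the identity satisfies $f(x)\ll x$ \cite[Lemma~II-2.16]{GHKLMS:contlatt}, with "finite image" here playing the role that the finite separating set plays there.)
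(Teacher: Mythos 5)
Your proof is correct and follows essentially the same route as the paper's: use monotonicity and Scott-continuity of $\varphi$ to get $\bigcap_{j\in J}^\downarrow \varphi(x_j) \subseteq \varphi(x)$, then use finiteness of $\img\varphi$ to see that the filtered intersection is attained at some index $j_0$, and conclude from $x_{j_0}\in\varphi(x_{j_0})$. The extra care you take about directedness of $(\varphi(x_j))_{j\in J}$ and about which topology $\Fin_\sigma(X)$ carries is sound but is exactly the bookkeeping the paper's one-line appeal to continuity already encapsulates.
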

\begin{proof}
  Let ${(x_j)}_{j \in J}$ be a directed family having a least upper
  bound above $x$.  Since $\varphi$ is continuous, $\bigcap_{j \in
    J}^\downarrow {\varphi (x_j)} \subseteq \varphi (x)$.  But since
  $\img \varphi$ is finite, there are only finitely many sets $\varphi
  (x_j)$, $j \in J$.  So ${\varphi (x_j)} \subseteq \varphi (x)$ for
  some $j \in J$.  Since $x_j \in {\varphi (x_j)}$, $x_j \in \varphi
  (x)$.
\end{proof}

\begin{cor}
  \label{corl:qrb:qcont}
  Every $\QRB$-domain is quasi-continuous.
\end{cor}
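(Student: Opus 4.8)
The plan is to verify the two defining conditions of a quasi-continuous dcpo for an arbitrary $\QRB$-domain $X$ with generating family of quasi-deflations ${(\varphi_i)}_{i \in I}$: first, that the collection $\mathcal A_x = \{\upc E \in \Fin (X) \mid \upc E \cll x\}$ is directed with respect to $\supseteq$; second, that $\bigcap_{\upc E \in \mathcal A_x} \upc E = \upc x$. The heart of the matter is that each $\varphi_i (x)$ is itself a member of $\mathcal A_x$, by Lemma~\ref{lemma:qdefl:ll}, so that $\mathcal A_x$ contains the filtered family ${(\varphi_i (x))}_{i \in I}$ (filtered since ${(\varphi_i)}_{i \in I}$ is directed), and that filtered family already has $\upc x$ as its intersection by the very definition of a generating family.

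First I would establish the intersection equality. On one hand, $\upc x \subseteq \upc E$ whenever $\upc E \cll x$: indeed $\upc E \cll x$ applied to the (trivial) directed family $\{x\}$ with least upper bound $x$ forces $x \in \upc E$, hence $\upc x \subseteq \upc E$ since $\upc E$ is upward closed. Thus $\upc x \subseteq \bigcap_{\upc E \in \mathcal A_x} \upc E$. On the other hand, since each $\upc{\varphi_i (x)} = \varphi_i (x)$ lies in $\mathcal A_x$ by Lemma~\ref{lemma:qdefl:ll}, we get $\bigcap_{\upc E \in \mathcal A_x} \upc E \subseteq \bigcap_{i \in I}^\downarrow \varphi_i (x) = \upc x$, the last equality being the defining property of the generating family. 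This gives the required equality.

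Next, directedness of $\mathcal A_x$. Here I would argue that $\mathcal A_x$ is in fact \emph{cofinal above} the filtered family ${(\varphi_i (x))}_{i \in I}$ in the following sense: given any $\upc E \in \mathcal A_x$, I must find $\upc E' \in \mathcal A_x$ with $\upc E' \subseteq \upc E$; taking $\upc E' = \varphi_i (x)$ for \emph{any} $i$ will not immediately work since we do not yet know $\varphi_i (x) \subseteq \upc E$. This is the step I expect to be the main obstacle, and it is where I would invoke the equality just proved together with compactness. Since $\upc E \cll x$ and $\bigcap_{i \in I}^\downarrow \varphi_i (x) = \upc x \subseteq \upc E$ with $\upc E$ (Scott-)open on the compact $\varphi_{i_0}(x)$ for a fixed $i_0$—more precisely, the family ${(\varphi_i(x))}_{i \in I}$ is a filtered family of compact saturated sets whose intersection is contained in the open set $\interior{\upc E} = \uuarrow E$ (using $x \in \uuarrow E$ and that $\uuarrow E$ is the interior of $\upc E$)—well-filteredness of $X$ (every quasi-continuous, indeed every $\QRB$-domain, is sober hence well-filtered; but more simply here $\Smyth(X)$ is a dcpo) yields some $i$ with $\varphi_i (x) \subseteq \uuarrow E \subseteq \upc E$. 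Then for any two $\upc E_1, \upc E_2 \in \mathcal A_x$ we pick $i_1, i_2$ with $\varphi_{i_1}(x) \subseteq \upc E_1$, $\varphi_{i_2}(x) \subseteq \upc E_2$, then $i$ above both, so $\varphi_i (x) \in \mathcal A_x$ is contained in both $\upc E_1$ and $\upc E_2$. Hence $\mathcal A_x$ is directed, and combined with the intersection equality, $X$ is quasi-continuous. A small point to double-check is that $\mathcal A_x$ is nonempty, which is immediate since $I \neq \emptyset$ and each $\varphi_i(x) \in \mathcal A_x$.
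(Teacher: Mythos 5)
Your skeleton is the right one, and you have correctly put your finger on the one point that the paper leaves entirely implicit (it states the corollary with no proof after Lemma~\ref{lemma:qdefl:ll}): the definition of quasi-continuity requires the \emph{whole} collection $\mathcal A_x$ of approximating finitary compacts to be directed, not merely that it contain the filtered family ${(\varphi_i(x))}_{i\in I}$ with intersection $\upc x$. Your treatment of the intersection equality is correct, and so is the reduction of directedness to the claim that for every $\upc E \cll x$ some $\varphi_i(x)$ is contained in $\upc E$.

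The gap is in how you establish that claim: the argument is circular. You pass from $\upc E$ to $\uuarrow E$, asserting that $\uuarrow E = \interior{\upc E}$ is open, and you invoke well-filteredness (``every quasi-continuous, indeed every $\QRB$-domain, is sober hence well-filtered''). But both facts are, in this paper, \emph{consequences} of quasi-continuity: the openness of $\uuarrow E$ and the identity $\uuarrow E = \interior{\upc E}$ are cited only for quasi-continuous dcpos (III-3.6(ii)), and sobriety of a $\QRB$-domain is derived from the very corollary you are proving. In an arbitrary dcpo $\uuarrow E$ is upward closed but need not be Scott-open, so Proposition~\ref{prop:heckmann} (which is the legitimate, assumption-free tool here, but requires an \emph{open} target) cannot be applied to it; nor can it be applied to $\upc E$ itself. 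The step can be repaired by returning to Rudin's Lemma directly: write $\varphi_i(x) = \upc E_i$ and suppose no $\upc E_i$ is contained in $\upc E$; then the sets $G_i = E_i \setminus \upc E$ are nonempty and finite, and ${(\upc G_i)}_{i\in I}$ is still filtered (if $z \in E_i$ lies below some $y \in E_{i''} \setminus \upc E$, then $z \notin \upc E$). Rudin's Lemma yields a directed set $Z$ contained in the complement of $\upc E$ and meeting every $G_i$; its supremum lies in $\bigcap_{i}^{\downarrow} \upc E_i = \upc x$, so $\upc E \cll x$ forces some element of $Z$ into $\upc E$, a contradiction. As written, however, your directedness step assumes what is to be proved.
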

In general, $\QRB$-domains are not continuous.  E.g., $\mathcal N_2$
(Figure~\ref{fig:ex1}~$(ii)$) is not continuous.  However, $\mathcal
N_2$ is a $\QRB$-domain: for all $i,j \in \nat$, take $\varphi_{ij}
(\omega) = \upc \{(0, i), (1, j)\}$, $\varphi_{ij} (0,m) = \upc \{(0,
\min (m,i)), (1, j)\}$, $\varphi_{ij} (1,m) = \upc \{(0,i), (1, \min
(m,j))\}$.  Then
${(\varphi_{ij})}_{i, j \in \nat}$ is the desired directed family of
quasi-deflations.

$\RB$-domains, and more generally $\FS$-domains, are not just
continuous domains, they are {\em stably compact\/}, i.e., locally
compact, sober, compact and coherent (see, e.g.,
\cite[Theorem~4.2.18]{AJ:domains}).  We say that a topological space
is {\em coherent\/} iff the intersection of any two compact saturated
subsets is compact (and saturated).  In a stably compact space, the
intersection of any family of compact saturated subsets is compact.
We show that $\QRB$-domains are stably compact as well.

Since every quasi-continuous dcpo is locally compact and sober
\cite[Proposition~III-3.7]{GHKLMS:contlatt}, and also compact since
pointed, only coherence remains to be shown.  For this, we need the
following consequence of Rudin's Lemma, a finitary form of
well-filteredness:
\begin{prop}[\protect{\cite[Corollary~III-3.4]{GHKLMS:contlatt}}]
  \label{prop:heckmann}
  Let $X$ be a dcpo, ${(\upc E_i)}_{i \in I}$ be a directed family in
  $\Fin (X)$.  For every open subset $U$ of $X$, if $\bigcap_{i \in
    I}^\downarrow \upc E_i \subseteq U$, then $\upc E_i \subseteq U$
  for some $i \in I$.
\end{prop}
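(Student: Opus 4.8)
The plan is to deduce this from Rudin's Lemma (\cite[Lemma~III-3.3]{GHKLMS:contlatt}), which asserts that any family of non-empty finite subsets of a poset that is directed in the Smyth preorder $\leq^\sharp$ admits a directed subset $D$ of its union meeting every member of the family. I argue by contradiction: assume $\upc E_i \not\subseteq U$ for every $i \in I$. Since $U$ is Scott-open, hence upward closed, $\upc E_i \subseteq U$ is equivalent to $E_i \subseteq U$, so the assumption says precisely that $F_i := E_i \setminus U$ is non-empty for each $i$.

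Next I would check that $(F_i)_{i \in I}$ is directed in the Smyth preorder. Given $i, i' \in I$, directedness of $(\upc E_i)_{i \in I}$ in $\Fin (X)$ (w.r.t.\ $\supseteq$) gives $i'' \in I$ with $\upc E_{i''} \subseteq \upc E_i \cap \upc E_{i'}$, i.e.\ $E_i \leq^\sharp E_{i''}$ and $E_{i'} \leq^\sharp E_{i''}$. If $x \in F_{i''}$, then some $y \in E_i$ satisfies $y \leq x$; as $x \notin U$ and $U$ is upward closed, $y \notin U$, so $y \in F_i$. Hence $F_i \leq^\sharp F_{i''}$, and symmetrically $F_{i'} \leq^\sharp F_{i''}$. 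Rudin's Lemma then yields a directed $D \subseteq \bigcup_{i \in I} F_i \subseteq X \setminus U$ meeting each $F_i$. Since $X$ is a dcpo, $d := \sup D$ exists, and for every $i$ we may pick $e \in D \cap F_i \subseteq E_i$, so $d \geq e$ gives $d \in \upc E_i$; thus $d \in \bigcap_{i \in I}^\downarrow \upc E_i \subseteq U$. But $U$ is Scott-open and $d$ is the supremum of the directed set $D$, so some element of $D$ lies in $U$, contradicting $D \subseteq X \setminus U$.

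The only delicate point is the correct invocation of Rudin's Lemma; everything else is routine bookkeeping with the Smyth preorder and the definition of Scott-openness. In fact the statement is exactly \cite[Corollary~III-3.4]{GHKLMS:contlatt}, so in the paper one may equally well simply cite that reference; the derivation above serves only to keep the argument self-contained.
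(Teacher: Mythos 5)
Your proof is correct: the reduction to Rudin's Lemma via the sets $F_i = E_i \setminus U$, the verification that they remain Smyth-directed, and the contradiction obtained from Scott-openness of $U$ at $\sup D$ are all sound. The paper itself gives no proof and simply cites Corollary~III-3.4 of the Compendium; your argument is exactly the standard derivation of that corollary from Rudin's Lemma, so there is nothing to add.
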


It follows that, if $X$ is a dcpo, then the Scott topology on $\Fin
(X)$ is finer than the upper Vietoris topology.  Indeed, this reduces
to showing that $\Fin (X) \cap \Box U$ is Scott-open in $\Fin (X)$,
for every open subset $U$ of $X$.  And this is
Proposition~\ref{prop:heckmann}, plus the easily checked fact that
$\Box U$ is upward closed in $\supseteq$.
\begin{cor} 
  \label{corl:BoxU}
  Let $X$ be a dcpo.  The Scott topology is finer than the upper
  Vietoris topology on $\Fin (X)$, and coincides with it whenever $X$
  is quasi-continuous.
\end{cor}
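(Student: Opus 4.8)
The first assertion is the content of the paragraph just above the statement: by Proposition~\ref{prop:heckmann}, $\Fin (X) \cap \Box U$ is Scott-open in $\Fin_\sigma (X)$ for every open $U \subseteq X$, and $\Box U$ is visibly upward closed for $\supseteq$; hence every subbasic upper Vietoris open, and so every upper Vietoris open, is Scott-open. So the plan reduces to proving the reverse inclusion of topologies when $X$ is quasi-continuous, i.e.\ that every Scott-open subset $\mathcal U$ of $\Fin_\sigma (X)$ is open in the upper Vietoris topology.

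The plan is to show $\mathcal U$ is an upper Vietoris neighborhood of each of its points. Fix $\upc E_0 \in \mathcal U$. The key input is the basis property recalled before the statement: in a quasi-continuous dcpo, $\upc E_0 = \bigcap_{\upc E_0 \subseteq \uuarrow E}^\downarrow \upc E$, and the family $(\upc E)_{\upc E_0 \subseteq \uuarrow E}$ of finitary compact neighborhoods of $\upc E_0$ is directed for $\supseteq$. Since its intersection is $\upc E_0$, which lies in $\Fin (X)$, this is a directed family in $\Fin_\sigma (X)$ whose least upper bound is $\upc E_0$. As $\mathcal U$ is Scott-open and contains $\upc E_0$, some member $\upc E_1$ of this family already lies in $\mathcal U$; note $\upc E_0 \subseteq \uuarrow{E_1}$, equivalently $\upc E_1 \cll \upc E_0$.

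Now $\uuarrow{E_1} = \interior{\upc E_1}$ is open in $X$, so $\mathcal V = \Box (\uuarrow{E_1}) \cap \Fin (X)$ is an upper Vietoris open subset of $\Fin_\sigma (X)$ containing $\upc E_0$. To finish, I would check $\mathcal V \subseteq \mathcal U$: if $\upc E' \in \mathcal V$, then $\upc E' \subseteq \uuarrow{E_1} \subseteq \upc E_1$, i.e.\ $\upc E_1 \supseteq \upc E'$, which is exactly $\upc E_1 \leq \upc E'$ in the specialization ordering $\supseteq$ of $\Fin_\sigma (X)$; since a Scott-open set is upward closed and $\upc E_1 \in \mathcal U$, we get $\upc E' \in \mathcal U$. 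Hence $\mathcal U$ is upper Vietoris open, and the two topologies coincide.

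I do not anticipate any serious obstacle. The only points demanding care are reading off the directedness of the family of finitary compact neighborhoods of $\upc E_0$ and the fact that its least upper bound in $\Fin_\sigma (X)$ is $\upc E_0$ itself directly from the quoted basis property, and keeping the variance straight: smaller finitary compacts for $\subseteq$ are larger elements of $\Fin_\sigma (X)$, so ``upward closed for the Scott topology'' translates into closure under passing to smaller sets.
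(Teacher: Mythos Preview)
Your proof is correct and follows essentially the same route as the paper's. The only difference is in packaging: the paper reconstructs the approximating directed family of finitary compacts pointwise, writing $E_0 = \{x_1,\ldots,x_n\}$ and taking unions $\upc E_1 \cup \cdots \cup \upc E_n$ with $\upc E_k \cll x_k$, whereas you invoke directly the basis property $\upc E_0 = \bigcap_{\upc E_0 \subseteq \uuarrow E}^\downarrow \upc E$ stated in the preliminaries. After that, both arguments pick a member $\upc E_1$ of the family lying in $\mathcal U$ by Scott-openness and take $\Box(\uuarrow E_1)$ (respectively $\Box(\uuarrow E_1 \cup \cdots \cup \uuarrow E_n)$) as the Vietoris neighborhood, finishing with the same upward-closure step.
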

\begin{proof}
  It remains to show that, if $X$ is a quasi-continuous dcpo, then
  every Scott-open $\mathcal U$ of $\Fin (X)$ is open in the upper
  Vietoris topology.  Let $\upc E \in \Fin (X)$ be in $\mathcal U$.
  It suffices to show that there is an open subset $U$ of $X$ such
  that $\upc E \in \Box U \subseteq \mathcal U$.  Write $E = \{x_1,
  \ldots, x_n\}$.  For each $i$, $1\leq i\leq n$, $\upc x_i$ is the
  filtered intersection of all finitary compacts $\upc E_i \cll x_i$.
  The unions $\upc E_1 \cup \ldots \cup \upc E_n = \upc (E_1 \cup
  \ldots \cup E_n)$, with $\upc E_1 \cll x_1$, \ldots, $\upc E_n \cll
  x_n$, also form a directed family in $\Fin (X)$, and their
  intersection is $\upc E$.  So there are finitary compacts $\upc E_1
  \cll x_1$, \ldots, $\upc E_n \cll x_n$ whose union is in $\mathcal
  U$.  Since $\upc E_i \cll x_i$ for each $i$, each $x_i$ is in the
  Scott-open $\uuarrow E_i$, so $\upc E \in \Box U$ with $U = \uuarrow
  E_1 \cup \ldots \cup \uuarrow E_n$.  Moreover, $\Box U \subseteq
  \mathcal U$: for each $\upc E' \in \Box U$, $\upc E'$ is included in
  $U \subseteq \upc E_1 \cup \ldots \upc E_n$; since $\upc E_1 \cup
  \ldots \upc E_n$ is in $\mathcal U$ and $\mathcal U$ is
  upward-closed in $\supseteq$, $\upc E'$ is in $\mathcal U$.  
\end{proof}

Schalk \cite[Chapter~7]{Schalk:PhD} proved that $\SV$ defines a monad
on the category of topology spaces (see \cite{Mog91} for an
introduction to monads and their importance in programming language
semantics).  This means first that there is a {\em unit map\/}
$\eta_X$---here, $\eta_X$ maps $x \in X$ to $\upc x \in \SV (X)$, and
this is continuous because $\eta_X^{-1} (\Box U) = U$.  That $\SV$ is
a monad also means that every continuous map $h : X \to \SV (Y)$ has
an {\em extension\/} $h^\dagger : \SV (X) \to \SV (Y)$, i.e.,
$h^\dagger$ is continuous and $h^\dagger \circ \eta_X = h$.  This is
defined by $h^\dagger (Q) = \bigcup_{x \in Q} h (x)$ in our case.
Again, $h^\dagger$ is continuous, because ${h^\dagger}^{-1} (\Box U) =
\Box h^{-1} (\Box U)$.  And the {\em monad laws\/} are satisfied:
$\eta_X^\dagger = \identity {\SV (X)}$, $h^\dagger \circ \eta_X = h$,
and $(g^\dagger \circ h)^\dagger = g^\dagger \circ h^\dagger$.  One
should be careful here: $\SV$ is a monad, but $\Smyth_\sigma$ is not a
monad, except on specific subcategories, e.g., sober locally compact
spaces $X$, where $\Smyth_\sigma (X) = \SV (X)$ anyway.

The continuity claims in the following lemma are then obvious.
\begin{lem}
  \label{lemma:dagger:new}
  Let $X$, $Y$ be topological spaces.  Given any continuous map $\psi
  : X \to \Fin_\V (Y)$, its extension $\psi^\dagger$ restricts to a
  continuous map $\psi^\dagger : \Fin_\V (X) \to \Fin_\V (Y)$.  If
  $\img \psi$ is finite, then $\psi^\dagger$ maps $\SV (X)$
  continuously into $\Fin_\V (Y)$.
\end{lem}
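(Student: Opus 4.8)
The plan is to reduce everything to the fact, recalled just above, that $\SV$ is a monad on the category of topological spaces: composing $\psi$ with the topological embedding $\Fin_\V (Y) \hookrightarrow \SV (Y)$ yields a continuous map $X \to \SV (Y)$, whose extension $\psi^\dagger : \SV (X) \to \SV (Y)$, given by $\psi^\dagger (Q) = \bigcup_{x \in Q} \psi (x)$, is therefore already known to be continuous (this is Schalk's result, together with the formula ${\psi^\dagger}^{-1} (\Box U) = \Box \psi^{-1} (\Box U)$ quoted above). Since $\Fin_\V (X)$ and $\Fin_\V (Y)$ carry, by definition, the subspace topologies induced from $\SV (X)$ and $\SV (Y)$, a continuous map between the latter two spaces restricts and corestricts to a continuous map between any two subspaces that it maps into one another. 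Hence, in both claims, the \emph{only} thing left to check is that $\psi^\dagger$ sends the relevant subset of $\SV (X)$ into $\Fin (Y)$; its continuity as a map into $\Fin_\V (Y)$ then comes for free.

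For the first claim, take $\upc E \in \Fin (X)$ with $E$ a finite non-empty set. Being continuous, $\psi$ is monotonic for the specialization preorders, and that of $\Fin_\V (Y)$ is reverse inclusion $\supseteq$; so $x \leq x'$ implies $\psi (x') \subseteq \psi (x)$. Thus for every $x \in \upc E$ there is some $y \in E$ with $y \leq x$, whence $\psi (x) \subseteq \psi (y)$, and the union collapses: $\psi^\dagger (\upc E) = \bigcup_{x \in \upc E} \psi (x) = \bigcup_{y \in E} \psi (y)$. Writing each $\psi (y)$ as $\upc F_y$ with $F_y$ finite, this equals $\upc \bigl( \bigcup_{y \in E} F_y \bigr)$, a finitary compact; and it is non-empty since $E \neq \emptyset$ and each $\psi (y)$ is a non-empty finitary compact. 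So $\psi^\dagger (\upc E) \in \Fin (Y)$, and $\psi^\dagger : \Fin_\V (X) \to \Fin_\V (Y)$ is continuous by the subspace observation above.

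For the second claim, assume $\img \psi$ is finite. For any $Q \in \Smyth (X)$, the set $\psi^\dagger (Q) = \bigcup_{x \in Q} \psi (x)$ is a union of members of the finite family $\img \psi \subseteq \Fin (Y)$, hence a finite union of finitary compacts, hence itself a finitary compact; it is non-empty because $Q \neq \emptyset$ and each $\psi (x)$ is non-empty. Therefore $\psi^\dagger$ maps $\Smyth (X)$ into $\Fin (Y)$, and the corestriction $\psi^\dagger : \SV (X) \to \Fin_\V (Y)$ is continuous, again by the subspace observation.

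No step here is a genuine obstacle; the one point worth getting right is precisely that observation, namely that continuity on the smaller spaces need not be re-derived but is inherited from the already-known continuity of $\psi^\dagger : \SV (X) \to \SV (Y)$, so that all that remains is the bookkeeping with unions — collapsing $\bigcup_{x \in \upc E}$ to a finite union over $E$ via monotonicity of $\psi$ in the first part, and invoking finiteness of $\img \psi$ in the second.
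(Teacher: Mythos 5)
Your proof is correct and follows essentially the same route as the paper's: continuity is inherited from the already-known continuity of $\psi^\dagger : \SV (X) \to \SV (Y)$ via the subspace topologies, and the only substantive checks are that $\psi^\dagger (\upc E) = \bigcup_{x \in E} \psi (x)$ by monotonicity in the first case, and that finiteness of $\img \psi$ forces $\psi^\dagger (Q)$ to be a finite union of finitary compacts in the second. Nothing to add.
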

\begin{proof}
  In each case, one only needs to show that $\psi^\dagger$ maps
  relevant compacts to finitary compacts.  In the first case, for
  every finitary compact $\upc E \in \Fin (X)$, $\psi^\dagger (\upc E)
  = \bigcup_{x \in \upc E} \psi (x) = \bigcup_{x \in E} \psi (x)$
  (because $\psi$ is monotonic), and this is finitary compact.  In the
  second case, $\psi^\dagger (Q) = \bigcup_{x \in Q} \psi (x)$ is a
  finite union of finitary compacts since $\img \psi$ is finite.
\end{proof}
One would also like $\psi^\dagger$ to be continuous from
$\Smyth_\sigma (X)$ to $\Fin_\sigma (Y)$, in the face of the
importance of the Scott topology.  This is a consequence of the above
when $X$ is sober and locally compact, and $Y$ is a quasi-continuous
dcpo, since $\Smyth_\sigma (X) = \SV (X)$ and $\Fin_\sigma (Y) =
\Fin_\V (Y)$ in this case.  However, one can also prove this in a more
general setting, using the following observation.  For each
topological space $Z$, write $Z_\sigma$ for $Z$ with the Scott
topology of its specialization preorder.  For short, we shall
call {\em quasi monotone convergence space\/} any space $Z$ such that
the (Scott) topology on $Z_\sigma$ is finer than that of $Z$, i.e.,
such that every open subset of $Z$ is open is Scott-open.  This is a
slight relaxation of the notion of {\em monotone convergence space\/},
i.e., of a quasi monotone convergence space that is a dcpo in its
specialization preorder
\cite[Definition~II-3.12]{GHKLMS:contlatt}.  E.g., every sober space
is a monotone convergence space, and in particular a quasi monotone
convergence space.
\begin{lem}
  \label{lemma:fin:cont}
  Let $Z$ be a quasi monotone convergence space and $Z'$ be a
  topological space.  Every continuous map $f : Z \to Z'$ is
  Scott-continuous, i.e., continuous from $Z_\sigma$ to $Z'_\sigma$.
\end{lem}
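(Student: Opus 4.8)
The plan is to unwind the definitions and use the two hypotheses — that $Z$ is a quasi monotone convergence space, and that $f$ is continuous — against each other at the level of preimages of opens. First I would recall exactly what has to be shown: a map $f : Z_\sigma \to Z'_\sigma$ is Scott-continuous iff (i) it is monotone for the specialization preorders, and (ii) it preserves all existing directed suprema that happen to live in $Z_\sigma$; equivalently, iff $f^{-1}(V)$ is Scott-open in $Z_\sigma$ for every Scott-open $V$ of $Z'_\sigma$. I will work with the preimage formulation, since it dovetails cleanly with the ``quasi monotone convergence'' hypothesis.

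The key observation is a compatibility of specialization preorders: since $f : Z \to Z'$ is continuous, $f$ is monotone for the specialization preorders of $Z$ and $Z'$, and these are exactly the specialization preorders of $Z_\sigma$ and $Z'_\sigma$ (passing to the Scott topology does not change the specialization preorder). So step one is: $f$ is monotone from $Z_\sigma$ to $Z'_\sigma$. For step two, let $V'$ be a Scott-open subset of $Z'_\sigma$, i.e., a Scott-open subset of $Z'$ in the specialization preorder. I would first note that $V'$ is then an open subset of $Z'$ itself: this is the dual half of the ``quasi monotone convergence'' phenomenon on the target side — but wait, that is \emph{not} assumed of $Z'$. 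Here is the correct route: I do not need $V'$ to be open in $Z'$; instead I use that $f$ being continuous \emph{and} monotone already forces $f^{-1}(V')$ to be upward closed in the specialization preorder of $Z$ (because $V'$ is), and then I invoke the hypothesis on the source. Concretely, let $U := f^{-1}(V')$. It is upward closed in $Z$'s specialization preorder by monotonicity of $f$. To see it is Scott-open in $Z_\sigma$, take a directed family ${(x_i)}_{i\in I}$ in $Z$ with a supremum $x = \sup_i x_i$ lying in $U$; I must find $i$ with $x_i \in U$, i.e.\ $f(x_i) \in V'$. Apply $f$: monotonicity gives that ${(f(x_i))}_{i\in I}$ is directed in $Z'$ with $f(x) \geq f(x_i)$ for all $i$ — but I need $f(x) = \sup_i f(x_i)$, which is where continuity of $f$ on the original topologies enters.

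So the crux is: a continuous map $f : Z \to Z'$ preserves any directed sup that exists in $Z$ (as a topological space with the Scott topology of its specialization order), provided $Z$ is a quasi monotone convergence space. The argument: let $x = \sup_i^\uparrow x_i$ in $Z_\sigma$, and suppose $y \in Z'$ is an upper bound of all $f(x_i)$; I must show $f(x) \leq y$, i.e.\ every open $V'$ of $Z'$ containing $f(x)$ contains $y$. Given such $V'$, the set $f^{-1}(V')$ is open in $Z$, hence — by the quasi monotone convergence hypothesis — Scott-open in $Z_\sigma$; it contains $x = \sup_i x_i$, so it contains some $x_i$, whence $f(x_i) \in V'$; since $V'$ is upward closed and $y \geq f(x_i)$, we get $y \in V'$. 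As $V'$ was arbitrary, $f(x) \leq y$, so $f(x) = \sup_i f(x_i)$. Feeding this back into step two: with $x = \sup_i x_i \in U = f^{-1}(V')$ we have $f(x) \in V'$, and since $V'$ is Scott-open in $Z'$ and $f(x) = \sup_i^\uparrow f(x_i)$, some $f(x_i) \in V'$, i.e.\ $x_i \in U$. Hence $U$ is Scott-open in $Z_\sigma$, which is what we wanted.

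The main obstacle is purely conceptual bookkeeping: keeping straight which of $Z$, $Z'$ carries which topology at each moment, and resisting the temptation to assume $Z'$ is also a (quasi) monotone convergence space — it is not needed, because the Scott-openness we must verify is on the \emph{source}, and the source is where the hypothesis lives. The real work is the lemma-within-the-proof that continuity plus the source hypothesis yields preservation of directed suprema, and that in turn is a two-line preimage chase once set up correctly. No interpolation, Rudin's Lemma, or powerdomain machinery is required here; this is a soft topological fact, and I would present it in essentially the three moves above: monotonicity, preservation of directed sups, then Scott-openness of preimages.
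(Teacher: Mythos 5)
Your proposal is correct and its core argument---that $f^{-1}(V')$ is open in $Z$, hence Scott-open by the quasi monotone convergence hypothesis, so that some $x_i$ already lies in it, which shows $f$ preserves existing directed suprema---is exactly the paper's proof. The extra step you add, deducing Scott-openness of preimages of Scott-opens from monotonicity plus preservation of directed suprema, is just the standard equivalence the paper invokes implicitly from its preliminaries.
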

\begin{proof}
  Since $f$ is continuous, it is monotonic with respect to the
  underlying specialization preorders.  Let ${(z_i)}_{i \in I}$ be any
  directed family of elements of $Z$, with least upper bound $z$.
  Certainly $f (z)$ is an upper bound of ${(f (z_i))}_{i \in I}$.  Let
  us show that, for any other upper bound $z'$, $f (z) \leq z'$.  It is
  enough to show that every open neighborhood $V$ of $f (z)$ contains
  $z'$.  Since $f (z) \in V$, $z$ is in the open subset $f^{-1} (V)$,
  which is Scott-open by assumption, so $z_i \in f^{-1} (V)$ for some
  $i \in I$.  It follows that $f (z_i)$ is in $V$, hence also $z'$
  since $V$ is upward closed.
\end{proof}

When $X$ is sober and locally compact, the topology of $\Smyth_\sigma
(X)$ coincides with that of $\SV (X)$.  In particular, $Z = \SV (X)$
is a quasi-monotone convergence space.  Taking $Z' = \SV (Y)$ in
Lemma~\ref{lemma:fin:cont}, one obtains the following corollary.
\begin{cor}
  \label{corl:fin:cont:Smyth}
  Let $X$ be a sober, locally compact space, and $Y$ be a topological
  space.  Every continuous map from $\SV (X)$ to $\SV (Y)$ is also
  Scott-continuous from $\Smyth (X)$ to $\Smyth (Y)$.
\end{cor}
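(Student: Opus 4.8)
The plan is simply to unwind the remark preceding the statement into one application of Lemma~\ref{lemma:fin:cont}. First I would record that, because $X$ is sober and locally compact, the upper Vietoris topology and the Scott topology of $\supseteq$ coincide on $\Smyth (X)$ (as recalled in Section~\ref{sec:prelim}), so $\SV (X) = \Smyth_\sigma (X)$. Set $Z := \SV (X)$. Its specialization preorder is $\supseteq$, and its topology is precisely the Scott topology of $\supseteq$; hence $Z = Z_\sigma$, and in particular $Z$ is (trivially) a quasi monotone convergence space — indeed a genuine monotone convergence space, since $\Smyth (X)$ is a dcpo whenever $X$ is well-filtered, and sober spaces are well-filtered.

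Next I would set $Z' := \SV (Y)$. Its specialization preorder is again $\supseteq$, so $Z'_\sigma$ is $\Smyth (Y)$ equipped with the Scott topology of $\supseteq$. The key observation is that $Z'_\sigma$ is determined purely by the specialization preorder of $\SV (Y)$, so no local-compactness or sobriety hypothesis on $Y$ is required; this is why the corollary can afford to let $Y$ be an arbitrary topological space while only constraining $X$.

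With these identifications in place, Lemma~\ref{lemma:fin:cont} applied to $Z$ and $Z'$ states that every continuous $f : \SV (X) \to \SV (Y)$ is continuous from $Z_\sigma$ to $Z'_\sigma$, i.e., from $\Smyth (X)$ to $\Smyth (Y)$, both carrying the Scott topology of $\supseteq$ — which is exactly the assertion. There is no real obstacle here; the only points demanding care are that ``$\Smyth (X)$'' and ``$\Smyth (Y)$'' in the statement are implicitly equipped with the Scott topology, that the passage $Z = Z_\sigma$ uses the sober-and-locally-compact hypothesis on $X$, and that the argument invokes nothing about $Y$ beyond its specialization preorder.
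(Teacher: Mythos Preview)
Your proposal is correct and follows essentially the same approach as the paper: set $Z = \SV(X)$, use the sober and locally compact hypothesis on $X$ to get $Z = Z_\sigma$ (hence $Z$ is a quasi monotone convergence space), take $Z' = \SV(Y)$, and apply Lemma~\ref{lemma:fin:cont}. The extra remarks you make about $\Smyth(X)$ being a dcpo and about no hypothesis on $Y$ being needed are accurate elaborations, but the argument is the same one the paper sketches in the sentence preceding the corollary.
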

Similarly, with $Z' = \Fin_\V (Y)$:
\begin{cor}
  \label{corl:fin:cont:Fin}
  Let $Y$ be a topological space, $Z$ be a quasi monotone convergence
  space.  Every continuous map from $Z$ to $\Fin_\V (Y)$ is
  Scott-continuous, i.e., continuous from $Z_\sigma$ to $\Fin_\sigma
  (Y)$.
\end{cor}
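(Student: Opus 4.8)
The plan is to reduce the statement entirely to Lemma~\ref{lemma:fin:cont}, which already does the real work: a continuous map out of a quasi monotone convergence space is automatically Scott-continuous in the underlying specialization preorders. First I would apply that lemma with the target space taken to be $Z' = \Fin_\V (Y)$. Since $Z$ is, by hypothesis, a quasi monotone convergence space, the lemma immediately yields that every continuous $f : Z \to \Fin_\V (Y)$ is continuous from $Z_\sigma$ to $(\Fin_\V (Y))_\sigma$, i.e.\ from $Z$ with the Scott topology of its specialization preorder to $\Fin (Y)$ equipped with the Scott topology of \emph{its} specialization preorder.

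The only thing that then remains is a bookkeeping check: that $(\Fin_\V (Y))_\sigma$ is literally the space $\Fin_\sigma (Y)$ as defined in Section~\ref{sec:prelim}. For this I would recall that $\Fin_\V (Y)$ carries the subspace topology inherited from $\SV (Y)$, that passing to a subspace does not change the specialization preorder, and that the specialization ordering of $\SV (Y)$ is reverse inclusion $\supseteq$. Hence the specialization preorder of $\Fin_\V (Y)$ is $\supseteq$, so $(\Fin_\V (Y))_\sigma$ is precisely $\Fin (Y)$ with the Scott topology of $\supseteq$, which is exactly $\Fin_\sigma (Y)$ by definition. Combining this identification with the output of Lemma~\ref{lemma:fin:cont} gives the claim.

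I do not expect any genuine obstacle here; the mathematical content is packaged in Lemma~\ref{lemma:fin:cont}, and everything else is unwinding definitions. The single point that warrants a sentence of care is to make explicit that the ``$Z'_\sigma$'' appearing in Lemma~\ref{lemma:fin:cont} is computed with respect to the specialization preorder of $\Fin_\V (Y)$ itself, and not with some other order (such as an order on $Y$); once one observes that this preorder is $\supseteq$, it matches the definition of $\Fin_\sigma (Y)$ verbatim, and the proof is complete.
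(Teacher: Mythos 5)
Your proposal is correct and is exactly the paper's argument: the corollary is stated in the paper with no proof beyond the remark ``Similarly, with $Z' = \Fin_\V (Y)$,'' i.e.\ it is obtained by instantiating Lemma~\ref{lemma:fin:cont} precisely as you do. Your additional bookkeeping step --- that the specialization preorder of $\Fin_\V (Y)$ is reverse inclusion $\supseteq$, so that $(\Fin_\V (Y))_\sigma$ is literally $\Fin_\sigma (Y)$ --- is the right (and only) detail to check, and you check it correctly.
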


\begin{lem}
  \label{lemma:psi*}
  Let $X$ be a $\QRB$-domain, and ${(\varphi_i)}_{i \in I}$ a
  generating family of quasi-deflations.  For every open subset $U$ of
  $X$, $\bigcup_{i \in I}^\uparrow \varphi_i^{-1} (\Box U) = U$.
\end{lem}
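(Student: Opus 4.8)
The plan is to prove the equality $\bigcup_{i \in I}^\uparrow \varphi_i^{-1}(\Box U) = U$ by two inclusions, exploiting that the $\varphi_i$ form a generating family of quasi-deflations and that each $\varphi_i$ is continuous into $\Fin_\sigma(X)$.

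First I would check that the family $\bigl(\varphi_i^{-1}(\Box U)\bigr)_{i \in I}$ is directed, which justifies the $\bigcup^\uparrow$ notation: since $(\varphi_i)_{i \in I}$ is directed in the pointwise order (meaning $\varphi_i \leq \varphi_j$ iff $\varphi_i(x) \supseteq \varphi_j(x)$ for all $x$), and $\Box U$ is upward closed in the $\supseteq$ order on $\Fin(X)$, we get $\varphi_i \leq \varphi_j$ implies $\varphi_i^{-1}(\Box U) \subseteq \varphi_j^{-1}(\Box U)$; hence the family is directed, and its union is in particular an open set (each $\varphi_i^{-1}(\Box U)$ is open because $\varphi_i$ is continuous from $X$ to $\Fin_\sigma(X)$ and $\Box U$ is Scott-open on $\Fin(X)$ by Corollary~\ref{corl:BoxU}). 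For the inclusion $\bigcup_{i \in I}^\uparrow \varphi_i^{-1}(\Box U) \subseteq U$: if $x \in \varphi_i^{-1}(\Box U)$ for some $i$, then $\varphi_i(x) \subseteq U$; but $x \in \varphi_i(x)$ since $\varphi_i$ is a quasi-deflation, so $x \in U$.

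For the reverse inclusion $U \subseteq \bigcup_{i \in I}^\uparrow \varphi_i^{-1}(\Box U)$: let $x \in U$. We have $\upc x = \bigcap_{i \in I}^\downarrow \varphi_i(x)$, a filtered intersection of finitary compacts, and this is contained in the open set $U$. By Proposition~\ref{prop:heckmann} (the finitary form of well-filteredness, applied to the directed family $(\varphi_i(x))_{i \in I}$ in $\Fin(X)$), there is some $i \in I$ with $\varphi_i(x) \subseteq U$, i.e., $\varphi_i(x) \in \Box U$, i.e., $x \in \varphi_i^{-1}(\Box U)$. Hence $x \in \bigcup_{i \in I}^\uparrow \varphi_i^{-1}(\Box U)$.

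The only step that requires any care is the application of Proposition~\ref{prop:heckmann}: one must note that $\bigl(\varphi_i(x)\bigr)_{i \in I}$ is genuinely a directed family in $\Fin(X)$ (directed in the $\supseteq$ order, equivalently filtered under $\subseteq$), which follows from the directedness of $(\varphi_i)_{i \in I}$ pointwise at the point $x$, and that its filtered intersection $\bigcap_{i \in I}^\downarrow \varphi_i(x) = \upc x$ is contained in $U$, which holds precisely because $x \in U$ and $U$ is upward closed. Everything else is a direct unwinding of the definitions, so I do not anticipate a real obstacle here.
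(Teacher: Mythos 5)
Your proof is correct and follows essentially the same route as the paper's: directedness of the preimages from pointwise directedness of the $\varphi_i$, the easy inclusion from $x \in \varphi_i(x)$, and the reverse inclusion by applying Proposition~\ref{prop:heckmann} to the filtered family ${(\varphi_i(x))}_{i \in I}$ with $\bigcap_{i \in I}^\downarrow \varphi_i(x) = \upc x \subseteq U$. The extra observations about openness of the preimages are harmless but not needed for the statement.
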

\begin{proof}
  The union is directed, since $\varphi_i^{-1} (\Box U) \subseteq
  \varphi_{i'}^{-1} (\Box U)$ whenever $\varphi_i$ is pointwise below
  $\varphi_{i'}$, i.e., when $\varphi_i (x) \supseteq \varphi_{i'}
  (x)$ for all $x \in X$.  For every $i \in I$, $\varphi_i^{-1} (\Box
  U) \subseteq U$: every element $x$ of $\varphi_i^{-1} (\Box U)$ is
  indeed such that $x \in \varphi_i (x) \subseteq U$.  Conversely, we
  claim that every element $x$ of $U$ is in $\varphi_i^{-1} (\Box U)$
  for some $i \in I$.  Indeed, $\upc x \subseteq U$, so $\bigcap_{i
    \in I} \upc \varphi_i (x) \subseteq U$.  By
  Proposition~\ref{prop:heckmann}, $\varphi_i (x) \subseteq U$ for
  some $i \in I$, i.e., $\varphi_i (x) \in \Box U$.
\end{proof}

\begin{lem}
  \label{lemma:psi*:comp}
  Let $X$ be a $\QRB$-domain, and ${(\varphi_i)}_{i \in I}$ a
  generating family of quasi-deflations.  For every compact saturated
  subset $Q$ of $X$, $Q = \bigcap_{i \in I}^\downarrow
  \varphi_i^\dagger (Q)$.
\end{lem}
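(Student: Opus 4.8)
The plan is to prove the two inclusions separately. Throughout, recall that $\varphi_i^\dagger (Q) = \bigcup_{x \in Q} \varphi_i (x)$, and that this is a finitary compact by Lemma~\ref{lemma:dagger:new}, since $\img \varphi_i$ is finite. First I would check that the family ${(\varphi_i^\dagger (Q))}_{i \in I}$ is filtered in $\supseteq$, so that the notation $\bigcap^\downarrow$ is warranted: if $\varphi_i$ is pointwise below $\varphi_{i'}$, i.e., $\varphi_i (x) \supseteq \varphi_{i'} (x)$ for every $x$, then $\varphi_i^\dagger (Q) = \bigcup_{x \in Q} \varphi_i (x) \supseteq \bigcup_{x \in Q} \varphi_{i'} (x) = \varphi_{i'}^\dagger (Q)$; since ${(\varphi_i)}_{i \in I}$ is directed, so is ${(\varphi_i^\dagger (Q))}_{i \in I}$ in $\supseteq$. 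The inclusion $Q \subseteq \bigcap_{i \in I}^\downarrow \varphi_i^\dagger (Q)$ is then immediate: each $x \in Q$ satisfies $x \in \varphi_i (x) \subseteq \varphi_i^\dagger (Q)$, so $Q \subseteq \varphi_i^\dagger (Q)$ for every $i$.

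For the reverse inclusion, I would exploit that $Q$ is saturated, hence the intersection of all open sets containing it. So it is enough to show that for every open $U \supseteq Q$ there is some $i \in I$ with $\varphi_i^\dagger (Q) \subseteq U$; for then $\bigcap_{i \in I}^\downarrow \varphi_i^\dagger (Q) \subseteq \varphi_i^\dagger (Q) \subseteq U$, and letting $U$ range over all open neighbourhoods of $Q$ gives $\bigcap_{i \in I}^\downarrow \varphi_i^\dagger (Q) \subseteq Q$. Now $\varphi_i^\dagger (Q) \subseteq U$ says exactly that $\varphi_i (x) \subseteq U$ for every $x \in Q$, i.e., $\varphi_i (x) \in \Box U$ for every $x \in Q$, i.e., $Q \subseteq \varphi_i^{-1} (\Box U)$. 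By Lemma~\ref{lemma:psi*}, $\bigcup_{i \in I}^\uparrow \varphi_i^{-1} (\Box U) = U \supseteq Q$; this is a directed family of open subsets of $X$ (each $\varphi_i^{-1} (\Box U)$ is open, since $\varphi_i$ is continuous into $\Fin_\sigma (X)$ and $\Box U$ is Scott-open by Proposition~\ref{prop:heckmann}), and since $Q$ is compact one of its members already contains $Q$, say $Q \subseteq \varphi_i^{-1} (\Box U)$. That is the desired $i$.

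The only step requiring care is this last reduction: one has to remember that $\bigcap_{i \in I}^\downarrow \varphi_i^\dagger (Q)$ is a priori only shown to sit inside every open neighbourhood of $Q$, and it is precisely the saturation of $Q$---equivalently, $Q$ being upward closed, so that $Q = \bigcap \{U \text{ open} \mid Q \subseteq U\}$---that upgrades this to containment in $Q$ itself. Everything else is a direct appeal to Lemmas~\ref{lemma:dagger:new} and~\ref{lemma:psi*} together with compactness of $Q$, so I do not expect any real obstacle.
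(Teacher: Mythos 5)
Your proof is correct and follows essentially the same route as the paper's: the forward inclusion from $x \in \varphi_i(x)$, and the reverse inclusion by reducing to open neighbourhoods of the saturated set $Q$, applying Lemma~\ref{lemma:psi*} and extracting a single index by compactness. The extra checks you include (filteredness of the family, openness of $\varphi_i^{-1}(\Box U)$) are sound and merely make explicit what the paper leaves implicit.
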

\begin{proof}
  Since $x \in \varphi_i (x)$ for every
  $x$, 
  $\varphi_i^\dagger (Q)$ contains $Q$ for every $i \in I$.  So $Q
  \subseteq \bigcap_{i \in I}^\downarrow \varphi_i^\dagger (Q)$.
  Conversely, since $Q$ is saturated, it is enough to show that every
  open $U$ containing $Q$ also contains $\bigcap_{i \in I}^\downarrow
  \varphi_i^\dagger (Q)$.  Since $Q \subseteq U$, by
  Lemma~\ref{lemma:psi*}, $Q \subseteq \bigcup_{i \in I}^\uparrow
  \varphi_i^{-1} (\Box U)$.  By compactness, $Q \subseteq
  \varphi_i^{-1} (\Box U)$ for some $i \in I$, i.e., for every $x \in
  Q$, $\varphi_i (x) \subseteq U$.  So $\varphi_i^\dagger (Q)
  \subseteq U$.
\end{proof}

\begin{prop}
  \label{prop:qrb:Smyth}
  For every $\QRB$-domain $X$, $\Smyth (X)$ is an $\RB$-domain.
\end{prop}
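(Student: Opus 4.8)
The plan is to build an explicit directed family of deflations on $\Smyth(X)$ whose pointwise supremum is the identity, using the generating family ${(\varphi_i)}_{i\in I}$ of quasi-deflations on $X$. For each $i\in I$, I would consider the extension $\varphi_i^\dagger$ of $\varphi_i$. By Lemma~\ref{lemma:dagger:new} (second case, since $\img\varphi_i$ is finite), $\varphi_i^\dagger$ maps $\SV(X)$ continuously into $\Fin_\V(X)$, and since $X$ is quasi-continuous, hence sober and locally compact, we have $\SV(X)=\Smyth_\sigma(X)$ and $\Fin_\V(X)=\Fin_\sigma(X)\subseteq\Smyth_\sigma(X)$; by Corollary~\ref{corl:fin:cont:Fin}, $\varphi_i^\dagger$ is then Scott-continuous from $\Smyth(X)$ to $\Smyth(X)$. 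So $\varphi_i^\dagger$ is a continuous endomap of $\Smyth(X)$.

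Next I would verify the three defining properties of an $\RB$-domain. First, $\Smyth(X)$ is pointed: since $X$ is pointed, $\upc\bot=X$ is the least element of $\Smyth(X)$. Second, each $\varphi_i^\dagger$ is a deflation: it has finite image, because $\varphi_i^\dagger(Q)=\bigcup_{x\in Q}\varphi_i(x)$ is a union of members of the finite set $\img\varphi_i$, so there are only finitely many possible values (at most $2^{|\img\varphi_i|}$, minus the empty set); and $\varphi_i^\dagger(Q)\supseteq Q$ in $X$, which is $\varphi_i^\dagger(Q)\leq Q$ in the $\supseteq$-ordering of $\Smyth(X)$, because $x\in\varphi_i(x)$ for every $x$. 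Third, the family ${(\varphi_i^\dagger)}_{i\in I}$ is directed: if $\varphi_i\leq\varphi_j$ pointwise, i.e. $\varphi_i(x)\supseteq\varphi_j(x)$ for all $x$, then $\varphi_i^\dagger(Q)\supseteq\varphi_j^\dagger(Q)$, i.e. $\varphi_i^\dagger\leq\varphi_j^\dagger$ in the pointwise order on maps into $\Smyth(X)$; directedness of ${(\varphi_i)}_{i\in I}$ transfers directly.

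Finally I would compute the supremum. The least upper bound of a directed family in $\Smyth(X)$ is the filtered intersection, so $\sup_{i\in I}\varphi_i^\dagger$ sends $Q$ to $\bigcap_{i\in I}^\downarrow\varphi_i^\dagger(Q)$, and Lemma~\ref{lemma:psi*:comp} says precisely that this equals $Q$. Hence $\sup_{i\in I}\varphi_i^\dagger=\identity{\Smyth(X)}$, and $\Smyth(X)$ is an $\RB$-domain.

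The part requiring the most care is confirming that $\varphi_i^\dagger$ really is continuous \emph{for the Scott topology of $\supseteq$} on $\Smyth(X)$, as opposed to merely the upper Vietoris topology, since the $\RB$-domain definition is phrased in dcpo/Scott-continuity terms; this is where the soberness and local compactness of quasi-continuous dcpos (needed to identify $\SV(X)$ with $\Smyth_\sigma(X)$ and $\Fin_\V(X)$ with $\Fin_\sigma(X)$) and Corollary~\ref{corl:fin:cont:Fin} do the real work. Everything else — pointedness, finite image, the deflation inequality, directedness — is routine bookkeeping transported along the $(-)^\dagger$ operation, and the supremum computation is handed to us by Lemma~\ref{lemma:psi*:comp}.
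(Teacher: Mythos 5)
Your proof is correct and follows essentially the same route as the paper: the deflations are the extensions $\varphi_i^\dagger$, Scott-continuity is obtained from sobriety and local compactness of the quasi-continuous dcpo $X$ (the paper invokes Corollary~\ref{corl:fin:cont:Smyth} where you invoke Corollary~\ref{corl:fin:cont:Fin}, an immaterial difference), and the supremum computation is exactly Lemma~\ref{lemma:psi*:comp}. The extra bookkeeping you spell out (pointedness, the deflation inequality in the $\supseteq$-order, finiteness of the image) is left implicit in the paper but is verified correctly.
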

\begin{proof}
  Assume $X$ is a $\QRB$-domain, with generating family of
  quasi-deflations ${(\varphi_i)}_{i \in I}$.  The family
  ${(\varphi_i^\dagger)}_{i \in I}$ is directed, since if $\varphi_i$
  is below $\varphi_j$, i.e., if $\varphi_i (x) \supseteq \varphi_j
  (x)$ for every $x \in X$, then $\varphi_i^\dagger (Q) = \bigcup_{x
    \in Q} \varphi_i (x) \supseteq \bigcup_{x \in Q} \varphi_j (x) =
  \varphi_j^\dagger (Q)$.  Since $X$ is quasi-continuous
  (Corollary~\ref{corl:qrb:qcont}), it is sober and locally compact.
  So Corollary~\ref{corl:fin:cont:Smyth} applies, showing that
  $\varphi_i^\dagger$ is Scott-continuous from $\Smyth (X)$ to $\Smyth
  (X)$.  Lemma~\ref{lemma:psi*:comp} states that the least upper bound
  of ${(\varphi_i^\dagger)}_{i \in I}$ is the identity on $\Smyth
  (X)$.  Clearly, $\varphi_i^\dagger$ has finite image.  So $\Smyth
  (X)$ is an $\RB$-domain.
\end{proof}

\begin{thm}
  \label{thm:qrb:scomp}
  Every $\QRB$-domain is stably compact.
\end{thm}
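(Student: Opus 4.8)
The plan is to establish stable compactness of a $\QRB$-domain $X$ by verifying its four constituent properties: local compactness, sobriety, compactness, and coherence. Three of these are immediate. By Corollary~\ref{corl:qrb:qcont}, $X$ is quasi-continuous, and every quasi-continuous dcpo is locally compact and sober in its Scott topology \cite[III-3.7]{GHKLMS:contlatt}, so two properties come for free. Compactness is also free: $X$ is pointed, so $\upc\bot = X$ is its own (finitary) compact, and any open cover must already contain an open containing $\bot$, hence all of $X$. The only real work is coherence, i.e.\ showing that $Q_1 \cap Q_2$ is compact for any two compact saturated subsets $Q_1, Q_2$ of $X$.

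For coherence, the plan is to leverage Proposition~\ref{prop:qrb:Smyth}, which tells us that $\Smyth (X)$ is an $\RB$-domain, hence a continuous dcpo. First I would fix a generating family of quasi-deflations ${(\varphi_i)}_{i \in I}$. By Lemma~\ref{lemma:psi*:comp}, $Q_1 = \bigcap_{i \in I}^\downarrow \varphi_i^\dagger (Q_1)$ and $Q_2 = \bigcap_{i \in I}^\downarrow \varphi_i^\dagger (Q_2)$, where each $\varphi_i^\dagger (Q_k)$ is a finitary compact (being a finite union of the finitely many sets $\varphi_i (x)$, as in Lemma~\ref{lemma:dagger:new}). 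Now $\varphi_i^\dagger (Q_1) \cap \varphi_i^\dagger (Q_2)$ is a finite intersection of finitary compacts, hence itself finitary compact (and in particular compact saturated), since in any poset the intersection of two upward closed sets of the form $\upc E$, $\upc E'$ equals $\upc\{$minimal elements$\}$ — more simply, $\upc E \cap \upc E' = \bigcup_{x \in E, y \in E'} (\upc x \cap \upc y)$, and each $\upc x \cap \upc y$ is a finitary compact since $X$ is quasi-continuous, hence locally compact, hence coherent on finitary compacts... actually one should be slightly careful here, so let me instead argue: $\upc E \cap \upc E'$ is upward closed, and it is compact because in a quasi-continuous dcpo the intersection of two compact saturated sets is compact — but that is exactly coherence, which is what we are trying to prove, so this would be circular.

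To avoid circularity, the key step I would actually carry out is: $Q_1 \cap Q_2 = \bigcap_{i \in I}^\downarrow \bigl(\varphi_i^\dagger (Q_1) \cap \varphi_i^\dagger (Q_2)\bigr)$, and this is a filtered intersection (directedness of ${(\varphi_i)}_{i \in I}$ gives, for $i, i'$, some $i''$ with $\varphi_{i''}^\dagger (Q_k) \subseteq \varphi_i^\dagger (Q_k) \cap \varphi_{i'}^\dagger (Q_k)$ for $k = 1, 2$). Each term $\varphi_i^\dagger (Q_1) \cap \varphi_i^\dagger (Q_2) = \upc E_i \cap \upc E'_i$ where $E_i, E'_i$ are finite; writing $E_i = \{x_1, \dots, x_n\}$, $E'_i = \{y_1, \dots, y_m\}$, we have $\upc E_i \cap \upc E'_i = \bigcup_{p,q} (\upc x_p \cap \upc y_q)$. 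Now $\upc x_p \cap \upc y_q$ is the set of common upper bounds of $x_p$ and $y_q$; in a quasi-continuous dcpo, for each such upper bound $z$ we may pick a finitary compact $\upc F_z \cll z$, and by a standard argument (using that $X$ is well-filtered, a consequence of sobriety) finitely many such $\upc F_z$ cover, showing $\upc x_p \cap \upc y_q$ is compact. Hence each term of the filtered family is compact saturated, and since $X$ is well-filtered, the filtered intersection $Q_1 \cap Q_2$ of compact saturated sets is again compact. I expect the compactness of $\upc x \cap \upc y$ — the ``two-element coherence'' of quasi-continuous dcpos — to be the one genuinely non-trivial point; it may well be that the author instead invokes that quasi-continuous dcpos are stably compact only partially and argues coherence directly via the $\RB$-domain structure of $\Smyth(X)$, using that $\Smyth(X)$ coherent (being continuous) transfers back to $X$ through the embedding $x \mapsto \upc x$.
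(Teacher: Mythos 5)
Your reduction of coherence to the compactness of $\upc x \cap \upc y$ in a quasi-continuous dcpo is where the argument breaks: that statement is false for quasi-continuous (even algebraic) dcpos in general. Take $X = \{\bot, a, b\} \cup \{c_n \mid n \in \nat\}$ with $\bot$ below everything, $a$ and $b$ incomparable and below every $c_n$, and the $c_n$ pairwise incomparable. This is a pointed algebraic, hence continuous, hence quasi-continuous dcpo; each singleton $\{c_n\}$ is Scott-open; and $\upc a \cap \upc b = \{c_n \mid n \in \nat\}$ admits no finite subcover of the cover by those singletons, so it is not compact. Your proposed justification of the step (``for each upper bound $z$ pick $\upc F_z \cll z$; finitely many such cover'') presupposes exactly the compactness being proved --- well-filteredness lets you pass filtered intersections of \emph{compact} saturated sets inside opens, but it does not manufacture finite subcovers of an arbitrary upward-closed set. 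Your filtered-intersection identity $Q_1 \cap Q_2 = \bigcap_{i \in I}^\downarrow \bigl(\varphi_i^\dagger (Q_1) \cap \varphi_i^\dagger (Q_2)\bigr)$ is correct, but it only pushes the problem down to the compactness of $\upc E \cap \upc E'$, which regresses in the same way (writing $\upc x \cap \upc y = \bigcap_j^\downarrow (\varphi_j(x) \cap \varphi_j(y))$ just reproduces the situation). So coherence genuinely needs the quasi-deflations, not merely quasi-continuity, and your proof does not close.

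The route you only guess at in your last sentence is essentially the paper's actual one. By Proposition~\ref{prop:qrb:Smyth}, $\Smyth(X)$ is an $\RB$-domain, hence stably compact, hence coherent. The paper then considers $\upc_{\Smyth(X)} Q \cap \upc_{\Smyth(X)} Q'$, an intersection of two principal (hence finitary compact) upper sets of $\SV(X)$, which is therefore compact saturated and non-empty in $\SV(X)$; and it transports this back to $X$ not through the unit $x \mapsto \upc x$ but through the monad multiplication $\mu_X = \identity{\SV(X)}^\dagger : \SV(\SV(X)) \to \SV(X)$, together with the computation that $\mu_X$ of the set of all non-empty compact saturated $Q''$ contained in both $Q$ and $Q'$ equals $\bigcup_{Q'' \subseteq Q,\, Q'} Q'' = Q \cap Q'$. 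That computation is the missing mechanism; some substitute for it, exploiting the $\varphi_i$ beyond bare quasi-continuity, is indispensable.
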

\begin{proof}
  Let $X$ be a $\QRB$-domain, with generating family of
  quasi-deflations ${(\varphi_i)}_{i \in I}$.  We claim that, given
  any two compact saturated subsets $Q$ and $Q'$ of $X$, $Q \cap Q'$
  is again compact saturated.  This is obvious if $Q \cap Q'$ is
  empty.  Otherwise, writing $\upc_Y y$ for the upward closure of an
  element $y$ of a poset $Y$, $\upc_{\Smyth (X)} Q \cap \upc_{\Smyth
    (X)} Q'$ is an intersection of two finitary compacts in $\SV (X)$.
  Since $X$ is a quasi-continuous dcpo by
  Corollary~\ref{corl:qrb:qcont}, $X$ is sober and locally compact, so
  $\SV (X) = \Smyth_\sigma (X)$.  Moreover, $\Smyth (X)$ is an
  $\RB$-domain (Proposition~\ref{prop:qrb:Smyth}), so $\SV (X)$ is
  coherent.  Therefore $\upc_{\Smyth (X)} Q \cap \upc_{\Smyth (X)} Q'$
  is compact saturated in $\SV (X)$.  It is also non-empty: pick $x
  \in Q \cap Q'$, then $\upc_X x$ is in $\upc_{\Smyth (X)} Q \cap
  \upc_{\Smyth (X)} Q'$.  So $\upc_{\Smyth (X)} Q \cap \upc_{\Smyth
    (X)} Q'$ is in $\Smyth (\SV (X))$.  Now there is a (continuous)
  map $\mu_X : \SV (\SV (X)) \to \SV (X)$ defined as $\identity {\SV
    (X)}^\dagger$---this is the so-called {\em multiplication\/} of
  the monad---and $\mu_X (\upc_{\Smyth (X)} Q \cap \upc_{\Smyth (X)}
  Q')$ is then an element of $\Smyth (X)$, i.e., a compact subset of
  $X$.  We now observe that $\mu_X (\upc_{\Smyth (X)} Q \cap
  \upc_{\Smyth (X)} Q') = \bigcup_{\substack{Q'' \in \Smyth (X)\\Q''
      \subseteq Q, Q'}} Q''$ is equal to $Q \cap Q'$: the left to
  right inclusion is obvious, and conversely every $x \in Q \cap Q'$
  defines an element $Q'' = \upc_X x$ of $\Smyth (X)$ that is included
  in $Q$ and $Q'$.  So $Q \cap Q'$ is compact saturated.  We conclude
  that $X$ is coherent.

  $X$ is compact since pointed, and also locally compact and sober, as
  a quasi-continuous dcpo, hence stably compact.  
\end{proof}

The {\em Lawson topology\/} is the smallest topology containing both
the Scott-opens and the complements of all finitary compacts $\upc E
\in \Fin (X)$.  When $X$ is a quasi-continuous dcpo, since $\upc E$ is
compact saturated and every non-empty compact saturated subset is a
filtered intersection of such sets $\upc E$, the Lawson topology
coincides with the {\em patch topology\/}, i.e., the smallest topology
containing the original Scott topology and all complements of compact
saturated subsets.  Every stably compact space is patch-compact, i.e.,
compact in its patch topology
\cite[Section~VI-6]{GHKLMS:contlatt}.  So:
\begin{cor}
  \label{corl:qrb:lcomp}
  Every $\QRB$-domain is Lawson-compact.
\end{cor}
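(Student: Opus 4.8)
The plan is simply to chain together the three ingredients assembled just before the statement, in the right order. First I would recall that a $\QRB$-domain $X$ is quasi-continuous, by Corollary~\ref{corl:qrb:qcont}. This is precisely the hypothesis under which the Lawson topology of $X$ coincides with its patch topology: in a quasi-continuous dcpo every non-empty compact saturated subset is a filtered intersection of finitary compacts $\upc E$ with $\upc E \in \Fin(X)$, so the topology generated by the Scott-opens together with the complements of the sets $\upc E$ already contains the complements of all compact saturated subsets, and conversely. Hence ``Lawson-compact'' and ``patch-compact'' mean the same thing for $X$.

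Second, I would invoke Theorem~\ref{thm:qrb:scomp}, which says that $X$ is stably compact. Third, I would use the standard fact that every stably compact space is patch-compact, i.e.\ compact in its patch topology \cite[Section~VI-6]{GHKLMS:contlatt}. Putting the last two together, $X$ is patch-compact; combining this with the first point, $X$ is compact in its Lawson topology, which is the definition of Lawson-compactness. That finishes the proof.

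I do not expect any real obstacle: this corollary is purely a bookkeeping exercise, citing earlier results. The only points that merit a moment's care are checking that the ``patch $=$ Lawson'' identification genuinely applies here (it does, since quasi-continuity is exactly its hypothesis and Corollary~\ref{corl:qrb:qcont} supplies it), and that the notion of patch-compactness quoted from \cite{GHKLMS:contlatt} is the one being used; both are routine.
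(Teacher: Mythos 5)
Your proof is correct and follows exactly the argument the paper gives in the paragraph preceding the corollary: quasi-continuity (Corollary~\ref{corl:qrb:qcont}) makes the Lawson and patch topologies coincide, Theorem~\ref{thm:qrb:scomp} gives stable compactness, and stably compact spaces are patch-compact. Nothing to add.
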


In the sequel, we shall need some form of countability:
\begin{defi} 
  \label{defn:omega:qrb}
  An {\em $\omega\QRB$-domain\/} is a $\QRB$-domain with a {\em
    countable\/} generating family of quasi-deflations.
\end{defi}

\begin{prop}
  \label{prop:omega:qrb}
  A pointed dcpo $X$ is an $\omega\QRB$-domain iff there is a
  generating {\em sequence\/} of quasi-deflations ${(\varphi_i)}_{i
    \in \nat}$, i.e., for every $i,i' \in \nat$, $i \leq i'$,
  $\varphi_i (x) \supseteq \varphi_{i'} (x)$ for every $x \in X$, and
  $\upc x = \bigcap_{i \in \nat}^\downarrow {\varphi_i (x)}$ for every
  $x \in X$.
\end{prop}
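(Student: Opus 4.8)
The plan is to prove the two directions separately. The right-to-left direction is immediate: a generating sequence $(\varphi_i)_{i \in \nat}$ is in particular a countable directed family of quasi-deflations satisfying $\upc x = \bigcap_{i\in\nat}^\downarrow \varphi_i(x)$, so $X$ is an $\omega\QRB$-domain by Definition~\ref{defn:omega:qrb}. The substance is in the left-to-right direction, where we are handed a countable directed family $(\varphi_n)_{n \in \nat}$ of quasi-deflations with $\upc x = \bigcap_{n\in\nat}^\downarrow \varphi_n(x)$ and must manufacture from it an \emph{increasing} sequence with the same generating property.

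First I would extract a cofinal chain. Since $(\varphi_n)_{n\in\nat}$ is directed in the pointwise ordering of quasi-deflations, I can define $\psi_i$ recursively: set $\psi_0 = \varphi_0$, and given $\psi_i$, pick by directedness some index $n$ with $\varphi_n \geq \psi_i$ and $\varphi_n \geq \varphi_i$, and set $\psi_{i+1} = \varphi_n$. Then $(\psi_i)_{i\in\nat}$ is a sequence of quasi-deflations drawn from the family $(\varphi_n)_n$, it is increasing by construction ($\psi_{i+1} \geq \psi_i$, i.e.\ $\psi_{i+1}(x) \supseteq$ -- wait, the pointwise order on quasi-deflations is $\varphi \leq \psi$ iff $\varphi(x)\supseteq\psi(x)$, so increasing in the $\leq$ order means $\psi_{i+1}(x) \subseteq \psi_i(x)$), and it is cofinal in $(\varphi_n)_n$ because $\psi_{i+1} \geq \varphi_i$. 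Cofinality is what matters: for each $x$, the filtered family $(\psi_i(x))_i$ has the same intersection as $(\varphi_n(x))_n$, namely $\upc x$. Indeed $\bigcap_i \psi_i(x) \supseteq \bigcap_n \varphi_n(x) = \upc x$ since each $\psi_i$ is one of the $\varphi_n$, and conversely $\bigcap_i \psi_i(x) \subseteq \bigcap_i \varphi_i(x) = \upc x$ because $\psi_{i+1}(x) \subseteq \varphi_i(x)$. Hence $\upc x = \bigcap_{i\in\nat}^\downarrow \psi_i(x)$, and the $\psi_i$ form a generating sequence in the sense of the proposition.

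The only point requiring a word of care -- and the one I expect to be the minor obstacle -- is verifying that the pointwise order of quasi-deflations is exactly what lets "directed" supply the cofinal chain, and getting the variance of $\supseteq$ versus $\subseteq$ straight in the monotonicity clause "$i \leq i'$ implies $\varphi_i(x) \supseteq \varphi_{i'}(x)$" of the proposition. With the recursive construction above, $\psi_i \leq \psi_{i'}$ for $i \leq i'$ unwinds precisely to $\psi_i(x) \supseteq \psi_{i'}(x)$, as required. No appeal to quasi-continuity, stable compactness, or any of the machinery of Lemmas~\ref{lemma:psi*}--\ref{lemma:psi*:comp} is needed; this is a purely order-theoretic observation about replacing a countable directed family by a cofinal increasing sequence, exactly analogous to the familiar fact for $\RB$-domains that a countable directed family of deflations can be assumed to be a chain.
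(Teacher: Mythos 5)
Your proof is correct and follows essentially the same route as the paper: extract a cofinal non-decreasing subsequence from the countable directed family by recursively choosing, via directedness, an upper bound of the previous term and of $\varphi_i$, then use cofinality to see that the intersections are unchanged. The care you take with the variance of the pointwise order ($\varphi\leq\psi$ iff $\varphi(x)\supseteq\psi(x)$) matches the paper's conventions, and the converse direction is indeed immediate.
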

\begin{proof}
  Let $X$ be an $\omega\QRB$-domain, and ${(\psi_j)}_{j \in \nat}$ be
  a countable generating family of quasi-deflations.  Build a sequence
  ${(j_i)}_{i \in \nat}$ by letting $j_0 = 0$, and $j_{i+1}$ be any $j
  \in \nat$ such that $\psi_j$ is above $\psi_i$ and $\psi_{j_i}$, by
  directedness.  Then let $\varphi_i = \psi_{j_i}$ for every $i \in
  \nat$.  By construction, whenever $i \leq i'$, $\varphi_i$ is below
  $\varphi_{i+1}$.  And for every $i \in \nat$, $\psi_i$ is below
  $\varphi_i = \psi_{j_i}$, so $\upc x = \bigcap_{i \in
    \nat}^\downarrow {\varphi_i (x)}$ for every $x \in X$.  So
  ${(\varphi_i)}_{i \in \nat}$ is the desired sequence.
\end{proof}

Recall that a topological space is \emph{countably-based} if and only
if it has a countable subbase, or equivalently, a countable base.
\begin{prop}
  \label{prop:omega:qrb:2}
  A $\QRB$-domain $X$ is an $\omega\QRB$-domain iff it is
  countably-based.
\end{prop}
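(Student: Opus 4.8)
The plan is to prove the two implications separately, using Lemma~\ref{lemma:psi*} to translate the order-theoretic condition ``$(\varphi_i)_i$ is generating'' into a statement about open covers.

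For the forward implication, I would start from a generating \emph{sequence} $(\varphi_i)_{i\in\nat}$, available by Proposition~\ref{prop:omega:qrb}. Each image $\img\varphi_i$ is a finite subset of $\Fin(X)$, so $\mathcal E=\bigcup_{i\in\nat}\img\varphi_i$ is a countable set of finitary compacts. Since $X$ is quasi-continuous (Corollary~\ref{corl:qrb:qcont}), $\uuarrow E=\interior{\upc E}$ is Scott-open and depends only on $\upc E$, for every $\upc E\in\Fin(X)$. The claim is then that $\mathcal B=\{\uuarrow E\mid\upc E\in\mathcal E\}$ is a countable base of the Scott topology: given a Scott-open $U$ and $x\in U$, Lemma~\ref{lemma:psi*} produces an $i$ with $\varphi_i(x)\subseteq U$; writing $\upc E=\varphi_i(x)$, Lemma~\ref{lemma:qdefl:ll} gives $\upc E\cll x$, hence $x\in\uuarrow E\subseteq\upc E\subseteq U$, with $\uuarrow E\in\mathcal B$. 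So every open is a union of members of $\mathcal B$, and $X$ is countably-based.

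For the converse, suppose $X$ is a countably-based $\QRB$-domain with generating family $(\varphi_i)_{i\in I}$, and fix a countable base $\mathcal B$ of Scott-opens. For each $U\in\mathcal B$, Lemma~\ref{lemma:psi*} gives $U=\bigcup_{i\in I}^\uparrow\varphi_i^{-1}(\Box U)$, and each $\varphi_i^{-1}(\Box U)$ is Scott-open, since $\varphi_i : X\to\Fin_\sigma(X)$ is continuous and $\Box U$ is Scott-open in $\Fin_\sigma(X)$ (Corollary~\ref{corl:BoxU}). A subspace of a countably-based space is again countably-based, hence Lindel\"of, so this cover of $U$ admits a countable subcover: there is a countable $J_U\subseteq I$ with $U=\bigcup_{i\in J_U}\varphi_i^{-1}(\Box U)$. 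Setting $J_0=\bigcup_{U\in\mathcal B}J_U$ gives a countable $J_0\subseteq I$. The family $(\varphi_i)_{i\in J_0}$ need not be directed, so, enumerating $J_0=\{i_0,i_1,\ldots\}$ and copying the proof of Proposition~\ref{prop:omega:qrb}, I would use directedness of $(\varphi_i)_{i\in I}$ to build a chain of quasi-deflations $(\psi_n)_{n\in\nat}$ from that family with $\psi_n\geq\varphi_{i_n}$ for every $n$. Then $(\psi_n)_n$ is directed; $\psi_n^{-1}(\Box U)\subseteq U$ always, since $x\in\psi_n(x)$; and $\bigcup_n\psi_n^{-1}(\Box U)\supseteq\bigcup_{i\in J_0}\varphi_i^{-1}(\Box U)=U$ for each $U\in\mathcal B$. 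It then remains to verify that $(\psi_n)_n$ is generating: $\upc x\subseteq\bigcap_n^\downarrow\psi_n(x)$ holds since $x\in\psi_n(x)$ and $\psi_n(x)$ is upward closed; and if $y\notin\upc x$, then, $X$ being $T_0$ with base $\mathcal B$, there is a $U\in\mathcal B$ with $x\in U$ and $y\notin U$, whence $\psi_n(x)\subseteq U$ for some $n$ and therefore $y\notin\psi_n(x)$. Thus $\bigcap_n^\downarrow\psi_n(x)=\upc x$, so $(\psi_n)_n$ is a countable generating family and $X$ is an $\omega\QRB$-domain.

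The base/cover manipulations and the chain extraction are routine (the latter is essentially verbatim from Proposition~\ref{prop:omega:qrb}). The one load-bearing idea, and the step I expect to be the main obstacle, is the reformulation in the converse direction: ``$(\varphi_i)_i$ generating'' is an a priori uncountable bundle of requirements, one per $x\in X$ and per $i\in I$, and the point of Lemma~\ref{lemma:psi*} is to repackage it as the single topological identity $U=\bigcup_{i\in I}\varphi_i^{-1}(\Box U)$ over open sets $U$, which is exactly the shape on which second-countability (Lindel\"ofness of open subspaces) can be used to thin $I$ down to a countable $J_0$.
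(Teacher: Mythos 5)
Your proof is correct. The ``only if'' half is essentially the paper's argument, slightly streamlined: the paper also builds its countable base out of interiors of sets drawn from $\bigcup_{i}\img\varphi_i$ (it uses the sets $\interior{\varphi_i(y)}$ for $y$ ranging over a countable set of points, reached via an intermediate point $y\leq x$), whereas you take all of the $\uuarrow E$ with $\upc E\in\bigcup_i\img\varphi_i$ and pass directly from $\varphi_i(x)\subseteq U$ and $\varphi_i(x)\cll x$ to $x\in\uuarrow E\subseteq\upc E\subseteq U$; both rest on the same ingredients (Lemma~\ref{lemma:psi*}, or Proposition~\ref{prop:heckmann}, plus Lemma~\ref{lemma:qdefl:ll}). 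The ``if'' half is where you genuinely diverge. The paper never invokes the Lindel\"of property: it enumerates the pairs $(\ell,k)$ of base indices for which some finitary compact is sandwiched as $U_\ell\subseteq\upc E\subseteq U_k$, and uses Lemma~\ref{lemma:psi*:comp} together with Proposition~\ref{prop:heckmann}, applied to those finitary compacts, to select the countably many indices $i_m$; the verification that the resulting chain is generating then goes through $\psi_m^\dagger(\upc E_{\ell k})\subseteq U_k$ and the saturation of $\upc x$. You instead observe that the identity $U=\bigcup_{i\in I}^\uparrow\varphi_i^{-1}(\Box U)$ of Lemma~\ref{lemma:psi*}, ranged over a countable base, is an open-cover statement, and extract countable subcovers by hereditary Lindel\"ofness of second-countable spaces; the closing separation step (a basic open containing $x$ but not $y$) is a clean substitute for the paper's saturation argument. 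Your route is more conceptual and avoids the $\dagger$-machinery entirely, at the cost of importing the (standard, and correct in this generality) Lindel\"of fact; the paper's route stays inside the finitary-compact calculus it has already set up. Both then finish with the same chain-extraction borrowed from Proposition~\ref{prop:omega:qrb}.
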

\begin{proof}
  Only if: let ${(\varphi_i)}_{i \in \nat}$ be a generating sequence
  of quasi-deflations on $X$.  For each $i \in \nat$, enumerate $\img
  \varphi_i$ as $\{\upc E_{i1}, \ldots, \upc E_{in_i}\} \subseteq \Fin
  (X)$, and let $E_i$ be the finite set $\bigcup_{j=1}^{n_i} E_{ij}$.
  We claim that the countably many subsets $\interior {\varphi_i
    (y)}$, $y \in E_j$, $i, j \in \nat$, form a base of the topology.

  It is enough to show that, for every open $U$ and every element $x
  \in U$, $x \in \interior {\varphi_i (y)}$ for some $y \in E_j$, $i,
  j\in \nat$, such that $\varphi_i (y) \subseteq U$: since $\upc x =
  \bigcap_{j \in \nat}^\downarrow \varphi_j (x) \subseteq U$, use
  Proposition~\ref{prop:heckmann} to find $j \in \nat$ such that
  $\varphi_j (x) \subseteq U$.  Since $x \in \varphi_j (x)$ and
  $\varphi_j (x) = \upc E_{jk}$ for some $k$, there is a $y \in E_{jk}
  \subseteq E_j$ such that $y \leq x$, and $y \in U$.  Repeating the
  argument on $y$, we find $i \in \nat$ such that $\varphi_i (y)
  \subseteq U$.  By Lemma~\ref{lemma:qdefl:ll}, $\varphi_i (y) \cll
  y$, i.e., $y$ is in $\interior {\varphi_i (y)}$ since $X$ is
  quasi-continuous.  Since $y \leq x$, $x$ is in $\interior {\varphi_i
    (y)}$.

  If: let ${(\varphi_i)}_{i \in I}$ be a generating family of
  quasi-deflations on $X$, and assume that the topology of $X$ has a
  countable base $\{U_k \mid k\in \nat\}$.  Assume without loss of
  generality that $U_k \neq \emptyset$ for every $k \in \nat$.  For
  every pair $\ell, k \in \nat$ such that $U_\ell \subseteq \upc E
  \subseteq U_k$ for some finite set $E$, pick one such finite set and
  call it $E_{\ell k}$.  One can enumerate all such pairs as $\ell_m,
  k_m$, $m \in \nat$.  By Lemma~\ref{lemma:psi*:comp}, $\bigcap_{i \in
    I}^\downarrow \varphi_i^\dagger (\upc E_{\ell_m k_m}) = \upc
  E_{\ell_m k_m}$.  By Proposition~\ref{prop:heckmann},
  $\varphi_i^\dagger (\upc E_{\ell_m k_m}) \subseteq U_{k_m}$ for some
  $i \in I$: pick such an $i$ and call it $i_m$.  By directedness, we
  may also assume that $\varphi_{i_m}$ is also above $\varphi_{i_n}$,
  $0 \leq n < m$.  Define $\psi_m$ as $\varphi_{i_m}$.  This yields a
  non-decreasing sequence of quasi-deflations ${(\psi_m)}_{m \in
    \nat}$.

  We claim that it is generating.  On one hand, $\upc x \subseteq
  \bigcap_{k \in \nat}^\downarrow \psi_k (x)$ since each $\psi_k$ is a
  quasi-deflation.  Conversely, every open neighborhood $U$ of $x$
  contains some $U_k$, $k \in \nat$, with $x \in U_k$.  Then $\upc x =
  \bigcap_{i \in I}^\downarrow \varphi_i (x) \subseteq U_k$, so
  $\varphi_i (x) \subseteq U_k$ for some $i \in I$.  Write $\varphi_i
  (x)$ as $\upc E$, where $E$ is finite.  By
  Lemma~\ref{lemma:qdefl:ll}, $\varphi_i (x) \cll x$, so $x \in
  \uuarrow E \subseteq \upc E \subseteq U_k$.  As $\uuarrow E$ is
  open, $x \in U_\ell \subseteq \uuarrow E$ for some $\ell \in \nat$.
  In particular, $U_\ell \subseteq \upc E \subseteq U_k$.  So $\ell,
  k$ is a pair of the form $\ell_m, k_m$.  By definition
  $\psi_m^\dagger (\upc E_{\ell k}) \subseteq U_k$.  Since $x \in
  U_\ell \subseteq \upc E_{\ell k}$, $\psi_m (x) = \psi_m^\dagger
  (\upc x) \subseteq \psi_m^\dagger (\upc E_{\ell k}) \subseteq U_k
  \subseteq U$.  So every open neighborhood $U$ of $x$ contains
  $\psi_m (x)$ for some $m \in \nat$, hence $\bigcap_{m \in
    \nat}^\downarrow \psi_m (x)$.  So $\bigcap_{m \in \nat}^\downarrow
  \psi_m (x) \subseteq \upc x$, whence the equality.
\end{proof}

\section{Quasi-Retracts of Bifinite Domains}
\label{sec:qretr}

The $\RB$-domains can be characterized as the retracts of bifinite
domains.  Recall that a {\em retraction\/} of $X$ onto $Y$ is a
continuous map $r : X \to Y$ such that there is continuous map $s : Y
\to X$ (the {\em section\/}) with $r (s (y)) = y$ for every $y \in Y$.

We shall show that ($\omega$)$\QRB$-domains are not just closed under
retractions, but under a more relaxed notion that we shall call {\em
  quasi-retractions\/}.  More precisely, our aim in this section is to
show that the $\omega\QRB$-domains are exactly the quasi-retracts of
bifinite domains, up to some details.

\begin{figure}
  \centering
  \ifpdf
  \input{qretr.pdftex_t}
  \else
  \input{qretr.pstex_t}
  \fi
  \caption{A quasi-retraction}
  \label{fig:qretr}
\end{figure}
For each continuous $r : X \to Y$, define $\Smyth r : \SV (X) \to \SV
(Y)$ by $\Smyth r (Q) = \upc \{r (x) \mid x \in Q\}$.  $\Smyth r$ is
continuous, since $\Smyth r^{-1} (\Box V) = \Box r^{-1} (V)$ for every
open $V$.  This is the action of the $\SV$ functor of the Smyth
powerspace monad \cite[Chapter~7]{Schalk:PhD}, equivalently $\Smyth r
= {(\eta_Y \circ r)}^\dagger$.

\begin{defi}[Quasi-retract]
  \label{defn:qretr}
  A {\em quasi-retraction\/} $r : X \to Y$ of $X$ onto $Y$ is a
  continuous map such that there is a continuous map $\qs : Y \to\SV
  (X)$ (the {\em quasi-section\/}) such that $\Smyth r (\qs (y)) =
  \upc y$ for every $y \in Y$.

  A topological space $Y$ is a {\em quasi-retract\/} of $X$ iff there
  is a quasi-retraction of $X$ onto $Y$.
\end{defi}
In diagram notation, we require the bottom right triangle to commute,
but \emph{not} the top left triangle---what the puncture {\LARGE+}
indicates; the outer square always commutes:
\begin{equation}
  \label{eq:qretr}
  \xymatrix{
    X \ar[r]^r \ar[d]_{\eta_X} & Y \ar[d]^{\eta_Y} \ar[ld]|{\qs}_(0.6){\text{\LARGE+}} \\
    \SV (X) \ar[r]_{\Smyth r} & \SV (Y)
  }
\end{equation}
While a section $s : Y \to X$ picks an element $s (y)$ in the inverse
image $r^{-1} (y)$, continuously, a quasi-section is only required to
pick a non-empty compact saturated collection of elements from $r^{-1}
(\upc y)$ meeting $r^{-1} (y)$ (see Figure~\ref{fig:qretr}),
continuously again.

Every retraction $r$ (with section $s$) defines a canonical
quasi-retraction: let $\qs (y) = \upc s (y)$, then $\Smyth r (\qs (y))
= \upc \{r (z) \mid s (y) \leq z\} = \upc r (s (y)) = \upc y$.

The converse fails.  For example, $\mathcal N_2$ is a quasi-retract of
$\nat_\omega + \nat_\omega$ (see Figure~\ref{fig:ex1}~$(iii)$): $r$
maps both $(0,\omega)$ and $(1,\omega)$ to $\omega \in \mathcal N_2$,
and $\qs (y) = r^{-1} (\upc y)$ for every $y$.  But $Y$ is not a
retract of $X$: $X$ is a continuous dcpo, and every retract of a
continuous dcpo is again one; recall that $\mathcal N_2$ is not
continuous.

Every quasi-retraction $r : X \to Y$ induces a continuous map $\eta_Y
\circ r : X \to \SV (Y)$, which is then a retraction in the Kleisli
category $\pmb{C}_{\Smyth}$.  A retraction in a category is a morphism
$r : X \to Y$ such that there is a section morphism $s : Y \to X$,
i.e., one with $r \circ s = \identity Y$.  It is easy to see that the
quasi-retractions are exactly those continuous maps $r : X \to Y$ such
that $\eta_Y \circ r$ is a retraction in $\pmb{C}_{\Smyth}$.

\begin{lem}
  \label{lemma:qretr:surj}
  Every quasi-retraction $r : X \to Y$ onto a $T_0$ space $Y$ is
  surjective.  More precisely, if $\qs$ is a matching quasi-section,
  then every element $y \in Y$ is of the form $r (x)$ for some $x \in
  \qs (y)$.
\end{lem}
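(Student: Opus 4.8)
The plan is to read off surjectivity directly from the defining equation of a quasi-section, $\Smyth r (\qs (y)) = \upc y$, by expanding the definition $\Smyth r (Q) = \upc \{r (x) \mid x \in Q\}$ and examining the two inclusions separately, in the specialization preorder of $Y$.

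First I would observe that $\qs (y)$ is non-empty: it lies in $\SV (X) = \Smyth (X)$, whose elements are by definition non-empty compact saturated sets (alternatively, if $\qs (y)$ were empty then $\Smyth r (\qs (y)) = \emptyset \neq \upc y$, since $y \in \upc y$). Next, from the inclusion $\upc y \subseteq \Smyth r (\qs (y)) = \upc \{r (x) \mid x \in \qs (y)\}$ together with $y \in \upc y$, there is some $x_0 \in \qs (y)$ with $r (x_0) \leq y$. Conversely, the inclusion $\Smyth r (\qs (y)) \subseteq \upc y$ says that $r (x) \in \upc y$, i.e.\ $y \leq r (x)$, for every $x \in \qs (y)$; applied to $x = x_0$ this gives $y \leq r (x_0)$.

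Combining the two comparisons, $r (x_0) \leq y$ and $y \leq r (x_0)$, and since $Y$ is $T_0$ its specialization preorder is antisymmetric, so $r (x_0) = y$. As $x_0 \in \qs (y)$, this proves the refined statement, and in particular $r$ is surjective onto $Y$.

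There is essentially no obstacle here: the argument is a direct unfolding of the commuting triangle in~\eqref{eq:qretr}. The only points deserving care are that all upward closures must be interpreted with respect to the specialization preorder of $Y$, and that $T_0$-ness is precisely what converts the two comparisons $r (x_0) \leq y \leq r (x_0)$ into an equality — without it one would only obtain that $r (x_0)$ and $y$ are topologically indistinguishable, and surjectivity could fail.
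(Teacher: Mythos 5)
Your proof is correct and follows essentially the same route as the paper's: you unfold $\Smyth r (\qs (y)) = \upc y$ in both directions to get $r (x_0) \leq y$ and $y \leq r (x_0)$ for some $x_0 \in \qs (y)$, then apply antisymmetry of the specialization order. Your explicit remarks on non-emptiness of $\qs (y)$ and on exactly where $T_0$-ness is used are accurate but not a departure from the paper's argument.
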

\begin{proof}
  For every $y \in Y$, $\upc y = \Smyth r (\qs (y))$.  Since $y \in
  \Smyth r (\qs (y))$, $r (x) \leq y$ for some $x \in \qs (y)$.  But
  $r (x)$ is then in $\Smyth r (\qs (y)) = \upc y$, so $y \leq r (x)$.
  Therefore $y = r (x)$.
\end{proof}

The following is reminiscent of the fact that every retract of a
stably compact space is again stably compact \cite[Proposition, bottom
of p.153, and subsequent discussion]{Lawson:versatile}: we shall show
that any $T_0$ quasi-retract of a stably compact space is stably
compact.  We start with compactness.
\begin{lem}
  \label{lemma:qretr:comp}
  Every $T_0$ quasi-retract $Y$ of a compact space $Y$ is compact.
\end{lem}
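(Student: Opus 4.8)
The plan is to reduce this to the elementary fact that a continuous image of a compact space is compact, so that the real content has already been packaged into Lemma~\ref{lemma:qretr:surj}. (I read the statement as saying that a $T_0$ quasi-retract $Y$ of a \emph{compact} space $X$ is compact; the letter $Y$ appears to have been written twice by a slip.)

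First I would fix a quasi-retraction $r : X \to Y$ together with a matching quasi-section $\qs : Y \to \SV (X)$, so that $\Smyth r (\qs (y)) = \upc y$ for every $y \in Y$. Since $Y$ is $T_0$, Lemma~\ref{lemma:qretr:surj} applies: it yields that $r$ is surjective, every $y \in Y$ being of the form $r (x)$ for some $x \in \qs (y)$. This is the only point at which the quasi-section intervenes.

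Then I would run the standard argument. Given an arbitrary open cover ${(V_i)}_{i \in I}$ of $Y$, the family ${(r^{-1} (V_i))}_{i \in I}$ is an open cover of $X$ by continuity of $r$; by compactness of $X$ it admits a finite subcover $r^{-1} (V_{i_1}), \ldots, r^{-1} (V_{i_n})$; and since $r$ is surjective, $V_{i_1}, \ldots, V_{i_n}$ already cover $Y = r (X)$. Hence every open cover of $Y$ has a finite subcover, i.e., $Y$ is compact.

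There is essentially no obstacle here beyond correctly invoking surjectivity: the subtlety distinguishing quasi-retractions from ordinary retractions has been absorbed into Lemma~\ref{lemma:qretr:surj}, whose proof in turn used only the commutation $\Smyth r \circ \qs = \eta_Y$ and the $T_0$ separation of $Y$. The genuinely delicate parts of stable compactness---soberness, local compactness and coherence of $Y$---are what the subsequent lemmas will have to address.
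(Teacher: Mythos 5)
Your proof is correct and is exactly the paper's argument: the paper's own proof is the two-line observation that the continuous image of a compact space is compact, combined with the surjectivity supplied by Lemma~\ref{lemma:qretr:surj}. (You are also right that the second ``$Y$'' in the statement is a typo for ``$X$''.)
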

\begin{proof}
  The image of a compact set by a continuous map is compact.  Now
  apply Lemma~\ref{lemma:qretr:surj}.
\end{proof}

\begin{lem}
  \label{lemma:qretr:wf}
  Any quasi-retract $Y$ of a well-filtered space $X$ is well-filtered.
\end{lem}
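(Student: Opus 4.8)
The plan is to reduce well-filteredness of $Y$ to well-filteredness of $X$, by transporting filtered families of compact saturated subsets from $Y$ to $X$ through the extension $\qs^\dagger : \SV (Y) \to \SV (X)$ of a quasi-section $\qs$, and then pushing the conclusion back along $\Smyth r$. The one ingredient I would set up first is the identity $\Smyth r \circ \qs^\dagger = \identity {\SV (Y)}$. This is immediate from the monad laws, since $\Smyth r \circ \qs = \eta_Y$ (the defining property of a quasi-section) and $\Smyth r = (\eta_Y \circ r)^\dagger$, so $\Smyth r \circ \qs^\dagger = (\Smyth r \circ \qs)^\dagger = \eta_Y^\dagger = \identity {\SV (Y)}$; one can also just note that $\Smyth r (\qs^\dagger (Q))$ is the upward closure of $\bigcup_{y \in Q} \{r (x) \mid x \in \qs (y)\}$, which equals $\bigcup_{y \in Q} \Smyth r (\qs (y)) = \bigcup_{y \in Q} \upc y = Q$, using that $Q$ is saturated. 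The consequence I actually use is that $r$ maps $\qs^\dagger (Q)$ into $Q$, for every $Q \in \SV (Y)$: indeed $\{r (x) \mid x \in \qs^\dagger (Q)\} \subseteq \upc \{r (x) \mid x \in \qs^\dagger (Q)\} = \Smyth r (\qs^\dagger (Q)) = Q$.

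Now I would take an open subset $U$ of $Y$ and a filtered family ${(Q_i)}_{i \in I}$ of compact saturated subsets of $Y$ with $\bigcap_{i \in I}^\downarrow Q_i \subseteq U$. If some $Q_i$ is empty, then $Q_i \subseteq U$ and we are done, so I may assume every $Q_i$ lies in $\SV (Y)$. Form the family ${(\qs^\dagger (Q_i))}_{i \in I}$ in $\SV (X)$; it is filtered because $\qs^\dagger$ is monotonic for the specialization preorders, which are $\supseteq$ on both Smyth powerspaces. The key inclusion is $\bigcap_{i \in I}^\downarrow \qs^\dagger (Q_i) \subseteq r^{-1} (U)$: any $x$ in this intersection satisfies $r (x) \in \{r (z) \mid z \in \qs^\dagger (Q_i)\} \subseteq Q_i$ for every $i \in I$, hence $r (x) \in \bigcap_{i \in I}^\downarrow Q_i \subseteq U$, i.e.\ $x \in r^{-1} (U)$. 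Since $r$ is continuous, $r^{-1} (U)$ is open in $X$, so well-filteredness of $X$ gives $\qs^\dagger (Q_i) \subseteq r^{-1} (U)$ for some $i \in I$. Then $r$ maps $\qs^\dagger (Q_i)$ into $U$, and since $U$ is upward closed, $Q_i = \Smyth r (\qs^\dagger (Q_i)) = \upc \{r (x) \mid x \in \qs^\dagger (Q_i)\} \subseteq U$, as required.

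The one point where a naive attempt would stall is the temptation to relate $\qs^\dagger (\bigcap_i Q_i)$ to $\bigcap_i \qs^\dagger (Q_i)$: there is no reason for $\qs^\dagger$ to commute with filtered intersections, since the Smyth powerspaces here carry only the upper Vietoris topology, not the Scott topology of $\supseteq$. The resolution, as above, is to bound $\bigcap_i \qs^\dagger (Q_i)$ by $r^{-1} (U)$ directly, using only the pointwise inclusions $\{r (z) \mid z \in \qs^\dagger (Q_i)\} \subseteq Q_i$. Everything else — continuity and monotonicity of $\qs^\dagger$, the identity $\Smyth r \circ \qs^\dagger = \identity {\SV (Y)}$, and the fact that open sets are upward closed — is routine manipulation with the Smyth monad already set up in Section~\ref{sec:prelim}.
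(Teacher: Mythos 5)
Your proof is correct and follows essentially the same route as the paper's: transport the filtered family through $\qs^\dagger$, show its intersection lies in $r^{-1}(U)$ by the pointwise fact that $r$ maps $\qs^\dagger(Q_i)$ into $Q_i$, apply well-filteredness of $X$, and push the conclusion back to $Y$. The only cosmetic difference is that you package the pointwise inclusions via the monad identity $\Smyth r \circ \qs^\dagger = \identity{\SV(Y)}$ (which the paper only invokes in the subsequent coherence lemma), whereas the paper argues elementwise with $x \in \qs(y_i)$ and $\Smyth r(\qs(y_i)) = \upc y_i$.
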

\begin{proof}
  Let $r : X \to Y$ be the quasi-retraction, with quasi-section $\qs :
  Y \to \SV (X)$.

  Let ${(Q_i)}_{i \in I}$ be a filtered family of compact saturated
  subsets of $Y$, and assume that $\bigcap_{i \in I}^\downarrow Q_i
  \subseteq V$, where $V$ is open in $Y$.  Let $Q'_i = \qs^\dagger
  (Q_i)$.  This is compact saturated, and forms a directed family,
  since $\qs^\dagger$ is monotonic.  We claim that $\bigcap_{i \in I}
  Q'_i \subseteq r^{-1} (V)$.  Indeed, every $x \in \bigcap_{i \in I}
  Q'_i$ is such that, for every $i \in I$, there is a $y_i \in Q_i$
  such that $x \in \qs (y_i)$; then $r (x) \in \Smyth r (\qs (y_i)) =
  \upc {y_i}$, so $r (x) \in Q_i$, for every $i \in I$.  Since
  $\bigcap_{i \in I}^\downarrow Q_i \subseteq V$, $r (x)$ is in $V$,
  whence the claim.

  Since $X$ is well-filtered, $Q'_i \subseteq r^{-1} (V)$ for some $i
  \in I$.  Then, for every $y \in Q_i$, $\qs (y) \subseteq \qs^\dagger
  (Q_i) = Q'_i \subseteq r^{-1} (V)$, so $y \in \Smyth r (\qs (y))
  \subseteq \Smyth r (r^{-1} (V)) \subseteq V$.  So $Q_i \subseteq V$.
\end{proof}

\begin{lem}
  \label{lemma:qretr:coh}
  Any $T_0$ quasi-retract $Y$ of a coherent space $X$ is coherent.
\end{lem}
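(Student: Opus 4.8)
The plan is to transport coherence from $X$ to $Y$ along the quasi-retraction, exactly as in the coherence part of the proof of Theorem~\ref{thm:qrb:scomp}. Fix a quasi-retraction $r : X \to Y$ with quasi-section $\qs : Y \to \SV(X)$, so that $\Smyth r(\qs(y)) = \upc y$ for every $y \in Y$; here I use that $\SV$ is a monad on all topological spaces, so that $\qs$ has an extension $\qs^\dagger : \SV(Y) \to \SV(X)$, $\qs^\dagger(Q) = \bigcup_{y \in Q}\qs(y)$, and that $\Smyth r : \SV(X) \to \SV(Y)$ sends $Q$ to $\upc\{r(x) \mid x \in Q\}$, both preserving non-empty compact saturated sets. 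Given compact saturated subsets $Q_1, Q_2$ of $Y$, the intersection $Q_1 \cap Q_2$ is automatically saturated, so only compactness is at issue; this is clear if $Q_1$ or $Q_2$ is empty, so I would assume both non-empty and set $Q'_i = \qs^\dagger(Q_i) \in \Smyth(X)$. Since $X$ is coherent, $Q'_1 \cap Q'_2$ is compact saturated in $X$, and hence $K := \upc\{r(x) \mid x \in Q'_1 \cap Q'_2\}$ --- the saturation of the image under $r$ of the compact set $Q'_1 \cap Q'_2$ --- is compact saturated in $Y$.

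The crux is then to prove that $K = Q_1 \cap Q_2$. For the inclusion $K \subseteq Q_1 \cap Q_2$: given $z \in K$, choose $x \in Q'_1 \cap Q'_2$ with $r(x) \leq z$; from $x \in Q'_1 = \bigcup_{y' \in Q_1}\qs(y')$ one gets $x \in \qs(y_1)$ for some $y_1 \in Q_1$, so $r(x) \in \Smyth r(\qs(y_1)) = \upc y_1 \subseteq Q_1$, and since $z \geq r(x)$ and $Q_1$ is upward closed, $z \in Q_1$; symmetrically $z \in Q_2$. For $Q_1 \cap Q_2 \subseteq K$: given $y \in Q_1 \cap Q_2$, from $y \in \upc y = \Smyth r(\qs(y))$ there is an $x \in \qs(y)$ with $r(x) \leq y$ (Lemma~\ref{lemma:qretr:surj} even gives $y = r(x)$, using that $Y$ is $T_0$); since $y \in Q_1$ and $y \in Q_2$ one has $\qs(y) \subseteq \qs^\dagger(Q_1) = Q'_1$ and $\qs(y) \subseteq Q'_2$, hence $x \in Q'_1 \cap Q'_2$ and $y \in \upc r(x) \subseteq K$. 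Thus $Q_1 \cap Q_2 = K$ is compact saturated, and $Y$ is coherent.

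I expect the substantive work to be entirely in the bookkeeping around the monad: checking that $\qs^\dagger(Q_i)$ is genuinely a non-empty compact saturated subset of $X$ (so that $\qs^\dagger$ restricts as claimed and coherence of $X$ can be invoked), and that $\Smyth r$ carries $Q'_1 \cap Q'_2$ back to a compact saturated subset of $Y$; both are consequences of Schalk's results on the Smyth powerspace monad. The set-theoretic identity $K = Q_1 \cap Q_2$ itself is just a diagram chase through the defining equation $\Smyth r \circ \qs = \eta_Y$ of the quasi-section together with the upward-closedness of the $Q_i$, and the only pitfall I see is the handling of empty sets, which is why I would reduce at the outset to the case where $Q_1$ and $Q_2$ are non-empty.
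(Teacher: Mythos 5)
Your proof is correct and follows essentially the same route as the paper's: pull $Q_1,Q_2$ back along $\qs^\dagger$, intersect in $X$ using coherence there, push forward by $\Smyth r$, and verify that the image is exactly $Q_1\cap Q_2$ (the paper dispatches your first inclusion by invoking the monad identity $\Smyth r\circ\qs^\dagger=\identity{\SV(Y)}$ rather than by element-chasing, but that is a cosmetic difference). Your explicit reduction to non-empty $Q_1,Q_2$ is a reasonable extra precaution given that $\Smyth(X)$ consists of non-empty compact saturated sets.
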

\begin{proof}
  Let $r : X \to Y$ be the quasi-retraction, with quasi-section $\qs :
  Y \to \SV (X)$.

  We use the fact that $\Smyth r \circ \qs^\dagger$ is the identity on
  $\SV (Y)$.  This is a well-known identity on monads: by the monad
  law $(g^\dagger \circ h)^\dagger = g^\dagger \circ h^\dagger$, and
  since $\Smyth r = (\eta_Y \circ r)^\dagger$, $\Smyth r \circ
  \qs^\dagger = (\Smyth r \circ \qs)^\dagger$, and this is
  $\eta_Y^\dagger = \identity {\SV (Y)}$ by the first monad law.

  Let $Q_1$, $Q_2$ be two compact saturated subsets of $Y$.  Then
  $\qs^\dagger (Q_1) \cap \qs^\dagger (Q_2)$ is compact saturated in
  $X$, using the fact that $X$ is coherent.  So $\Smyth r (\qs^\dagger
  (Q_1) \cap \qs^\dagger (Q_2))$ is compact saturated in $Y$.  We
  claim that $\Smyth r (\qs^\dagger (Q_1) \cap \qs^\dagger (Q_2)) =
  Q_1 \cap Q_2$, which will finish the proof.  In one direction, every
  element $y$ of $Q_1 \cap Q_2$ is in $\Smyth r (\qs^\dagger (Q_1)
  \cap \qs^\dagger (Q_2))$: by Lemma~\ref{lemma:qretr:surj}, pick $x
  \in \qs (y)$ such that $y = r (x)$, and observe that $x \in
  \qs^\dagger (Q_1)$ (indeed $x \in \qs (y)$, where $y \in Q_1$) and
  $x \in \qs^\dagger (Q_2)$.  In the other direction, $\Smyth r
  (\qs^\dagger (Q_1) \cap \qs^\dagger (Q_2)) \subseteq \Smyth r
  (\qs^\dagger (Q_1)) \cap \Smyth r (\qs^\dagger (Q_2)) = Q_1 \cap
  Q_2$, since $\Smyth r \circ \qs^\dagger$ is the identity on $\Smyth
  (Y)$.
\end{proof}

\begin{lem}
  \label{lemma:qretr:lcomp}
  Any quasi-retract $Y$ of a locally compact space $X$ is locally
  compact.
\end{lem}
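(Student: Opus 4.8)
The plan is to verify the definition of local compactness for $Y$ directly. Fix an open subset $V$ of $Y$ and a point $y \in V$; I must exhibit a compact saturated subset $Q$ of $Y$ with $y \in \interior Q$ and $Q \subseteq V$. Write $r : X \to Y$ for the quasi-retraction and $\qs : Y \to \SV (X)$ for a matching quasi-section. The first observation is that $\qs (y) \subseteq r^{-1} (V)$: indeed $\Smyth r (\qs (y)) = \upc y$, and since $V$ is open, hence upward closed, $\upc y \subseteq V$, so every $x \in \qs (y)$ satisfies $r (x) \in \upc y \subseteq V$. Thus $\qs (y)$ is a non-empty compact saturated subset of $X$ contained in the open set $r^{-1} (V)$.

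Next I would invoke local compactness of $X$ in its interpolation form (recalled in the preliminaries): there is a compact saturated subset $Q_1$ of $X$ with $\qs (y) \subseteq \interior {Q_1} \subseteq Q_1 \subseteq r^{-1} (V)$. Set $W = \interior {Q_1}$ and $Q = \Smyth r (Q_1)$. Then $Q \in \SV (Y)$, so $Q$ is compact, saturated and non-empty, and $Q = \upc \{r (x) \mid x \in Q_1\} \subseteq V$ because $Q_1 \subseteq r^{-1} (V)$ and $V$ is upward closed. So $Q$ already satisfies two of the three requirements.

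It remains to show $y \in \interior Q$, and this is the one step requiring a little care: the right neighbourhood of $y$ to consider is $\qs^{-1} (\Box W)$, the preimage under the quasi-section---\emph{not} under $r$---of the subbasic open $\Box W$ of $\SV (X)$. It is open since $\qs$ is continuous, and it contains $y$ because $\qs (y) \subseteq W$. Moreover $\qs^{-1} (\Box W) \subseteq Q$: for any $y' \in \qs^{-1} (\Box W)$ we have $\qs (y') \subseteq W \subseteq Q_1$, so, since $\Smyth r$ is monotone with respect to inclusion (it is $Q' \mapsto \upc \{r (x) \mid x \in Q'\}$), $\upc y' = \Smyth r (\qs (y')) \subseteq \Smyth r (Q_1) = Q$, whence $y' \in Q$. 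Hence $\qs^{-1} (\Box W)$ is an open neighbourhood of $y$ inside $Q$, which gives $y \in \interior Q$ and finishes the argument. I do not expect any real obstacle here; the only thing to get right is to transport neighbourhoods through $\qs$ (using the identity $\Smyth r \circ \qs = \eta_Y$) instead of attempting to pull $V$ back along $r$. As with Lemma~\ref{lemma:qretr:wf}, no $T_0$ hypothesis is needed.
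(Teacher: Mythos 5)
Your proof is correct and follows essentially the same route as the paper's: pull $V$ back along $r$, interpolate in $X$ around the compact saturated set $\qs(y)$, push the interpolant $Q_1$ forward via $\Smyth r$, and exhibit $\qs^{-1}(\Box\,\interior{Q_1})$ as the open neighbourhood of $y$ inside $\Smyth r(Q_1)$. No gaps; the paper's argument is word-for-word the same, down to the observation that the neighbourhood must be transported through $\qs$ rather than $r$.
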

\begin{proof}
  Let $r : X \to Y$ be the quasi-retraction, with quasi-section $\qs :
  Y \to \SV (X)$.  Let $y$ be any point of $Y$, and $V$ be an open
  neighborhood of $y$.  Since $y \in V$, $\Smyth r (\qs (y)) = \upc y
  \subseteq V$, so $\qs (y) \subseteq r^{-1} (V)$.  Observe that $\qs
  (y)$ is compact saturated and $r^{-1} (V)$ is open in $X$.  Use
  interpolation in the locally compact space $X$: there is a compact
  saturated subset $Q_1$ such that $\qs (y) \subseteq \interior {Q_1}
  \subseteq Q_1 \subseteq r^{-1} (V)$.

  In particular, $\qs (y) \in \Box \interior {Q_1}$, so $y$ is in the
  open subset $\qs^{-1} (\Box \interior {Q_1})$.  The latter is
  included in the compact subset $\Smyth r (Q_1)$, since every element
  $y'$ of it is such that $\qs (y') \subseteq \interior {Q_1}
  \subseteq Q_1$, hence $\upc y' = \Smyth r (\qs (y')) \subseteq
  \Smyth r (Q_1)$.  In particular, $y$ is in the interior of $\Smyth r
  (Q_1)$.  Finally, since $Q_1 \subseteq r^{-1} (V)$, $\Smyth r (Q_1)
  \subseteq V$.
\end{proof}

\begin{prop}
  \label{prop:qretr:scomp}
  Every $T_0$ quasi-retract $Y$ of a stably compact space $X$ is
  stably compact.
\end{prop}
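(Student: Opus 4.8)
The plan is simply to decompose the property ``stably compact'' into its four defining ingredients---compactness, local compactness, coherence, and sobriety---and to feed each of them into one of the four preceding lemmas. Recall that, by the discussion in Section~\ref{sec:prelim}, for a $T_0$ locally compact space sobriety is equivalent to well-filteredness (\cite[Theorem~II-1.21]{GHKLMS:contlatt}); this equivalence is what lets us route ``sober'' through Lemma~\ref{lemma:qretr:wf}, whose conclusion is phrased in terms of well-filteredness.

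First I would observe that, $X$ being stably compact, it is in particular compact, locally compact, coherent, and sober; since it is locally compact and sober, it is also well-filtered. Now let $r : X \to Y$ be the quasi-retraction onto the $T_0$ space $Y$. Lemma~\ref{lemma:qretr:comp} gives that $Y$ is compact, Lemma~\ref{lemma:qretr:lcomp} that $Y$ is locally compact, Lemma~\ref{lemma:qretr:coh} that $Y$ is coherent, and Lemma~\ref{lemma:qretr:wf} that $Y$ is well-filtered. Finally, since $Y$ is $T_0$, locally compact, and well-filtered, the equivalence recalled above yields that $Y$ is sober. Assembling the four conclusions, $Y$ is compact, locally compact, coherent, and sober, i.e., stably compact, as required.

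The main (and essentially only) point requiring care is the last step, the passage from well-filteredness back to sobriety: this is precisely where the hypothesis that $Y$ be $T_0$ is used, together with the local compactness of $Y$ just established, and it is the reason the statement is about $T_0$ quasi-retracts. All the genuine content has already been discharged in Lemmas~\ref{lemma:qretr:comp}, \ref{lemma:qretr:wf}, \ref{lemma:qretr:coh}, and \ref{lemma:qretr:lcomp}, so no further obstacle is expected here.
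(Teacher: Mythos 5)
Your proposal is correct and follows exactly the paper's own route: it invokes Lemmas~\ref{lemma:qretr:comp}, \ref{lemma:qretr:wf}, \ref{lemma:qretr:coh}, and \ref{lemma:qretr:lcomp} to get compactness, well-filteredness, coherence, and local compactness of $Y$, and then recovers sobriety from $T_0$ plus well-filteredness plus local compactness via \cite[Theorem~II-1.21]{GHKLMS:contlatt}. Nothing is missing.
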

\begin{proof}
  $Y$ is $T_0$ by assumption, and locally compact, well-filtered,
  compact, and coherent by Lemma~\ref{lemma:qretr:comp},
  Lemma~\ref{lemma:qretr:wf}, Lemma~\ref{lemma:qretr:coh}, and
  Lemma~\ref{lemma:qretr:lcomp}.  In the presence of local
  compactness, it is equivalent to require sobriety or to require the
  space to be $T_0$ and well-filtered
  \cite[Theorem~II-1.21]{GHKLMS:contlatt}.
\end{proof}

Call a space $X$ \emph{locally finitary} if and only if for every $x
\in X$ and every open neighborhood $U$ of $x$, there is a finitary
compact $\upc E$ such that $x \in \interior {\upc E}$ and $\upc E
\subseteq U$.  This is the same definition as for local compactness,
replacing compact saturated subsets by finitary compacts.  The
interpolation property of locally compact spaces refines to the
following: In a locally finitary space $X$, if $Q$ is compact
saturated and included in some open subset $U$, then there is a
finitary compact $\upc E$ such that $Q \subseteq \interior {\upc E}$
and $\upc E \subseteq U$.  The proof is as for interpolation in
locally compact spaces: for each $x \in Q$, pick a finitary compact
$\upc E_x$ such that $x \in \interior {\upc E_x}$ and $\upc E_x
\subseteq U$.  ${(\interior {\upc E_x})}_{x \in Q}$ is an open cover
of $Q$.  Since $Q$ is compact, it has a finite subcover $\upc E_1$,
\ldots, $\upc E_n$.  Then take $E = E_1 \cup \ldots \cup E_n$.

We observe right away the following analog of
Lemma~\ref{lemma:qretr:lcomp}.
\begin{lem}
  \label{lemma:qretr:lfin}
  Any quasi-retract $Y$ of a locally finitary space $X$ is locally
  finitary.
\end{lem}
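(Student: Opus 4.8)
The plan is to follow the proof of Lemma~\ref{lemma:qretr:lcomp} almost verbatim, the only genuinely new ingredient being the observation that $\Smyth r$ carries finitary compacts of $X$ to finitary compacts of $Y$. Indeed, if $\upc E \in \Fin (X)$ with $E$ finite, then since $r$ is monotonic, every $x \in \upc E$ satisfies $r (e) \leq r (x)$ for some $e \in E$; hence $\Smyth r (\upc E) = \upc \{r (x) \mid x \in \upc E\} = \upc r (E)$, and $r (E)$ is finite, so $\Smyth r (\upc E) \in \Fin (Y)$. Also, for any open $V$ of $Y$, $\Smyth r (r^{-1} (V)) \subseteq V$, since $V$ is upward closed.

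With that in hand, let $r : X \to Y$ be the quasi-retraction with quasi-section $\qs : Y \to \SV (X)$, fix $y \in Y$ and an open neighborhood $V$ of $y$. Exactly as in Lemma~\ref{lemma:qretr:lcomp}, from $\Smyth r (\qs (y)) = \upc y \subseteq V$ one gets $\qs (y) \subseteq r^{-1} (V)$, where $\qs (y)$ is compact saturated in $X$ and $r^{-1} (V)$ is open. Here I would invoke the interpolation property for locally finitary spaces established just above the statement: there is a finitary compact $\upc E$ with $\qs (y) \subseteq \interior {\upc E} \subseteq \upc E \subseteq r^{-1} (V)$.

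It remains to verify that $\upc r (E)$ is the required finitary-compact neighborhood of $y$ inside $V$. Since $\qs (y)$ is compact saturated and $\qs (y) \subseteq \interior {\upc E}$, we have $\qs (y) \in \Box \interior {\upc E}$, so $y$ lies in the open set $\qs^{-1} (\Box \interior {\upc E})$. For every $y'$ in this set, $\qs (y') \subseteq \interior {\upc E} \subseteq \upc E$, whence $\upc y' = \Smyth r (\qs (y')) \subseteq \Smyth r (\upc E) = \upc r (E)$; thus $\qs^{-1} (\Box \interior {\upc E}) \subseteq \upc r (E)$, and therefore $y \in \interior {\upc r (E)}$. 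Finally $\upc E \subseteq r^{-1} (V)$ gives $\upc r (E) = \Smyth r (\upc E) \subseteq \Smyth r (r^{-1} (V)) \subseteq V$. I do not expect any real obstacle: the argument is bookkeeping once one notes that $\Smyth r$ preserves finitariness, which is immediate from monotonicity of $r$ and finiteness of the generating set.
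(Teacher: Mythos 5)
Your proof is correct and follows exactly the route the paper takes: interpolate between $\qs(y)$ and $r^{-1}(V)$ using the finitary interpolation property stated just before the lemma, then rerun the argument of Lemma~\ref{lemma:qretr:lcomp}, observing that $\Smyth r(\upc E) = \upc r(E)$ is finitary compact. Nothing is missing.
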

\begin{proof}
  As in the proof of Lemma~\ref{lemma:qretr:lcomp}, let $y \in Y$ and
  $V$ be an open neighborhood of $y$.  By interpolation between $Q =
  \qs (y)$ and $U = r^{-1} (V)$ in the locally finitary space $X$, we
  find a finitary compact subset $Q_1 = \upc E_1$ of $X$ such that
  $\qs (y) \subseteq \interior {Q_1} \subseteq Q_1 \subseteq r^{-1}
  (V)$.  The rest of the proof is as for
  Lemma~\ref{lemma:qretr:lcomp}, only noticing that $\Smyth r (Q_1) =
  \upc r (E_1)$ is finitary compact.
\end{proof}
The importance of locally finitary spaces lies in the following
result: see Banaschewski \cite{Banaschewski:essn:ext}, or the
equivalence between Items (6) and (11) in Lawson
\cite[Theorem~2]{Lawson:T0:pw:conv}.  See also Isbell
\cite{Isbell:meetcont} for the notion of locally finitary space, up to
change of names.
\begin{prop}
  \label{prop:locfin=qcont}
  The locally finitary sober 
  spaces are exactly the quasi-continuous dcpos in their Scott
  topology.
\end{prop}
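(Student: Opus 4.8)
The plan is to prove Proposition~\ref{prop:locfin=qcont} by a two-way implication, treating each direction as a refinement of the corresponding classical fact for locally compact sober spaces versus continuous dcpos.

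\medskip
\noindent\emph{Quasi-continuous dcpos are locally finitary sober spaces.}  First I would recall that every quasi-continuous dcpo $X$, with its Scott topology, is sober (\cite[Proposition~III-3.7]{GHKLMS:contlatt}).  For the locally finitary condition, fix $x \in X$ and a Scott-open neighborhood $U$ of $x$.  By the properties recalled just before Section~\ref{sec:qrb} (specifically \cite[III-5.7]{GHKLMS:contlatt} applied to the compact saturated set $Q = \upc x$), there is a finitary compact $\upc E$ with $\upc x \subseteq \uuarrow E$ and $\upc E \subseteq U$.  Since $\uuarrow E = \interior{\upc E}$ in a quasi-continuous dcpo (\cite[III-3.6(ii)]{GHKLMS:contlatt}), we get $x \in \interior{\upc E}$ and $\upc E \subseteq U$, which is exactly local finitariness.

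\medskip
\noindent\emph{Locally finitary sober spaces are quasi-continuous dcpos.}  Let $X$ be locally finitary and sober.  Sobriety gives that $X$ is a monotone convergence space, hence a dcpo in its specialization order, and moreover that its topology is coarser than the Scott topology of that order (\cite[Definition~II-3.12]{GHKLMS:contlatt}).  The task is then to show that every open set is actually Scott-open (so the given topology \emph{is} the Scott topology) and that the quasi-continuity condition $\bigcap_{\upc E \cll x} \upc E = \upc x$ holds.  For the first point, I would use local finitariness: given a Scott-open $U$ and a directed family $(x_i)_{i \in I}$ with $\sup_i x_i = x \in U$, pick a finitary compact $\upc E$ with $x \in \interior{\upc E}$ and $\upc E \subseteq U$; because $x \in \interior{\upc E}$, the original open $\interior{\upc E}$ contains $x = \sup_i x_i$, so some $x_i \in \interior{\upc E} \subseteq U$.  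For quasi-continuity, observe that for every open neighborhood $U$ of $x$ there is a finitary compact $\upc E \subseteq U$ with $x \in \interior{\upc E}$, and $x \in \interior{\upc E}$ together with the fact that the topology is the Scott topology gives $\upc E \cll x$ (any directed family whose sup is above $x$ enters the open set $\interior{\upc E}$, hence meets $\upc E$).  Thus the finitary compacts $\upc E$ with $\upc E \cll x$ have intersection contained in $\bigcap_{U \ni x} U = \upc x$ (using $T_0$, which follows from sobriety), and the reverse inclusion is trivial since each such $\upc E$ contains $x$.  Directedness of this family (with respect to $\supseteq$) follows because $X$ is coherent---sober locally compact spaces are coherent---so the intersection of two such finitary compacts is again compact, and by local finitariness one can shrink it to a finitary compact still way-below $x$ and below both.

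\medskip
\noindent The main obstacle I anticipate is the bookkeeping around directedness of $\{\upc E \mid \upc E \cll x\}$ in the second direction: one must genuinely invoke coherence of sober locally compact spaces (not merely local finitariness) to intersect two finitary-compact neighborhoods and then re-apply interpolation in the locally finitary setting to descend to a common finitary-compact lower bound that is still way-below $x$.  The other subtle point is the equivalence "topology $=$ Scott topology", which needs both halves---sober spaces are monotone convergence spaces (giving one inclusion) and the local finitariness argument above (giving the other); the classical proof that continuous posets have this property adapts almost verbatim, which is presumably why the paper simply cites Banaschewski and Lawson rather than reproving it.
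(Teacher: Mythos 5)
The paper offers no proof of this proposition at all: it is imported from the literature (Banaschewski, and the equivalence of items (6) and (11) of Lawson's theorem), so there is nothing to compare your argument against except the standard proofs. Your first direction is correct and is assembled exactly from facts the paper recalls in the preliminaries: sobriety is \cite[III-3.7]{GHKLMS:contlatt}, and local finitariness follows from \cite[III-5.7]{GHKLMS:contlatt} applied to $Q = \upc x$ together with $\uuarrow E = \interior{\upc E}$.

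The converse direction, however, has genuine gaps, all traceable to the absence of Rudin's Lemma. First, your directedness argument only handles finitary compacts $\upc E$ with $x \in \interior{\upc E}$, whereas the definition of quasi-continuity used in the paper demands directedness of the family of \emph{all} $\upc E$ with $\upc E \cll x$; nothing you say excludes an $\upc E \cll x$ that is not a neighborhood of $x$. The missing step is that $\upc E \cll x$ forces $x \in \interior{\upc E}$ (equivalently, that the finitary compact neighborhoods of $x$ are cofinal among the $\upc E \cll x$), and this is proved by applying Rudin's Lemma to the filtered family of finitary compact neighborhoods of $x$ (whose intersection is $\upc x$) after deleting $\upc E$ from each finite set --- it does not follow from coherence plus interpolation. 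Worse, the fact you invoke, that sober locally compact spaces are coherent, is false: adjoin two incomparable points below an infinite antichain; the result is an algebraic, hence sober and locally compact, dcpo in which the intersection of the two principal filters is the antichain itself, which is not compact. Coherence is also unnecessary for the case you do treat, since for two finitary compact neighborhoods of $x$ one may simply apply local finitariness to $x$ inside the open set $\interior{\upc E_1} \cap \interior{\upc E_2}$. Finally, your argument that the given topology is the Scott topology is circular: local finitariness is a property of the \emph{given} opens, so you cannot ``pick a finitary compact $\upc E$ with $x \in \interior{\upc E}$ and $\upc E \subseteq U$'' for a merely Scott-open $U$, and the conclusion you draw (some $x_i$ enters $U$) merely restates that $U$ is Scott-open instead of showing it is open. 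The repair again uses Proposition~\ref{prop:heckmann}: the filtered intersection of the finitary compact neighborhoods of $x$ is $\upc x \subseteq U$, so some such $\upc E$ satisfies $\upc E \subseteq U$, and then $x \in \interior{\upc E} \subseteq U$ exhibits $U$ as a union of given opens.
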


We use this, in particular, in the following proposition.
\begin{prop}
  \label{prop:qretr:qrb}
  Every $T_0$ quasi-retract of an ($\omega$)$\QRB$-domain is an
  ($\omega$)$\QRB$-domain.
\end{prop}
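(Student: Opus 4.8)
The plan is to let $r : X \to Y$ be a quasi-retraction onto a $T_0$ space $Y$, where $X$ is an ($\omega$)$\QRB$-domain with generating family (sequence, in the $\omega$ case) of quasi-deflations ${(\varphi_i)}_{i \in I}$, and with matching quasi-section $\qs : Y \to \SV (X)$. First I would argue that $Y$ is a pointed dcpo: by Corollary~\ref{corl:qrb:qcont} $X$ is quasi-continuous, hence sober and locally compact, so $Y$ is sober by Lemma~\ref{lemma:qretr:wf} (together with local compactness from Lemma~\ref{lemma:qretr:lcomp}, via Theorem~II-1.21 of \cite{GHKLMS:contlatt}), and a sober $T_0$ space is a monotone convergence space, so $Y$ is a dcpo in its specialization order and its topology is contained in the Scott topology. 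It is pointed because $r$ is surjective by Lemma~\ref{lemma:qretr:surj} and monotonic, so $r(\bot_X)$ is the least element of $Y$. Moreover, since every $\QRB$-domain is locally finitary (it is a quasi-continuous dcpo, so Proposition~\ref{prop:locfin=qcont} applies), Lemma~\ref{lemma:qretr:lfin} gives that $Y$ is locally finitary, hence by Proposition~\ref{prop:locfin=qcont} again $Y$ is a quasi-continuous dcpo --- in particular its topology \emph{is} the Scott topology. (We will in fact re-derive quasi-continuity as a by-product of exhibiting the generating family of quasi-deflations.)

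The heart of the proof is to transport the quasi-deflations from $X$ to $Y$. The natural candidate is $\psi_i := \Smyth r \circ \varphi_i^\dagger \circ \qs$, viewed as a map $Y \to \Fin_\sigma (Y)$: given $y \in Y$, $\qs(y) \in \SV(X)$ is compact saturated, $\varphi_i^\dagger(\qs(y)) = \bigcup_{x \in \qs(y)} \varphi_i(x)$ is a finite union of members of $\img \varphi_i$, hence finitary compact in $X$ (Lemma~\ref{lemma:dagger:new}, second clause, since $\img \varphi_i$ is finite), and then $\Smyth r$ of a finitary compact is finitary compact in $Y$. So $\psi_i(y) = \upc\{r(x) \mid x \in \varphi_i^\dagger(\qs(y))\} \in \Fin(Y)$. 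Continuity of $\psi_i$ into $\Fin_\V(Y)$ is immediate since it is a composite of $\SV$-continuous maps and $\qs$; continuity into $\Fin_\sigma(Y)$ then follows from Corollary~\ref{corl:fin:cont:Fin} because $Y$, being sober, is a (quasi) monotone convergence space. That $\img \psi_i$ is finite follows from $\psi_i(y)$ being determined, up to $\Smyth r$, by $\varphi_i^\dagger(\qs(y))$, which ranges over the finitely many finite unions of elements of the finite set $\img \varphi_i$. Directedness of ${(\psi_i)}_{i \in I}$ (and the non-decreasing sequence property in the $\omega$ case) is inherited from that of ${(\varphi_i)}_{i \in I}$: if $\varphi_i(x) \supseteq \varphi_j(x)$ for all $x$, then $\varphi_i^\dagger \geq \varphi_j^\dagger$ pointwise on $\SV(X)$, hence $\psi_i(y) \supseteq \psi_j(y)$ for all $y$, using that $\Smyth r$ is monotonic.

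The two remaining points are (i) $y \in \psi_i(y)$ and (ii) $\upc y = \bigcap_{i \in I}^\downarrow \psi_i(y)$. For (i): by Lemma~\ref{lemma:qretr:surj}, choose $x \in \qs(y)$ with $r(x) = y$; since $x \in \varphi_i(x) \subseteq \varphi_i^\dagger(\qs(y))$, we get $y = r(x) \in \Smyth r(\varphi_i^\dagger(\qs(y))) = \psi_i(y)$. For (ii), the inclusion $\upc y \subseteq \bigcap_i \psi_i(y)$ is (i) plus saturation. For the converse, I would mimic the proof of Lemma~\ref{lemma:psi*:comp}: let $V$ be any open neighbourhood of $y$ in $Y$; then $\Smyth r(\qs(y)) = \upc y \subseteq V$, so $\qs(y) \subseteq r^{-1}(V)$, an open subset of $X$; by Lemma~\ref{lemma:psi*} applied to the $\QRB$-domain $X$, $r^{-1}(V) = \bigcup_{i}^\uparrow \varphi_i^{-1}(\Box r^{-1}(V))$, and $\qs(y)$ is compact, so $\qs(y) \subseteq \varphi_i^{-1}(\Box r^{-1}(V))$ for some $i \in I$, i.e.\ $\varphi_i(x) \subseteq r^{-1}(V)$ for every $x \in \qs(y)$, whence $\varphi_i^\dagger(\qs(y)) \subseteq r^{-1}(V)$ and therefore $\psi_i(y) = \Smyth r(\varphi_i^\dagger(\qs(y))) \subseteq \Smyth r(r^{-1}(V)) \subseteq V$. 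Thus every open neighbourhood of $y$ contains some $\psi_i(y)$, and since these form a filtered family of compact saturated sets in the well-filtered space $Y$, $\bigcap_i^\downarrow \psi_i(y) \subseteq V$; as $V$ was an arbitrary open neighbourhood and $\bigcap_i^\downarrow \psi_i(y)$ is saturated, $\bigcap_i^\downarrow \psi_i(y) \subseteq \upc y$.

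I expect the main obstacle to be bookkeeping around the \emph{topology on $Y$}: one must be careful that ``continuous'' for $\psi_i : Y \to \Fin_\sigma(Y)$ is continuity with respect to the \emph{given} topology of $Y$ on the source and the Scott topology of $\supseteq$ on the target, and that these agree with the Scott-topological picture needed by Definition~\ref{defn:qrb}. The clean way around this is exactly the route above: establish sobriety and local finitarity of $Y$ first (so that $Y$'s own topology \emph{is} its Scott topology and $Y$ is a quasi-continuous dcpo), then quote Corollary~\ref{corl:fin:cont:Fin} for the continuity of $\psi_i$ into $\Fin_\sigma(Y)$, and finally invoke Lemma~\ref{lemma:psi*} and well-filteredness of $Y$ for the generating condition, as above. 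Everything else is routine monad-law and monotonicity manipulation. Keeping the $\omega$ case is automatic: if $I = \nat$ and ${(\varphi_i)}$ is a generating sequence (Proposition~\ref{prop:omega:qrb}), the same $\psi_i$ form a generating sequence, so a $T_0$ quasi-retract of an $\omega\QRB$-domain is an $\omega\QRB$-domain.
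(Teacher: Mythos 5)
Your proposal is correct and follows essentially the same route as the paper: the candidate quasi-deflations $\psi_i = \Smyth r \circ \varphi_i^\dagger \circ \qs$ are exactly the paper's $\widehat\varphi_i$, the preparatory step via local finitarity, Lemma~\ref{lemma:qretr:lfin} and Proposition~\ref{prop:locfin=qcont} is the same, and the generating condition is verified by the same compactness argument through Lemma~\ref{lemma:psi*}. (The appeal to well-filteredness at the very end is superfluous --- once $\psi_i(y)\subseteq V$ for some $i$, the filtered intersection is contained in $V$ outright --- but this is harmless.)
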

\begin{proof}
  Let $X$ be a $\QRB$-domain, $Y$ be a $T_0$ space, $r : X \to Y$ be a
  quasi-retraction, and $\qs : Y \to \SV (X)$ be a matching
  quasi-section.  We first note that $Y$ is stably compact, by
  Proposition~\ref{prop:qretr:scomp}, using the fact that $X$ is
  itself stably compact (Theorem~\ref{thm:qrb:scomp}).  So $Y$ is
  sober. 
  By Proposition~\ref{prop:locfin=qcont}, $X$ is locally finitary, so
  $Y$ is, too, by Lemma~\ref{lemma:qretr:lfin}.  By
  Proposition~\ref{prop:locfin=qcont} again, $Y$ is a quasi-continuous
  dcpo, and its topology is the Scott topology.

  Note that $Y$ is pointed.  Letting $\bot$ be the least element of
  $X$, $r (\bot)$ is the least element of $Y$: for every $y \in Y$,
  pick some $x \in X$ such that $r (x) = y$ by
  Lemma~\ref{lemma:qretr:surj}, then $r (\bot) \leq r (x) = y$.

  For each quasi-deflation $\varphi$ on $X$, $\varphi$ is continuous
  from $X$ to $\Fin_\V (X)$: indeed it is continuous from $X$ to
  $\Fin_\sigma (X)$ and $\Fin_\sigma (X) = \Fin_\V (X)$ by
  Corollary~\ref{corl:BoxU}, since $X$ is quasi-continuous
  (Corollary~\ref{corl:qrb:qcont}).  So $\varphi^\dagger$ makes sense.
  Let $\widehat\varphi : Y \to \Fin_\V (Y)$ map $y$ to $\Smyth r
  (\varphi^\dagger (\qs (y)))$; $\widehat\varphi (y)$ is in $\Fin (Y)$
  because $\varphi^\dagger (\qs (y)) \in \Fin (X)$
  (Lemma~\ref{lemma:dagger:new}, second part), and $\Smyth r (\upc E)
  = \upc \{r (z) \mid z \in E\}$ is finitary compact for every finite
  set $E$.

  Explicitly, $\widehat\varphi (y) = \upc \{r (z) \mid \exists x \in
  \qs (y) \cdot z \in \varphi (x)\}$.

  For every open subset $V$ of $Y$, $\widehat\varphi^{-1} (\Box V)$ is
  the set of all $y \in Y$ such that for every $x \in \qs (y)$, for
  every $z \in \varphi (x)$, $r (z) \in V$.  I.e., for every $x \in
  \qs (y)$, $\varphi (x) \subseteq r^{-1} (V)$, that is, $\qs (y)
  \subseteq \varphi^{-1} (\Box r^{-1} (V))$.  So $\widehat\varphi^{-1}
  (\Box V) = \qs^{-1} (\Box \varphi^{-1} (\Box r^{-1} (V)))$.  Since
  the latter is open, and the sets $\Box V$ form a subbase of the
  topology of $\SV (Y)$, $\widehat\varphi$ is continuous from $Y$ to
  $\Fin_\V (Y)$.  Since $Y$ is a quasi-continuous dcpo and its
  topology is Scott, by Corollary~\ref{corl:BoxU} $\Fin_\sigma (Y) =
  \Fin_\V (Y)$, so $\widehat\varphi$ is also Scott-continuous from $Y$
  to $\Fin (Y)$.  (Alternatively, apply Corollary~\ref{corl:fin:cont:Fin}.)

  We claim that $y \in \widehat\varphi (y)$ for every $y \in Y$.
  Since $\Smyth r (\qs (y)) = \upc y$, $y \in \Smyth r (\qs (y))$, so
  there is an $x \in \qs (y)$ such that $r (x) \leq y$.  Now $x \in
  \varphi (x)$, so taking $z = x$ in the definition of
  $\widehat\varphi (y)$, $y$ is in $\widehat\varphi (y)$.

  Let now ${(\varphi_i)}_{i \in I}$ be a generating family of
  quasi-deflations on $X$.  Clearly, if $\varphi_i$ is below
  $\varphi_j$, then $\widehat\varphi_i$ is below $\widehat\varphi_j$,
  so ${(\widehat\varphi_i)}_{i \in I}$ is directed.

  It remains to show that $\bigcap_{i \in I}^\downarrow
  \widehat\varphi_i (y) = \upc y$ for every $y \in Y$.  Since $y \in
  \widehat\varphi_i (y)$, it remains to show $\bigcap_{i \in
    I}^\downarrow \widehat\varphi_i (y) \subseteq \upc y$: we show
  that every open $V$ containing $y$ contains $\bigcap_{i \in
    I}^\downarrow \widehat\varphi_i (y)$.  Since $y \in V$ and $\Smyth
  r (\qs (y)) = \upc y$, $\Smyth r (\qs (y)) \subseteq V$, so $\qs (y)
  \in \Smyth r^{-1} (\Box V) = \Box r^{-1} (V)$, i.e., $\qs (y)
  \subseteq r^{-1} (V)$. 
  By Lemma~\ref{lemma:psi*}, $\bigcup_{i \in I}^\uparrow
  \varphi_i^{-1} (\Box r^{-1} (V)) = r^{-1} (V)$.  Since $\qs (y)$ is
  compact, $\qs (y) \subseteq \varphi_i^{-1} (\Box r^{-1} (V))$ for
  some $i \in I$.  So $y$ is in $\qs^{-1} (\Box \varphi_i^{-1} (\Box
  r^{-1} (V)))$, which is equal to $\widehat\varphi_i^{-1} (\Box V)$
  (see above).  It follows that $V$ contains $\widehat\varphi_i (y)$,
  hence $\bigcap_{i \in I}^\downarrow \widehat\varphi_i (y)$.  So $Y$
  is a $\QRB$-domain.

  The case of $\omega\QRB$-domains is similar, where now
  ${(\varphi_i)}_{i \in \nat}$ is a generating {\em sequence\/} of
  quasi-deflations.
\end{proof}

Later, we shall need a refinement of the notion of quasi-retraction,
which is to the latter as projections are to retractions.  Recall that
a {\em projection\/} is a retraction $r : X \to Y$, with section $s$,
such that additionally $s \circ r \leq \identity X$.  Similarly, it is
tempting to define a \emph{quasi-projection} as a quasi-retraction
(with quasi-section $\qs$) such that $x \in \qs (r (x))$ for every $x
\in X$.  If $r$ is a retraction, with section $s$, and we see $r$ as a
quasi-retraction in the canonical way, defining $\qs (y)$ as $\upc s
(y)$, then the quasi-projection condition $x \in \qs (r (x))$ is
equivalent to the projection condition $(s \circ r) (x) \leq x$.


The point $x$ shown in Figure~\ref{fig:qretr} satisfies the condition
$x \in \qs (r (x))$: $x$ is in the gray area $\qs (y)$, where $y = r
(x)$.  However, Lemma~\ref{lemma:qproj:unique} below shows that $r$ is
not a quasi-projection: for this to be the case, the gray area $\qs
(y)$ should fill the whole of $r^{-1} (\upc y)$.

There is no need to invent a new term, though:
Lemma~\ref{lemma:qproj:unique} shows that quasi-projections are
nothing else than proper surjective maps.
A map $r : X \to Y$ is {\em proper\/} if and only if it is continuous,
$\dc r (F)$ is closed in $Y$ for every closed subset $F$ of $X$, and
$r^{-1} (\upc y)$ is compact in $X$ for every element $y$ of $Y$
\cite[Lemma~VI-6.21~$(i)$]{GHKLMS:contlatt}.
\begin{lem}
  \label{lemma:qproj:unique}
  Let $X$ be a topological space, and $Y$ be a $T_0$ topological
  space.  For a map $r : X \to Y$, the following two conditions are
  equivalent:
  \begin{enumerate}[\em(1)]
  \item\label{q:qproj} $r$ is a quasi-retraction, with matching
    quasi-section $\qs : Y \to \SV (X)$, such that additionally $x \in
    \qs (r (x))$ for every $x \in X$;
  \item $r$ is proper and surjective.
  \end{enumerate}
  Then the quasi-section $\qs$ in (\ref{q:qproj}) is unique, and it is
  defined by $\qs (y) = r^{-1} (\upc y)$.
\end{lem}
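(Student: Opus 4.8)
The plan is to prove the two implications separately and then establish uniqueness. For $(1)\Rightarrow(2)$, I would first observe that surjectivity is immediate from Lemma~\ref{lemma:qretr:surj}. For properness, I need to check the two remaining conditions. For compactness of $r^{-1}(\upc y)$: I claim that, under condition~(1), $\qs(y) = r^{-1}(\upc y)$ — this is in fact the content of the uniqueness statement, so I would prove it here. One inclusion, $\qs(y) \subseteq r^{-1}(\upc y)$, holds for any quasi-section: if $x \in \qs(y)$ then $r(x) \in \Smyth r(\qs(y)) = \upc y$, so $x \in r^{-1}(\upc y)$. The reverse inclusion is exactly where the extra hypothesis $x \in \qs(r(x))$ bites: if $x \in r^{-1}(\upc y)$, i.e. $y \leq r(x)$, then $x \in \qs(r(x))$, and since $\qs$ is monotonic (being continuous, hence monotonic for specialization preorders) and $y \leq r(x)$, we get $\qs(r(x)) \subseteq \qs(y)$ — wait, monotonicity of $\qs : Y \to \SV(X)$ means $y \leq r(x)$ implies $\qs(y) \subseteq \qs(r(x))$ in the specialization order of $\SV(X)$, which is reverse inclusion; so $y \leq r(x)$ gives $\qs(y) \supseteq \qs(r(x)) \ni x$. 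Hence $x \in \qs(y)$. So $\qs(y) = r^{-1}(\upc y)$, which is therefore compact saturated (it is $\qs(y) \in \SV(X)$). For the last properness condition, $\dc r(F)$ closed for closed $F$: I would use that $r^{-1}(\upc y)$ meets $F$ iff $y \in \dc r(F)$, rephrase the complement $Y \setminus \dc r(F)$ as $\{y \mid r^{-1}(\upc y) \subseteq X \setminus F\} = \{y \mid \qs(y) \subseteq X\setminus F\} = \qs^{-1}(\Box(X\setminus F))$, which is open since $\qs$ is continuous and $X \setminus F$ is open. So $\dc r(F)$ is closed.

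For $(2)\Rightarrow(1)$, assume $r$ is proper and surjective. Define $\qs : Y \to \SV(X)$ by $\qs(y) = r^{-1}(\upc y)$; this is non-empty by surjectivity (pick $x$ with $r(x) = y \leq y$) and compact saturated by properness ($r^{-1}(\upc y)$ is compact, and it is clearly upward closed, hence saturated). I need $\qs$ continuous: for an open $U$ of $X$, $\qs^{-1}(\Box U) = \{y \mid r^{-1}(\upc y) \subseteq U\}$. I would show this equals $Y \setminus \dc r(X\setminus U)$: indeed $r^{-1}(\upc y) \not\subseteq U$ iff there is $x \notin U$ with $y \leq r(x)$ iff $y \in \dc r(X \setminus U)$; since $X \setminus U$ is closed, $\dc r(X\setminus U)$ is closed by properness, so $\qs^{-1}(\Box U)$ is open. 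Next, $\Smyth r(\qs(y)) = \upc y$: the inclusion $\supseteq$ holds since $y = r(x)$ for some $x$ with $y \leq r(x)$, so $y \in r(r^{-1}(\upc y)) \subseteq \Smyth r(\qs(y))$; and $\subseteq$ holds because every $x \in r^{-1}(\upc y)$ has $y \leq r(x)$, so $\{r(x) \mid x \in \qs(y)\} \subseteq \upc y$, and taking upward closure of a set already inside the upward-closed $\upc y$ stays inside. Finally the quasi-projection condition: for any $x \in X$, $x \in r^{-1}(\upc r(x)) = \qs(r(x))$ trivially.

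For uniqueness, suppose $\qs$ is any quasi-section witnessing~(1). I already showed above, in the $(1)\Rightarrow(2)$ direction, that the quasi-projection condition forces $\qs(y) = r^{-1}(\upc y)$, so $\qs$ is determined. I would present that argument once, as a standalone paragraph, and cite it from both the $(1)\Rightarrow(2)$ step and the uniqueness claim.

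The main obstacle is getting the direction of the specialization order on $\SV(X)$ (reverse inclusion) correct when invoking monotonicity of $\qs$, since it is easy to flip an inclusion; I would double-check that $y \leq y'$ in $Y$ yields $\qs(y) \supseteq \qs(y')$, consistent with $\eta_X$ being monotone and $\Smyth r \circ \qs = \eta_Y$. The other routine-but-delicate point is the equivalence $Y \setminus \dc r(F) = \qs^{-1}(\Box(X\setminus F))$, used in both directions — it must be verified carefully using surjectivity for the nontrivial inclusion, but it is the natural bridge between the "closed downward image" and "compact upward fibre" halves of properness and the upper-Vietoris continuity of $\qs$.
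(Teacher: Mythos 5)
Your proposal is correct and follows essentially the same route as the paper's proof: both directions hinge on the identity $\qs(y)=r^{-1}(\upc y)$ (forced by the condition $x\in\qs(r(x))$ together with monotonicity of $\qs$ under reverse inclusion) and on the complement identity $Y\setminus\dc r(F)=\qs^{-1}(\Box(X\setminus F))$, which the paper isolates as its fact $(*)$. The only quibble is that this last identity needs no surjectivity, contrary to your closing remark, but that does not affect the argument.
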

\begin{proof}
  We first prove the following fact, which will serve in both
  directions of proof: $(*)$ assume $\qs (y) = r^{-1} (\upc y)$ for
  every $y \in Y$, then for every open subset $U$ of $X$, the
  complement of $\qs^{-1} (\Box U)$ in $Y$ is $\dc r (F)$, where $F$
  is the complement of $U$ in $X$.  Indeed, the complement of
  $\qs^{-1} (\Box U)$ is the set of elements $y \in Y$ such that $\qs
  (y)$ is not included in $U$, i.e., such that there is an $x \in \qs
  (y)$ that is not in $U$, i.e., in $F$.  Since $\qs (y) = r^{-1}
  (\upc y)$, this is the set of elements $y$ such that there is an $x
  \in F$ such that $y \leq r (x)$, namely, $\dc r (F)$.

  Assume $r$ is a quasi-retraction, and $\qs$ is a matching
  quasi-section such that $x \in \qs (r (x))$ for every $x \in X$.  We
  have seen that $r$ is surjective (Lemma~\ref{lemma:qretr:surj}).

  Since $\Smyth r (\qs (y)) = \upc y$, every element $x$ of $\qs (y)$
  is such that $r (x)$ is in $\upc y$, so $\qs (y) \subseteq r^{-1}
  (\upc y)$.  Conversely, for every $x \in r^{-1} (\upc y)$, i.e., if
  $y \leq r (x)$, then $\qs (y) \supseteq \qs (r (x))$ since $\qs$ is
  monotonic.  We have assumed that $x$ was in $\qs (r (x))$, so $x \in
  \qs (y)$.  It follows that $\qs (y) = r^{-1} (\upc y)$, which proves
  the last claim in the Lemma.

  It also follows that $r^{-1} (\upc y)$ is compact in $X$.  And,
  using $(*)$, for every closed subset $F$ of $X$, with complement
  $U$, $\dc r (F)$ is the complement of $\qs^{-1} (\Box (U))$, which
  is open since $\qs$ is continuous, so $\dc r (F)$ is closed.
  Therefore $r$ is proper.

  Conversely, assume that $r$ is proper and surjective.  Define $\qs
  (y)$ as $r^{-1} (\upc y)$.  Since $r$ is surjective, $\qs (y)$ is
  non-empty.  It is saturated, i.e., upward closed, because $r$ is
  monotonic.  Since $r^{-1} (\upc y)$ is compact, $\qs (y)$ is an
  element of $\Smyth (Y)$.  For every open subset $U$ of $X$, with
  complement $F$, $\qs^{-1} (\Box U)$ is the complement of $\dc r (F)$
  by $(*)$, hence is open since $r$ is proper.  So $\qs$ is
  continuous.

  The equation $\Smyth r (\qs (y)) = \upc y$ follows from $\Smyth r
  (\qs (y)) = \upc \{r (x) \mid x \in r^{-1} (\upc y)\}$ and the fact
  that $r$ is surjective.  It is clear that $x$ is in $\qs (r (x)) = r^{-1}
  (\upc r (x))$ for every $x \in X$.
\end{proof}


Let us turn to bifinite domains, or rather to their countably-based
variant.  Countability will be needed in a few crucial places.

A pointed dcpo $X$ is an {\em $\omega\B$-domain\/} (a.k.a.\ an
SFP-domain) iff there is a non-decreasing sequence of idempotent
deflations ${(f_i)}_{i \in \nat}$ such that, for every $x \in X$, $x =
\sup_{i \in \nat} f_i (x)$.  I.e., an $\omega\B$-domain is just like a
$\B$-domain, except that we take a non-decreasing sequence, not a
general directed family of idempotent deflations.



The key lemma to prove Theorem~\ref{thm:qrb:qretr} below is the
following refinement of Rudin's Lemma \cite[III-3.3]{GHKLMS:contlatt}.
Note that Rudin's Lemma would only secure the existence of a directed
family $Z$ whose least upper bound is $y$, and which intersects each
$E^0_i$; but $Z$ may intersect each $E^0_i$ in more than one element
$y_i$.  We pick exactly one element $y_i$ in each $E^0_i$, and for
this countability seems to be needed.
\begin{lem}
  \label{lemma:qs:nonempty}
  Let $Y$ be a dcpo, $y \in Y$, and ${(\upc E^0_i)}_{i \in \nat}$ a
  non-decreasing sequence in $\Fin (Y)$ (w.r.t.\ $\supseteq$) such that $\upc y = \bigcap_{i
    \in \nat}^\downarrow \upc E^0_i$.
  There is a non-decreasing sequence ${(y_i)}_{i \in \nat}$ in $Y$
  such that $y_i \in E^0_i 
  $ for every $i \in \nat$, and $\sup_{i \in \nat} y_i = y$.
\end{lem}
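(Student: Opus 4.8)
The plan is to reduce the statement to a K\"onig's Lemma argument on a finitely-branching tree of finite non-decreasing transversals. First I would replace each $E^0_i$ by $F_i := E^0_i \cap \dc y$. Since the hypothesis $\upc y = \bigcap_{i\in\nat}^\downarrow \upc E^0_i$ gives $\upc y \subseteq \upc E^0_i$, there is an element of $E^0_i$ below $y$, so each $F_i$ is a non-empty finite subset of $\dc y$. Using that $(\upc E^0_i)_i$ is non-decreasing for $\supseteq$, i.e.\ $\upc E^0_i \supseteq \upc E^0_{i+1}$, one checks easily that $\upc F_{i+1} \subseteq \upc F_i$ (any $e' \in F_{i+1}$ lies in $\upc E^0_{i+1} \subseteq \upc E^0_i$, so some $e \in E^0_i$ is below $e'$, and then $e \le e' \le y$ puts $e$ in $F_i$), and, from $\upc F_i \subseteq \upc E^0_i$ together with $\upc y \subseteq \upc F_i$, that $\bigcap_{i\in\nat}^\downarrow \upc F_i = \upc y$.

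The key observation is then that \emph{any} non-decreasing sequence $(y_i)_{i\in\nat}$ with $y_i \in F_i$ for each $i$ automatically satisfies $\sup_{i\in\nat} y_i = y$. Indeed, each $y_i$ lies in $\dc y$, so the supremum $w$ (which exists, the sequence being a chain in the dcpo $Y$) satisfies $w \le y$; and since $y_i \le w$ with $y_i \in F_i$, we get $w \in \upc F_i$ for every $i$, hence $w \in \dc y \cap \bigcap_{i\in\nat}^\downarrow \upc F_i = \dc y \cap \upc y = \{y\}$, i.e.\ $w = y$. So it remains only to produce one non-decreasing sequence meeting every $F_i$.

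For that, consider the tree $T$ whose nodes at level $n$ are the non-decreasing tuples $(e_0, e_1, \ldots, e_n)$ with $e_j \in F_j$ for $0 \le j \le n$, ordered by the prefix relation. Each node has only finitely many immediate successors (a subset of $F_{n+1}$), indeed each level of $T$ is finite; and $T$ is infinite, because for every $n$ a level-$n$ node exists: choose any $e_n \in F_n$ and pull back inductively, using $e_j \in F_j \subseteq \upc F_j \subseteq \upc F_{j-1}$ to select $e_{j-1} \in F_{j-1}$ with $e_{j-1} \le e_j$, down to $e_0$. By K\"onig's Lemma, $T$ has an infinite branch, which is exactly a non-decreasing sequence $(y_i)_{i\in\nat}$ with $y_i \in F_i \subseteq E^0_i$ for all $i$. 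Combined with the previous paragraph, $\sup_{i\in\nat} y_i = y$, so this is the required sequence.

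The step I expect to carry the real content — and the place where countability of the index set is genuinely used, exactly as the discussion before the lemma warns — is making a \emph{single-valued} choice $y_i \in E^0_i$ that is monotone in $i$: Rudin's Lemma alone would only give a directed transversal of $(\upc E^0_i)_i$, possibly picking several elements at each level, whereas here the tree $T$ is locally finite precisely because each $E^0_i$ is finite and there are only countably many of them, which is what makes K\"onig's Lemma applicable. Everything else is routine bookkeeping with upward closures.
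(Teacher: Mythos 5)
Your proof is correct and follows essentially the same route as the paper's: restrict each $E^0_i$ to $E^0_i \cap \dc y$, build the finitely-branching tree of non-decreasing finite transversals, extract an infinite branch by K\H{o}nig's Lemma, and check that its supremum is $y$. The only cosmetic difference is that you verify $\bigcap_{i}^\downarrow \upc F_i = \upc y$ up front and deduce $\sup_i y_i = y$ from that, whereas the paper concludes directly from $\bigcap_i \upc E^0_i = \upc y$; both are fine.
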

\begin{proof}
  Let $E_i = E^0_i \cap \dc y$ for every $i \in \nat$.  ${(E_i)}_{i
    \in \nat}$ is a non-decreasing sequence in $\Fin (Y)$ such that $y
  \in \bigcap_{i \in \nat}^\downarrow \upc E_i$, and $E_i \subseteq
  \dc y$.

  Build a tree as follows.  Informally, there is a root node, all
  (non-root) nodes at distance $i \geq 1$ from the root node are
  labeled by some element of $E_{i-1}$, and each such node $N$,
  labeled $y_{i-1}$, say, has as many successors as there are elements
  $y_i$ in $E_i$ such that $y_{i-1} \leq y_i$.  Formally, one can
  define the nodes as being the sequences $y_0, y_1, \ldots, y_{i-1}$,
  $i \in \nat$, where $y_0 \in E_0$, $y_1 \in E_1$, \ldots, $y_{i-1}
  \in E_{i-1}$, and $y_0 \leq y_1 \leq \ldots \leq y_{i-1}$.  Such a
  node is labeled $y_{i-1}$ (if $i \geq 1$), and its successors are
  all the sequences $y_0, y_1, \ldots, y_{i-1}, y_i$ with $y_i$ chosen
  in $E_i$, and above $y_{i-1}$ if $i \geq 1$.

  This tree has arbitrarily long branches (paths from the root).
  Indeed, for every $i \in \nat$, pick an element $y_i \in E_i$---this
  is possible since $y \in \upc E_i$, hence $E_i$ is non-empty---,
  then an element $y_{i-1} \in E_{i-1}$ below $y_i$---since $\upc
  E_{i-1} \supseteq \upc E_i$---, then an element $y_{i-2} \in
  E_{i-2}$ below $y_{i-1}$, \ldots, and finally an element $y_0 \in
  E_0$ below $y_1$.  This is a node at distance $i+1$ from the root.

  It follows that the tree is infinite.  It is finitely-branching,
  meaning that every node has only finitely many successors---because
  $E_i$ is finite.  K\H{o}nig's Lemma then states that this tree must
  have an infinite branch.  Reading the labels on non-root nodes in
  this branch, we obtain an infinite sequence $y_0 \leq y_1 \leq
  \ldots \leq y_i \leq \ldots$ of elements $y_i \in E_i$, $i \in
  \nat$.  Clearly, $y_i \in E^0_i$ for each $i \in \nat$.  In
  particular, $\sup_{i \in \nat} y_i \in \bigcap_{i \in
    \nat}^\downarrow \upc E^0_i = \upc y$, so $y \leq \sup_{i \in
    \nat} y_i$.  Since $E_i \subseteq \dc y$ for every $i \in \nat$,
  the converse inequality holds.  So $\sup_{i \in \nat} y_i = y$.
\end{proof}

\begin{thm}
  \label{thm:qrb:qretr}
  The following are equivalent for a dcpo $Y$:
  \begin{description}
  \item[$(i)$] $Y$ is an $\omega\QRB$-domain;
  \item[$(ii)$] $Y$ is a quasi-retract of an $\omega\B$-domain;
  \item[$(iii)$] $Y$ is the image of an $\omega\B$-domain under a proper map.
  \end{description}
\end{thm}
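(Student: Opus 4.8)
First I would handle $(iii) \Rightarrow (i)$, which is essentially free from what we already have. If $Y$ is the image of an $\omega\B$-domain $X$ under a proper map $r$, then $r$ is proper and surjective, so by Lemma~\ref{lemma:qproj:unique} (in the direction $(2)\Rightarrow(1)$), $r$ is a quasi-retraction (indeed a quasi-projection), with quasi-section $\qs(y) = r^{-1}(\upc y)$. Since every $\omega\B$-domain is in particular an $\omega\QRB$-domain (a non-decreasing sequence of idempotent deflations $f_i$ yields a generating sequence of quasi-deflations $x \mapsto \upc f_i(x)$, exactly as in Proposition~\ref{prop:FS:QRB}), and $Y$ is $T_0$ (it is a dcpo), Proposition~\ref{prop:qretr:qrb} gives that $Y$ is an $\omega\QRB$-domain. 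The implication $(ii) \Rightarrow (iii)$ is similarly short: if $Y$ is a quasi-retract of an $\omega\B$-domain $X$ via $r$ with quasi-section $\qs$, then by Proposition~\ref{prop:qretr:qrb} (applied again, noting $X$ is stably compact, hence $Y$ is too, hence $Y$ is sober and $T_0$) $Y$ is an $\omega\QRB$-domain; then $Y$ is a quasi-continuous dcpo, and one can replace $r$ by a genuine quasi-projection onto $Y$ — or more directly, observe that a quasi-retraction between $T_0$ spaces composed appropriately yields a proper surjection. (Actually the cleanest route is to get $(i) \Rightarrow (iii)$ directly and then close the cycle with $(iii) \Rightarrow (ii)$, which is trivial since a proper surjective map is a quasi-retraction by Lemma~\ref{lemma:qproj:unique}.)

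So the real content is $(i) \Rightarrow (iii)$: given an $\omega\QRB$-domain $Y$, construct an $\omega\B$-domain $X$ and a proper surjection $r : X \to Y$. By Proposition~\ref{prop:omega:qrb} fix a generating \emph{sequence} of quasi-deflations ${(\varphi_i)}_{i\in\nat}$ on $Y$. The natural candidate for $X$ is a bilimit (inverse/direct limit of an expanding sequence) built from the finite images $\img\varphi_i \subseteq \Fin(Y)$: each $\img\varphi_i$ is a finite poset (under $\supseteq$), the maps $\varphi_{i+1}^\dagger$ restricted suitably give projection-like maps between consecutive stages, and the idempotency needed for a $\B$-domain is arranged by passing to a subsequence along which $\varphi_{i} \circ \varphi_i$ stabilizes, or by a standard SFP-style colimit construction. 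The map $r : X \to Y$ sends a coherent sequence $(\upc E_i)_i$ of elements of the $\img\varphi_i$'s to $\sup$ of a selection of points — and this is exactly where Lemma~\ref{lemma:qs:nonempty} is needed: it guarantees that for each $y\in Y$, since $\upc y = \bigcap_i^\downarrow \varphi_i(y)$, we can pick $y_i \in \varphi_i(y) \cap \dc y$ forming a non-decreasing sequence with supremum $y$, so that $r$ is \emph{surjective}; and more, it lets us show $r^{-1}(\upc y)$ is compact and that $r$ maps closed sets to closed (down-)sets, i.e.\ $r$ is proper. Properness will follow because $r$ is, up to the bilimit identification, the limit projection, and the relevant compactness is inherited from the finitary structure at each finite stage (using Proposition~\ref{prop:heckmann}/Rudin's Lemma and the fact established in Corollary~\ref{corl:BoxU} that on the quasi-continuous dcpo $Y$ the Scott and upper Vietoris topologies on $\Fin(Y)$ agree).

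The main obstacle — and the reason countability is genuinely used — is exactly the step isolated in Lemma~\ref{lemma:qs:nonempty}: Rudin's Lemma alone only provides a directed set meeting each $E^0_i$, possibly in several points, which does not assemble into a single point of the bilimit $X$; one needs the K\H{o}nig's-Lemma argument to extract a single non-decreasing \emph{sequence} $y_i \in E^0_i$ with $\sup_i y_i = y$, and this is what forces the generating family to be countable (hence the restriction to $\omega\QRB$ rather than general $\QRB$). A secondary technical point is verifying that the bilimit $X$ really is an $\omega\B$-domain — i.e.\ that the stage maps can be taken to be idempotent deflations — which follows the classical treatment of SFP-domains but must be checked to interact correctly with the $\Fin$/$\Smyth$ machinery (continuity of $\varphi_i^\dagger$ on $\Fin_\sigma$, established via Corollaries~\ref{corl:fin:cont:Smyth} and~\ref{corl:fin:cont:Fin}). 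Once surjectivity and properness of $r$ are in hand, $(iii)$ holds and the cycle closes.
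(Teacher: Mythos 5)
The easy implications are fine and essentially those of the paper ($(iii)\limp(ii)$ via Lemma~\ref{lemma:qproj:unique}, $(ii)\limp(i)$ via Proposition~\ref{prop:qretr:qrb}), and you correctly identify Lemma~\ref{lemma:qs:nonempty} and countability as the crux of $(i)\limp(iii)$. But your construction of the $\omega\B$-domain $X$ in that direction has a genuine gap. You propose to take $X$ to be a bilimit built from the finite posets $\img\varphi_i\subseteq\Fin(Y)$, so that an element of $X$ is a coherent (decreasing) sequence of finitary compacts $(\upc E_i)_i$, and to define $r$ by ``$\sup$ of a selection of points.'' Such a sequence converges, as a filtered intersection, to a compact saturated subset of $Y$ that in general has several minimal points and is \emph{not} of the form $\upc y$; Lemma~\ref{lemma:qs:nonempty} only applies under the hypothesis $\upc y=\bigcap_i^\downarrow\upc E^0_i$, so it cannot be used to extract a canonical point from an arbitrary coherent sequence, and no continuous choice of minimal point exists. (What the bilimit of the $\img\varphi_i$'s naturally maps onto is $\Smyth(Y)$, cf.\ Proposition~\ref{prop:qrb:Smyth}, not $Y$.) The paper instead takes $X$ to be the dcpo of non-decreasing sequences of \emph{points} $\vec y={(y_i)}_{i\in\nat}$ with $y_i\in\bigcup_{j\leq i}E_j$ ($E_j$ the union of the finite sets underlying $\img\varphi_j$) subject to the coherence condition $y_i\in\varphi_i(\sup_k y_k)$; the idempotent deflations are the truncations $f_{i_0}(\vec y)={(y_{\min(i,i_0)})}_i$, and $r(\vec y)=\sup_i y_i$. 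Lemma~\ref{lemma:qs:nonempty} is then used exactly once, to show $r$ is surjective.

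A second, related gap is your claim that properness of $r$ ``follows because $r$ is, up to the bilimit identification, the limit projection.'' It is not: the limit projections land in the finite stages, whereas $r$ lands in $Y$, which is not a stage. Properness needs a separate argument: one shows $r^{-1}(\upc y)=\{\vec y\mid\forall i\cdot y_i\in\varphi_i(y)\}=\bigcap_{i_0}Q_{i_0}$ where each $Q_{i_0}$ is finitary compact (being $\upc K_{i_0}$ for the finite set $K_{i_0}$ of sequences fixed by $f_{i_0}$), and invokes stable compactness of the $\omega\B$-domain $X$ to conclude the intersection is compact; closedness of $\dc r(F)$ for closed $F$ then uses well-filteredness of $X$. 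Neither step is supplied by your sketch. The remark about arranging idempotency ``by passing to a subsequence along which $\varphi_i\circ\varphi_i$ stabilizes'' is also not meaningful as stated (quasi-deflations compose only via $\varphi^\dagger\circ\varphi$, and nothing stabilizes); in the correct construction idempotency of the $f_{i_0}$ is automatic from truncation.
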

\begin{proof}
  $(iii) \limp (ii)$.  Because any proper surjective map is a
  quasi-retraction (Lemma~\ref{lemma:qproj:unique}).

  $(ii) \limp (i)$.  Write $Y$ as a quasi-retract of an
  $\omega\B$-domain $X$.  $X$ is trivially an $\omega\QRB$-domain.
  Since $Y$, as a dcpo, is $T_0$, Proposition~\ref{prop:qretr:qrb}
  applies, so $Y$ is an $\omega\QRB$-domain.

  $(i) \limp (iii)$.  Let $Y$ be an $\omega\QRB$-domain, with
  generating sequence of quasi-deflations ${(\varphi_i)}_{i \in
    \nat}$.  Let $\img \varphi_i = \{\upc E_{i1}, \ldots, \upc
  E_{in_i}\}$, and define $E_i$ as the finite set $\bigcup_{j=1}^{n_i}
  E_{ij}$, for each $i \in \nat$.  Let $X$ be the set of all
  non-decreasing sequences $\vec y = {(y_i)}_{i \in \nat}$ in $Y$ such
  that $y_i \in \bigcup_{j \leq i} E_j$, and $y_i \in \varphi_i
  (\sup_{k \in \nat} y_k)$.  Order $X$ componentwise.  As in
  \cite[Theorem~4.9, Theorem~4.1]{Jung:CCC}, $X$ is an
  $\omega\B$-domain: for each $i_0 \in \nat$, consider the idempotent
  deflation $f_{i_0}$ defined by $f_{i_0} (\vec y) = {(y_{\min (i,
      i_0)})}_{i \in \nat}$.  To show that this is well-defined, we
  must show that $y_{\min (i,i_0)} \in \varphi_i (\sup_{k \in \nat}
  y_{\min (k,i_0)})$, i.e., that $y_{\min (i, i_0)} \in \varphi_i
  (y_{i_0})$.  If $i \leq i_0$, then $y_{\min (i,i_0)} = y_i \in
  \varphi_i (\sup_{k \in \nat} y_k) \subseteq \varphi_i (y_{i_0})$
  since $\vec y \in X$ and $\varphi_i$ is monotonic, else $y_{\min
    (i,i_0)} = y_{i_0} \in \varphi_i (y_0)$ since $\varphi_i$ is a
  quasi-deflation.  It is easy to see that $f_{i_0}$ is
  Scott-continuous.

  Let now $r : X \to Y$ map $\vec y$ to $\sup_{i \in \nat} y_i$.  This
  is evidently Scott-continuous.  For any fixed $y \in Y$, apply
  Lemma~\ref{lemma:qs:nonempty} with $\upc E^0_i = \varphi_i (y)$ to
  obtain a non-decreasing sequence $\vec y = {(y_i)}_{i \in \nat}$
  such that $y_i \in \varphi_i (y)$ for every $i \in \nat$ and
  $\sup_{i \in \nat} y_i = y$: in particular, $\vec y$ is in $Y$, and
  $r (\vec y) = y$.  So $r$ is surjective.  Let us show that it is
  proper.

  To this end, we first remark that $r^{-1} (\upc y) = \{\vec y \in X
  \mid \forall i \in \nat \cdot y_i \in \varphi_i (y)\}$.  Indeed, if
  $\vec y = {(y_i)}_{i \in \nat}$ is in $r^{-1} (\upc y)$, then $y
  \leq r (\vec y) = \sup_{k \in \nat} y_k$, and since $\vec y \in X$,
  $y_i \in \varphi_i (\sup_{k \in \nat} y_k) \subseteq \varphi_i (y)$,
  using the fact that $\varphi_i$ is monotonic.  Conversely, if $y_i
  \in \varphi_i (y)$ for every $i \in \nat$, then $r (\vec y) =
  \sup_{i \in \nat} y_i \in \bigcap_{i \in \nat} \varphi_i (y) = \upc
  y$.

  This remark makes it easier for us to show that $r^{-1} (\upc y)$ is
  compact for every $y \in Y$.  For each $i_0 \in \nat$, let $Q_{i_0}
  = \{\vec y \in X \mid \forall i \leq i_0 \cdot y_i \in \varphi_i
  (y)\}$.  Let $K_{i_0}$ be the set of all elements $\vec y$ of
  $Q_{i_0}$ such that $y_i = y_{i_0}$ for every $i \geq i_0$.  Note
  that $K_{i_0}$ is finite, (recall that each $y_i$ with $i \leq i_0$
  is taken from the finite set $\bigcup_{j \leq i} E_j$), and that
  $Q_{i_0} = \upc K_{i_0}$.  Indeed, for every $\vec y \in Q_{i_0}$,
  its image $f_{i_0} (\vec y)$ by the idempotent deflation $f_{i_0}$
  is in $K_{i_0}$, and is below $\vec y$.  So $Q_{i_0}$ is (finitary)
  compact.  Every $\omega\B$-domain is stably compact
  \cite[Theorem~4.2.18]{AJ:domains}, and any intersection of saturated
  compacts in a stably compact space is compact, so $r^{-1} (\upc y) =
  \bigcap_{i_0 \in \nat} Q_{i_0}$ is compact.

  Let us now show that $\dc r (F)$ is closed for every closed subset
  $F$ of $X$.  Consider a directed family ${(z_j)}_{j \in J}$ of
  elements of $\dc r (F)$, and let $z = \sup_{j \in J} z_j$.  Since
  $z_j \in \dc r (F)$, $F$ intersects $r^{-1} (\upc z_j)$.  The family
  ${(r^{-1} (\upc z_j))}_{j \in J}$ is a filtered family of compact
  saturated subsets of $X$, each of which intersects the closed set
  $F$.  Since $X$ is an $\omega\B$-domain, it is stably compact, hence
  well-filtered: so $\bigcap^\downarrow_{j \in J} r^{-1} (\upc z_j)$
  intersects $F$.  (Explicitly: if it did not, it would be included in
  the open complement $U$ of $F$, hence some $r^{-1} (\upc z_j)$ would
  be included in $U$, contradicting the fact that it intersects $F$.)
  Let $\vec y$ be any element of $\bigcap^\downarrow_{j \in J} r^{-1}
  (\upc z_j) \cap F$.  Then $z_j \leq r (\vec y)$ for every $j \in J$,
  so $z = \sup_{j \in J} z_j \leq r (\vec y)$, hence $z \in \dc r
  (F)$.
\end{proof}

\section{Products, Bilimits}
\label{sec:bilim}

We first show that finite products of $\QRB$-domains are again
$\QRB$-domains.
\begin{lem}
  \label{lem:qrb:prod}
  If ${(\varphi_i)}_{i \in I}$ (resp.\ ${(\psi_j)}_{j \in J}$) is a
  generating family of quasi-deflations on $X$ (resp.\ $Y$), then
  ${(\chi_{ij})}_{i \in I, j \in J}$ is one on $X \times Y$, where
  $\chi_{ij} (x,y) = \varphi_i (x) \times \psi_j (y)$.
\end{lem}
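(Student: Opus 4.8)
The plan is to verify directly, for the pointed dcpo $X\times Y$ (with bottom $(\bot_X,\bot_Y)$, directed suprema computed componentwise), the two clauses of Definition~\ref{defn:qrb}: that every $\chi_{ij}$ is a quasi-deflation, and that the family ${(\chi_{ij})}_{i\in I,j\in J}$ is directed and generating. First I would dispose of the routine clauses. Writing $\varphi_i(x)=\upc E$ and $\psi_j(y)=\upc F$ with $E,F$ finite, one has $\chi_{ij}(x,y)=\upc E\times\upc F=\upc_{X\times Y}(E\times F)\in\Fin(X\times Y)$; since $x\in\varphi_i(x)$ and $y\in\psi_j(y)$ we get $(x,y)\in\chi_{ij}(x,y)$; and $\img\chi_{ij}\subseteq\{A\times B\mid A\in\img\varphi_i,\ B\in\img\psi_j\}$ is finite.

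The only point needing care is the continuity of $\chi_{ij}\colon X\times Y\to\Fin_\sigma(X\times Y)$. Since $X\times Y$ is a dcpo and $\Fin_\sigma$ carries the Scott topology of $\supseteq$, I would reduce this to Scott-continuity (a continuous map between posets with their Scott topologies is the same as a Scott-continuous one, cf.\ Section~\ref{sec:prelim}), i.e., to monotonicity plus preservation of directed suprema. Monotonicity is immediate from that of $\varphi_i$ and $\psi_j$. For directed suprema, take ${((x_k,y_k))}_{k\in K}$ directed with supremum $(x,y)$; then ${(x_k)}_k$, ${(y_k)}_k$ are directed with suprema $x,y$, and the remarks following Definition~\ref{defn:qrb} give $\varphi_i(x)=\bigcap_k^\downarrow\varphi_i(x_k)$ and $\psi_j(y)=\bigcap_k^\downarrow\psi_j(y_k)$, both finitary compact. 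Hence $\chi_{ij}(x,y)=\bigl(\bigcap_k\varphi_i(x_k)\bigr)\times\bigl(\bigcap_k\psi_j(y_k)\bigr)=\bigcap_{k,k'}\bigl(\varphi_i(x_k)\times\psi_j(y_{k'})\bigr)$, and a short gluing argument using directedness of ${((x_k,y_k))}_k$ (for $k,k'$ pick $k''$ with $(x_{k''},y_{k''})$ above both, so that $\varphi_i(x_{k''})\times\psi_j(y_{k''})\subseteq\varphi_i(x_k)\times\psi_j(y_{k'})$) collapses this double intersection to the diagonal $\bigcap_k^\downarrow\chi_{ij}(x_k,y_k)$. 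So $\chi_{ij}(x,y)$ is the filtered intersection of the family ${(\chi_{ij}(x_k,y_k))}_k$, hence — being finitary compact — its supremum in $\Fin_\sigma(X\times Y)$. I expect this interchange-of-intersections step, and in particular the observation that the resulting intersection is again \emph{finitary} compact (which is exactly what the continuity of $\varphi_i$ and $\psi_j$ buys us), to be the only genuinely delicate part; everything else is bookkeeping.

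It then remains to see that the family is directed and generating. Directedness: given $(i,j)$ and $(i',j')$, by directedness of ${(\varphi_i)}_i$ and ${(\psi_j)}_j$ pick $i''$ with $\varphi_{i''}$ pointwise above $\varphi_i,\varphi_{i'}$ and $j''$ with $\psi_{j''}$ pointwise above $\psi_j,\psi_{j'}$; then $\chi_{i''j''}$ is pointwise above $\chi_{ij}$ and $\chi_{i'j'}$. Generating: for every $(x,y)$,
\[
  \bigcap_{(i,j)\in I\times J}^\downarrow\chi_{ij}(x,y)
  =\Bigl(\bigcap_{i\in I}^\downarrow\varphi_i(x)\Bigr)\times\Bigl(\bigcap_{j\in J}^\downarrow\psi_j(y)\Bigr)
  =\upc x\times\upc y=\upc_{X\times Y}(x,y),
\]
by the generating hypotheses on ${(\varphi_i)}_i$ and ${(\psi_j)}_j$. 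Thus $X\times Y$ is a $\QRB$-domain, and since $I\times J$ is countable whenever $I$ and $J$ are, the same argument covers $\omega\QRB$.
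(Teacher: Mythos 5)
Your proof is correct and follows exactly the route of the paper's (much terser) proof: check the pointwise, finitary-compactness, and finite-image clauses, verify Scott-continuity via the interchange of products with filtered intersections, and obtain directedness and the generating property componentwise. You merely spell out the steps the paper dismisses as ``easily seen,'' and your extra remark on countability matches how the paper handles the $\omega$ case in the corollary that follows.
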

\begin{proof}
  Clearly, $(x,y) \in \chi_{ij} (x,y)$, $\chi_{ij} (x, y)$ is finitary
  compact, and $\img \chi_{ij}$ is finite.  For all $i, j$,
  $\chi_{ij}$ is easily seen to be Scott-continuous, and $\bigcap_{i
    \in I,\: j \in J}^\downarrow \chi_{ij} (x,y) = \bigcap_{i \in I,\: j
    \in J}^\downarrow (\varphi_i (x) \times \psi_j (y)) = \bigcap_{i
    \in I}^\downarrow \varphi_i (x) \times \bigcap_{j \in
    J}^\downarrow \psi_j (y) = \upc x \times \upc y = \upc (x,y)$.
\end{proof}
So:
\begin{lem}
  \label{lemma:qrb:prod}
  For any two ($\omega$)$\QRB$-domains $X$, $Y$, $X \times Y$, with
  the product ordering, is an ($\omega$)$\QRB$-domain.
\end{lem}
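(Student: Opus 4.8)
The plan is to read this off directly from Lemma~\ref{lem:qrb:prod}, which has already done all the real work; what remains is just to package it with two triviality checks. First I would note that if $X$ and $Y$ are pointed dcpos, then so is $X \times Y$ under the coordinatewise order: its least element is $(\bot_X,\bot_Y)$, and the least upper bound of a directed family ${(x_k,y_k)}_{k\in K}$ is $(\sup_k x_k, \sup_k y_k)$, the two projected families being directed and having suprema in $X$, $Y$ respectively. I would state this without proof.

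Next, assuming $X$ and $Y$ are $\QRB$-domains, I would fix a generating family of quasi-deflations ${(\varphi_i)}_{i\in I}$ on $X$ and one ${(\psi_j)}_{j\in J}$ on $Y$, and invoke Lemma~\ref{lem:qrb:prod}: the family ${(\chi_{ij})}_{(i,j)\in I\times J}$ with $\chi_{ij}(x,y)=\varphi_i(x)\times\psi_j(y)$ is then a generating family of quasi-deflations on $X\times Y$. (Directedness of this family is witnessed by choosing, for any indices $(i,j)$ and $(i',j')$, some $i''$ above $i,i'$ and some $j''$ above $j,j'$ by directedness of the two original families, so that $\chi_{i''j''}$ lies below both $\chi_{ij}$ and $\chi_{i'j'}$ in the pointwise order; this, together with the remaining quasi-deflation and generation axioms, is precisely what Lemma~\ref{lem:qrb:prod} records.) Hence $X\times Y$ is a $\QRB$-domain. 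For the $\omega$-version, when $X$ and $Y$ are $\omega\QRB$-domains we may take $I$ and $J$ countable — indeed, by Proposition~\ref{prop:omega:qrb}, indexed by $\nat$ — so that $I\times J$ is countable and the family ${(\chi_{ij})}$ above is a countable generating family of quasi-deflations; thus $X\times Y$ is an $\omega\QRB$-domain.

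There is essentially no obstacle here: the entire mathematical content lives in Lemma~\ref{lem:qrb:prod}, and the present statement is the one-line corollary obtained by combining it with the elementary facts that a finite product of pointed dcpos is a pointed dcpo and that a product of two countable sets is countable. The only point worth a moment's attention is that the coordinatewise order on $X\times Y$ is the one whose Scott topology and directed suprema are computed componentwise, which is immediate.
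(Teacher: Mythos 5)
Your proposal is correct and matches the paper exactly: the paper also derives this lemma as an immediate corollary of Lemma~\ref{lem:qrb:prod}, leaving the pointedness of $X\times Y$ and the countability of $I\times J$ implicit. One cosmetic slip: with the paper's convention that $\varphi\leq\psi$ iff $\varphi(x)\supseteq\psi(x)$, the witness $\chi_{i''j''}$ lies \emph{above} $\chi_{ij}$ and $\chi_{i'j'}$ (its values are the smaller sets), not below, though this does not affect the argument.
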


Recall that a retraction $p : X \to Y$, with section $e : Y \to X$, is
a projection iff, additionally, $e (p (x)) \leq x$ for every $x \in
X$; then $e$ is usually called an {\em embedding\/}, and is determined
uniquely from $p$.  An \emph{expanding system} of dcpos is a family
${(X_i)}_{i \in I}$, where $I$ is a directed poset (with ordering
$\leq$), with projection maps ${(p_{ij})}_{i,j\in I, i\leq j}$ where
$p_{ij} : X_j \to X_i$, $p_{ii} = \identity {X_i}$, and $p_{ik} =
p_{ij} \circ p_{jk}$ whenever $i\leq j\leq k$
\cite[Section~3.3.2]{AJ:domains}. This is nothing else than a
projective system of dcpos, where the connecting maps $p_{ij}$ must be
projections.  If $e_{ij} : X_i \to X_j$ is the associated embedding,
then one checks that $e_{ii} = \identity {X_i}$ and $e_{ik} = e_{jk}
\circ e_{ij}$ whenever $i \leq j \leq k$, so that ${(X_i)}_{i \in I}$
together with ${(e_{ij})}_{i, j \in I, i \leq j}$ forms an inductive
system of dcpos as well.  In the category of dcpos, the projective
limit of the former coincides with the inductive limit of the latter
(up to natural isomorphism), and is called the \emph{bilimit} of the
expanding system of dcpos.  We write this bilimit as $\lim_{i \in I}
X_i$, leaving the dependence on $\leq$, $p_{ij}$, $e_{ij}$, implicit.
This can be built as the dcpo of all those elements $\vec x =
{(x_i)}_{i \in I} \in \prod_{i \in I} X_i$ such that $p_{ij} (x_j) =
x_i$ for all $i, j \in I$ with $i \leq j$, with the componentwise
ordering.

General bilimits of countably-based dcpos will fail to be
countably-based in general, so we shall restrict to bilimits of
\emph{expanding sequences} of dcpos
\cite[Definition~3.3.6]{AJ:domains}: these are expanding systems of
dcpos where the index poset $I$ is $\nat$, with its usual ordering.
To make it clear what we are referring to, we shall call
\emph{$\omega$-bilimit} of spaces any bilimit of an expanding sequence
(not system) of spaces.

One can appreciate bilimits by realizing that the $\B$-domains are (up
to isomorphism) the bilimits of expanding systems of finite, pointed
posets \cite[Theorem~4.2.7]{AJ:domains}.  Similarly, the
$\omega\B$-domains are the $\omega$-bilimits of expanding sequences of
finite, pointed posets.

Bilimits are harder to deal with than products.  But the difficulty
was solved by Jung \cite[Section~4.1]{Jung:CCC} in the case of
$\RB$-domains and deflations, and we proceed in a very similar way.
We first recapitulate the notion of bilimit.

Consider any set $G$ of functions $\psi$ from $X$ to $\Fin (X)$ such
that $\psi (x) \supseteq \upc x$, i.e., $x \in \psi (x)$, for every $x
\in X$.  We say that $G$ is {\em qfs\/} (for {\em quasi-finitely
  separating\/}) iff given any finitely many pairs $(\upc E_k, x_k)
\in \Fin (X) \times X$ with $\upc E_k \cll x_k$, $1\leq k\leq n$,
there is a $\psi \in G$ that {\em separates\/} the pairs, i.e., such
that $\upc E_k \supseteq \psi (x_k) \supseteq \upc x_k$ (equivalently,
$x_k \in \psi (x_k) \subseteq \upc E_k$) for every $k$, $1\leq k\leq
n$.
\begin{prop}
  \label{prop:qfs}
  Let $X$ be a poset.  Then $X$ is a $\QRB$-domain iff $X$ is a
  quasi-continuous dcpo and the set $G$ of quasi-deflations on $X$ is
  qfs.
\end{prop}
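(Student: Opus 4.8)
The plan is to prove both implications directly, using Corollary~\ref{corl:qrb:qcont} for quasi-continuity and Proposition~\ref{prop:heckmann} as the main workhorse, with $G$ itself playing the role of the generating family in the converse direction. For the forward implication, suppose $X$ is a $\QRB$-domain with generating family of quasi-deflations ${(\varphi_i)}_{i \in I}$: quasi-continuity is Corollary~\ref{corl:qrb:qcont}, and for the qfs property we take pairs $(\upc E_k, x_k)$, $1 \leq k \leq n$, with $\upc E_k \cll x_k$. For each $k$, $\upc x_k = \bigcap_{i \in I}^\downarrow \varphi_i (x_k)$ is a filtered intersection of finitary compacts contained in the Scott-open set $\uuarrow E_k$ (which contains $x_k$ since $\upc E_k \cll x_k$), so Proposition~\ref{prop:heckmann} yields $i_k$ with $\varphi_{i_k} (x_k) \subseteq \uuarrow E_k \subseteq \upc E_k$. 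Picking $i \in I$ above $i_1, \ldots, i_n$ by directedness, the quasi-deflation $\varphi_i \in G$ satisfies $\upc x_k \subseteq \varphi_i (x_k) \subseteq \varphi_{i_k} (x_k) \subseteq \upc E_k$ for every $k$ ($\upc x_k \subseteq \varphi_i(x_k)$ because $x_k \in \varphi_i(x_k)$ and $\varphi_i(x_k)$ is upward closed), i.e.\ $\varphi_i$ separates all the pairs.

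For the converse, assume $X$ is a quasi-continuous dcpo and $G$ is qfs; since $\QRB$-domains are by definition pointed we take $X$ to have a least element (and $G \neq \emptyset$, by qfs applied to the empty list of pairs). We claim $G$ itself is a generating family of quasi-deflations. Its members are quasi-deflations by definition, so nothing is to be checked there. The identity $\bigcap_{\psi \in G}^\downarrow \psi (x) = \upc x$ holds because $\supseteq$ is immediate ($x \in \psi(x)$ and $\psi(x)$ is upward closed), while for $\subseteq$: if $y \notin \upc x$, quasi-continuity of $X$ gives $\upc E \in \Fin(X)$ with $\upc E \cll x$ and $y \notin \upc E$, and qfs applied to the single pair $(\upc E, x)$ returns $\psi \in G$ with $\psi(x) \subseteq \upc E$, hence $y \notin \psi(x)$. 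So once $G$ is shown to be directed, ${(\psi)}_{\psi \in G}$ is a generating family and $X$ is a $\QRB$-domain.

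Directedness of $G$ is the real content of the converse, and the only place where qfs is genuinely used here. Given $\psi_1, \psi_2 \in G$ we must build $\psi \in G$ with $\psi(x) \subseteq \psi_1(x) \cap \psi_2(x)$ for every $x$. The naive choice, the pointwise intersection $x \mapsto \psi_1(x) \cap \psi_2(x)$, does not work: the intersection of two finitary compacts in a quasi-continuous dcpo need be neither finitary nor even compact, and the assignment need not be Scott-continuous. The plan is instead to note that $X$ is locally finitary (Proposition~\ref{prop:locfin=qcont}) and that, for each $x$, writing $\psi_1(x) = \upc E$ and $\psi_2(x) = \upc F$ with $\upc E \cll x$, $\upc F \cll x$ by Lemma~\ref{lemma:qdefl:ll}, the set $\psi_1(x) \cap \psi_2(x)$ is a neighbourhood of $x$ (it contains the Scott-open $\uuarrow E \cap \uuarrow F \ni x$), hence contains a finitary compact neighbourhood of $x$; one then feeds qfs a finite list of valid pairs built from the finitely many values in $\img\psi_1$ and $\img\psi_2$ together with suitable interpolating finitary compacts, and argues that the returned separator $\psi$ dominates $\psi_1$ and $\psi_2$ not only on the chosen test points but everywhere, using the finite-image and monotonicity properties of $\psi_1$, $\psi_2$, and $\psi$. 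I expect exactly this last step — upgrading ``$\psi$ behaves correctly on the finitely many test pairs'' to ``$\psi \geq \psi_1$ and $\psi \geq \psi_2$ pointwise'' — to be the main obstacle: there is no obvious way to certify $\psi \geq \psi_1$ by finitely many pairs, since as $x$ grows $\psi_1(x)$ shrinks while $\psi$ is only controlled from above along its finite image, so the argument must exploit carefully how the finite images interact with monotonicity, and organising that bookkeeping correctly is the heart of the proof.
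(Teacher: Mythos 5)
Your forward direction is exactly the paper's argument and is correct. The converse, however, has a genuine gap, and you have in fact put your finger on it yourself: you propose to use $G$ itself as the generating family, which forces you to prove that $G$ is directed, and you concede that you cannot upgrade ``$\psi$ is below $\psi_1,\psi_2$ on finitely many test pairs'' to ``$\psi$ is below $\psi_1,\psi_2$ everywhere.'' That obstacle is real: qfs only constrains the chosen $\psi$ at finitely many points, while $\psi_1(x)$ varies with $x$ over all of $X$, and nothing in the hypotheses lets you control $\psi$ on the (unknown, possibly incomparable) level sets of $\psi_1$. The paper never proves that $G$ is directed, and it is not clear that it is.

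The missing idea is to change the candidate family: the paper takes $H = \{\varphi^\dagger \circ \varphi \mid \varphi \in G\}$ rather than $G$. Each $\varphi^\dagger\circ\varphi$ is again a quasi-deflation, and the point of the composition is that $(\varphi^\dagger\circ\varphi)(x) = \bigcup_{y \in \varphi(x)} \varphi(y)$ depends on $x$ only through the value $\varphi(x)$, which is generated by the finite set $E$ of elements occurring in $\img\varphi$. So dominating $\varphi^\dagger\circ\varphi$ everywhere reduces to finitely many conditions indexed by $y \in E$ --- precisely the kind of data qfs can separate. Concretely, one interpolates $\varphi(y) \cll \upc E_y \cll y$ for each $y \in E$ (and likewise for $\varphi'$), feeds qfs the pairs $(\upc E_y, y)$ and $(\varphi(y), z)$ for $z \in E_y$, and the returned $\psi$ satisfies $\psi^\dagger\circ\psi \leq \varphi^\dagger\circ\varphi$ and $\psi^\dagger\circ\psi \leq {\varphi'}^\dagger\circ\varphi'$ pointwise, via the monad laws. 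One must then also re-verify that $H$ generates, i.e.\ $\bigcap_{\varphi \in G}(\varphi^\dagger\circ\varphi)(x) = \upc x$; since $(\varphi^\dagger\circ\varphi)(x)$ is larger than $\varphi(x)$, your one-pair argument needs an extra interpolation step ($\upc E \cll \upc E' \cll x$, with qfs applied to $(\upc E', x)$ and to $(\upc E, y)$ for each $y \in E'$). Without passing to $H$, the ``bookkeeping'' you hope will rescue directedness of $G$ does not go through. (A separate, minor point shared with the paper: pointedness of $X$ is not actually a consequence of the stated hypotheses in the converse direction --- a two-element antichain is a quasi-continuous dcpo whose quasi-deflations form a qfs set --- so it must be read as an implicit standing assumption, as you note.)
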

\begin{proof}
  If $X$ is a $\QRB$-domain, then let $(\upc E_k, x_k) \in \Fin (X)
  \times X$ be such that $\upc E_k \cll x_k$ for every $k$, $1\leq
  k\leq n$, and ${(\varphi_i)}_{i \in I}$ be a generating family of
  quasi-deflations.  For each $k$, $1\leq k\leq n$, $\upc x_k =
  \bigcap_{i \in I}^\downarrow \varphi_i (x_k) \subseteq \uuarrow
  E_k$, so by Proposition~\ref{prop:heckmann} there is an $i \in I$
  such that $\varphi_i (x_k) \subseteq \uuarrow E_k \subseteq \upc
  E_k$.  And we may pick the same $i$ for every $k$, by directedness.
  So $\varphi_i$ is the desired $\psi \in G$.

  Also, $X$ is a quasi-continuous dcpo by
  Corollary~\ref{corl:qrb:qcont}.

  Conversely, assume that $X$ is a quasi-continuous dcpo and $G$ is
  qfs.  We show that $H = \{\varphi^\dagger \circ \varphi \mid \varphi
  \in G\}$ is a generating family of quasi-deflations.  Using
  Corollary~\ref{corl:BoxU}, $\Fin_\V (X) = \Fin_\sigma (X)$.  Write
  it $\Fin (X)$, for short.  For each $\varphi \in G$, $\varphi$ is
  continuous from $X$ to $\Fin (X)$, and $\varphi^\dagger$ is
  continuous from $\Fin (X)$ to $\Fin (X)$ by
  Lemma~\ref{lemma:dagger:new}, so $\varphi^\dagger \circ \varphi$ is
  continuous from $X$ to $\Fin (X)$.  Since $x \in \varphi (x)$, $x$
  is also in $\bigcup_{x' \in \varphi (x)} \varphi (x') =
  (\varphi^\dagger \circ \varphi) (x)$.  Also, $\img (\varphi^\dagger
  \circ \varphi)$ is finite, since all its elements are unions of
  elements of the finite set $\img \varphi$.  So $\varphi^\dagger
  \circ \varphi$ is a quasi-deflation.

  Let us show that $H$ is directed.  Pick $\varphi$ and $\varphi'$
  from $G$.  Let $\img \varphi = \{\upc E_1, \ldots, \upc E_m\}$, and
  $E = \bigcup_{i=1}^m E_i$.  Similarly, let $\img \varphi' = \{\upc
  E'_1, \ldots, \upc E'_n\}$ and $E' = \bigcup_{j=1}^n E'_j$.  For
  each $y \in E$, $\varphi (y) \cll y$ by Lemma~\ref{lemma:qdefl:ll}.
  Since $X$ is quasi-continuous, use interpolation, and pick a
  finitary compact $\upc E_y$ such that $\varphi (y) \cll \upc E_y
  \cll y$.  Similarly, let $\upc E'_{y'}$ be a finitary compact such
  that $\upc E'_{y'} \cll y'$ and $\varphi' (y') \cll \upc E'_{y'}$
  for each $y' \in E'$.

  Consider the finite collection of all pairs $(\upc E_y, y)$,
  $(\varphi (y), z)$, $(\upc E'_{y'}, y')$, and $(\varphi' (y'), z')$,
  where $y \in E$, $z \in E_y$, $y' \in E'$, $z' \in E_{y'}$.  Since
  $G$ is qfs, there is a $\psi \in G$ such that $\upc E'' \supseteq
  \psi (x) \supseteq \upc x$ for all the above pairs $(E'', x)$.  In
  particular, looking at the pair $(\upc E_y, y)$, we get: $(a)$ $\upc
  E_y \supseteq \psi (y)$ for every $y \in E$.  And looking at the
  pair $(\varphi (y), z)$, $\varphi (y) \supseteq \psi (z)$ for all $y
  \in E$, $z \in E_y$.  So $\varphi (y) \supseteq \bigcup_{z \in E_y}
  \psi (z) = \bigcup_{z \in \upc E_y} \psi (z) = \psi^\dagger (\upc
  E_y)$.  We have proved: $(b)$ $\varphi (y) \supseteq \psi^\dagger
  (\upc E_y)$ for every $y \in E$.  Then, for every $x \in X$,
  $(\varphi^\dagger \circ \varphi) (x) = \bigcup_{y \in \varphi (x)}
  \varphi (y) \supseteq 
  \bigcup_{y \in \varphi (x)} \psi^\dagger (\upc E_y)$ (by $(b)$)
  $\supseteq 
  \bigcup_{y \in \varphi (x)} (\psi^\dagger \circ \psi) (y)$ (by
  $(a)$) $= (\psi^\dagger \circ \psi)^\dagger (\varphi (x)) \supseteq
  (\psi^\dagger \circ \psi)^\dagger (\upc x)$ (since $\varphi (x)
  \supseteq \upc x$) $= (\psi^\dagger \circ \psi)^\dagger (\eta_X (x))
  = (\psi^\dagger \circ \psi) (x)$ (by one of the monad laws).  So
  $\varphi^\dagger \circ \varphi$ is below $\psi^\dagger \circ \psi$.
  Similarly, ${\varphi'}^\dagger \circ {\varphi'}$ is below
  $\psi^\dagger \circ \psi$, so $H$ is directed.

  Finally, we claim that $\bigcap_{\varphi \in G} (\varphi^\dagger
  \circ \varphi) (x) = \upc x$.  In the $\supseteq$ direction, this is
  because $\varphi^\dagger \circ \varphi$ is a quasi-retraction.
  Conversely, let $\upc E \in \Fin (X)$ be such that $\upc E \cll x$.
  By interpolation, find $\upc E' \in \Fin (X)$ such that $\upc E \cll
  \upc E' \cll x$.  Since $G$ is qfs, applied to the pairs $(\upc E',
  x)$ and $(\upc E, y)$ for each $y \in E'$, there is an element
  $\varphi \in G$ such that $\upc E' \supseteq \varphi (x)$ and $\upc
  E \supseteq \varphi (y)$ for every $y \in E'$.  So $\upc E \supseteq
  \varphi^\dagger (\upc E') \supseteq (\varphi^\dagger \circ \varphi)
  (x)$.  So $\bigcap_{\varphi \in G} (\varphi^\dagger \circ \varphi)
  (x) \subseteq \bigcap_{\upc E \in \Fin (X),\: \upc E \cll
    x}^\downarrow \upc E = \upc x$, as $X$ is quasi-continuous.
\end{proof}

\begin{thm}
  \label{thm:qrb:bilimit}
  Any ($\omega$-)bilimit of ($\omega$)$\QRB$-domains is an
  ($\omega$)$\QRB$-domain.
\end{thm}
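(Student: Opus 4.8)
The plan is to apply the characterization of Proposition~\ref{prop:qfs}: it suffices to show that the bilimit is a quasi-continuous dcpo whose set of quasi-deflations is qfs.  So let $X$ be the ($\omega$-)bilimit of an expanding system (an expanding sequence, in the $\omega$ case) of ($\omega$)$\QRB$-domains ${(X_i)}_{i \in I}$, and write $e_i \colon X_i \to X$, $p_i \colon X \to X_i$ for the limiting embeddings and projections.  Recall that $X$ is a pointed dcpo, that $p_i \circ e_i = \identity{X_i}$ and $e_i \circ p_i \leq \identity X$, and that the maps $e_i \circ p_i$ form a directed family with $\sup_{i \in I} (e_i \circ p_i) = \identity X$ \cite[Section~3.3]{AJ:domains}.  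The basic tool is a way of \emph{transferring} a quasi-deflation $\varphi$ on some $X_i$ to a map $\widehat\varphi \colon X \to \Fin_\sigma (X)$, namely $\widehat\varphi (x) = \upc \{e_i (z) \mid z \in \varphi (p_i (x))\}$, i.e.\ $\widehat\varphi = \Smyth{e_i} \circ \varphi \circ p_i$.  Then $\img \widehat\varphi$ is finite (as $\img \varphi$ is), $x \in \widehat\varphi (x)$ (since $e_i (p_i (x)) \in \widehat\varphi (x)$ and $e_i (p_i (x)) \leq x$), and $\widehat\varphi$ is continuous from $X$ into $\Fin_\V (X)$, being the composite of $p_i$, of $\varphi$ seen as a map into $\Fin_\V (X_i) = \Fin_\sigma (X_i)$ (legitimate by Corollary~\ref{corl:BoxU}, $X_i$ being quasi-continuous), and of the restriction of $\Smyth{e_i}$ to a continuous map $\Fin_\V (X_i) \to \Fin_\V (X)$ (Lemma~\ref{lemma:dagger:new}); since $X$ is a dcpo with its Scott topology, hence a quasi monotone convergence space, Corollary~\ref{corl:fin:cont:Fin} makes $\widehat\varphi$ continuous into $\Fin_\sigma (X)$.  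So each $\widehat\varphi$ is a quasi-deflation on $X$, and $\varphi \mapsto \widehat\varphi$ is monotone.

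Next I would show that $X$ is a quasi-continuous dcpo, via Proposition~\ref{prop:locfin=qcont}: it is enough that $X$ be locally finitary and sober.  For local finitariness, let $x \in X$ and let $U$ be a Scott-open neighbourhood of $x$.  Since $\sup_i e_i (p_i (x)) = x$, there is an $i$ with $e_i (p_i (x)) \in U$, so $p_i (x)$ lies in the Scott-open set $e_i^{-1} (U)$ of $X_i$; as $X_i$ is quasi-continuous, hence locally finitary, we get a finitary compact $\upc E'$ of $X_i$ with $p_i (x) \in \interior {\upc E'}$ and $\upc E' \subseteq e_i^{-1} (U)$.  Then $\upc e_i (E')$ is a finitary compact of $X$ with $\upc e_i (E') \subseteq U$ (each $e_i (w)$, $w \in E'$, is in the upward closed set $U$), and $x \in \interior {\upc e_i (E')}$: the set $p_i^{-1} (\interior {\upc E'})$ is Scott-open in $X$, contains $x$, and is contained in $\upc e_i (E')$ (if $p_i (y) \in \interior {\upc E'} \subseteq \upc E'$ then $w \leq p_i (y)$ for some $w \in E'$, so $e_i (w) \leq e_i (p_i (y)) \leq y$).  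For soberness I would invoke a standard fact about bilimits: each $X_i$ is stably compact (Theorem~\ref{thm:qrb:scomp}), hence so is the ($\omega$-)bilimit $X$ (for instance, its Lawson topology is the compact Hausdorff limit of those of the $X_i$, cf.\ Corollary~\ref{corl:qrb:lcomp}), and a stably compact space is sober.  This last point is, as in Jung's treatment of $\RB$-domains \cite[Section~4.1]{Jung:CCC}, the delicate part; the remaining steps are bookkeeping.

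With $X$ now a quasi-continuous dcpo, I would check that its set $G$ of quasi-deflations is qfs.  Let $(\upc E_k, x_k) \in \Fin (X) \times X$ with $\upc E_k \cll x_k$ for $1 \leq k \leq n$.  Each $x_k$ lies in $\interior {\upc E_k} = \uuarrow E_k$, which is Scott-open since $X$ is quasi-continuous.  Because the $e_i \circ p_i$ form a directed family with supremum $\identity X$ and each $\uuarrow E_k$ is upward closed, we may pick a single $i$ with $e_i (p_i (x_k)) \in \uuarrow E_k$, i.e.\ $p_i (x_k) \in e_i^{-1} (\uuarrow E_k)$, for every $k$.  Applying Lemma~\ref{lemma:psi*} in $X_i$ to each of the finitely many open sets $e_i^{-1} (\uuarrow E_k)$, and using the directedness of a generating family ${(\varphi_m)}_m$ of quasi-deflations of $X_i$, we find one index $m_0$ with $\varphi_{m_0} (p_i (x_k)) \subseteq e_i^{-1} (\uuarrow E_k)$ for all $k$.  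Then $\widehat{\varphi_{m_0}} \in G$ separates all the pairs: for each $k$, every $z \in \varphi_{m_0} (p_i (x_k))$ has $e_i (z) \in \uuarrow E_k$, so $\widehat{\varphi_{m_0}} (x_k) = \upc \{e_i (z) \mid z \in \varphi_{m_0} (p_i (x_k))\} \subseteq \uuarrow E_k \subseteq \upc E_k$, while $\widehat{\varphi_{m_0}} (x_k) \supseteq \upc x_k$ because $\widehat{\varphi_{m_0}}$ is a quasi-deflation.  Hence $G$ is qfs, and Proposition~\ref{prop:qfs} gives that $X$ is a $\QRB$-domain.

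Finally, in the $\omega$ case ($I = \nat$, an expanding sequence of $\omega\QRB$-domains) the bilimit is countably-based — an $\omega$-bilimit of countably-based dcpos is countably-based — so, being a $\QRB$-domain, it is an $\omega\QRB$-domain by Proposition~\ref{prop:omega:qrb:2}.  The main obstacle is the soberness of the bilimit in the second paragraph, i.e.\ the good behaviour of the Scott topology of an infinite bilimit; this is exactly where the $\RB$-domain argument also has its one hard step, everything else being routine manipulation of the transfer operation $\varphi \mapsto \widehat\varphi$ together with results already proved.
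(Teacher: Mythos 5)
Your overall strategy is the paper's: transfer quasi-deflations along $\Smyth e_i \circ \varphi \circ p_i$, verify that the set of quasi-deflations on the bilimit is qfs, and conclude by Proposition~\ref{prop:qfs}. Your qfs verification is correct and essentially identical in substance to the paper's (you use $\sup_i e_i\circ p_i = \identity X$ plus Lemma~\ref{lemma:psi*} where the paper uses Proposition~\ref{prop:heckmann} on a family $\mathcal D_{\vec x}$; same content). The problem is the step you yourself flag as delicate: establishing that $X$ is a quasi-continuous dcpo. You route this through Proposition~\ref{prop:locfin=qcont}, which requires sobriety of the Scott topology of $X$, and you discharge sobriety by asserting that a bilimit of stably compact dcpos is stably compact, ``its Lawson topology being the compact Hausdorff limit of those of the $X_i$''. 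That assertion is not proved in the paper, and it is not a safely citable fact in the form you need: the known closure of stably compact spaces under projective limits lives in the category of topological spaces, where the limit carries the initial topology from the $p_i$, and the whole difficulty is whether that topology coincides with the Scott topology of the dcpo bilimit. For general (non-continuous) dcpos this identification is exactly the kind of thing that can fail, and proving it here would amount to proving the quasi-continuity you are trying to deduce from it. So as written there is a circular IOU at the one genuinely hard point of the theorem.

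The good news is that your local-finitariness computation already contains almost all of the missing argument, and the paper's proof shows how to finish without ever mentioning sobriety. For each $\vec x$ in the bilimit, consider the family $\mathcal D_{\vec x}$ of all finitary compacts $\Smyth e_i (\upc E)$ with $\upc E \in \Fin (X_i)$ and $\upc E \cll p_i (\vec x)$. One first checks that $\upc E \cll p_i (\vec x)$ implies $\Smyth e_{ij} (\upc E) \cll p_j (\vec x)$ for $j \geq i$; this lets one push any two members of $\mathcal D_{\vec x}$ into a common $X_k$, where interpolation in the quasi-continuous dcpo $X_k$ produces a common refinement, so $\mathcal D_{\vec x}$ is \emph{directed}. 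One then shows $\bigcap_{\Smyth e_i (\upc E) \in \mathcal D_{\vec x}} \Smyth e_i (\upc E) = \upc \vec x$ by projecting down with $\Smyth p_i$ and using quasi-continuity of each $X_i$. This verifies the definition of quasi-continuity directly. Directedness of the approximating family is precisely what your argument omits (a neighborhood basis of finitary compacts at each point does not by itself give a quasi-continuous dcpo without sobriety), so you should replace the appeal to stable compactness of the bilimit with this explicit directedness argument. The same $\mathcal D_{\vec x}$ construction also justifies your final claim that the $\omega$-bilimit is countably-based, which as a blanket statement about dcpos is likewise not automatic.
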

\begin{proof}
  Let ${(X_i)}_{i \in I}$ be an expanding system of $\QRB$-domains,
  with projections $p_{ij} : X_j \to X_i$ and embeddings $e_{ij} : X_i
  \to X_j$, $i\leq j$.  Let $X = \lim_{i \in I} X_i$.  There is a
  projection $p_i : X \to X_i$, given by $p_i (\vec x) = x_i$ (where
  $\vec x = {(x_i)}_{i \in I}$), and an embedding $e_i : X_i \to X$
  for every $i \in I$.

  We observe that: $(a)$ if $\upc E \cll p_i (\vec x)$ in $X_i$, then
  $\Smyth e_{ij} (\upc E) \cll p_j (\vec x)$ for every $j \geq i$.
  Indeed, consider any directed family ${(y_k)}_{k \in K}$ such that
  $p_j (\vec x) \leq \sup_{k \in K} y_k$.  Then $p_i (\vec x) = p_{ij}
  (p_j (\vec x)) \leq \sup_{k \in K} p_{ij} (y_k)$, so for some $k \in
  K$, there is a $z \in E$ with $z \leq p_{ij} (y_k)$.  Then $e_{ij}
  (z) \leq e_{ij} (p_{ij} (y_k)) \leq y_k$.  We conclude since $e_{ij}
  (z) \in \Smyth e_{ij} (\upc E)$.

  We now claim that the family $\mathcal D_{\vec x}$ of all finitary
  compacts of the form $\Smyth e_i (\upc E)$, where $\upc E \in \Fin
  (X_i)$ and $\upc E \cll p_i (\vec x)$, $i \in I$, is directed.
  Given $\Smyth e_i (\upc E)$ and $\Smyth e_j (\upc E')$ in $\mathcal
  D_{\vec x}$, find some $k \in I$ such that $i, j \leq k$, by
  directedness.  Then $\Smyth e_i (\upc E) = \Smyth e_k (\Smyth e_{ik}
  (\upc E))$, and by $(a)$ $\Smyth e_{ik} (\upc E) \cll p_k (\vec x)$,
  and similarly $\Smyth e_j (\upc E') = \Smyth e_k (\Smyth e_{jk}
  (\upc E'))$, with $\Smyth e_{jk} (\upc E') \cll p_k (\vec x)$.
  Replacing $i$ by $k$, $\upc E$ by the finitary compact $\Smyth
  e_{ik} (\upc E)$, $j$ by $k$, and $\upc E'$ by $\Smyth e_{jk} (\upc
  E')$ if necessary, we can therefore simply assume that $i=j$.  Since
  $X_i$ is quasi-continuous, there is an $E'' \in \Fin (X_i)$ such
  that $\upc E, \upc E' \cll \upc E'' \cll p_i (\vec x)$, and then
  $\Smyth e_i (\upc E'')$ is an element of $\mathcal D_x$ above both
  $\Smyth e_i (\upc E)$ and $\Smyth e_i (\upc E')$.

  Moreover, we claim that $\bigcap_{\Smyth e_i (\upc E) \in \mathcal
    D_{\vec x}} \Smyth e_i (\upc E)$ equals $\upc \vec x$.  That it
  contains $\vec x$ is obvious: whenever $\upc E \cll p_i (\vec x)$,
  pick $z \in E$ with $z \leq p_i (\vec x)$, so that $e_i (z) \leq e_i
  (p_i (\vec x)) \leq \vec x$, hence $\vec x \in \Smyth e_i (\upc E)$.
  Conversely, every $\vec z \in \bigcap_{\Smyth e_i (\upc E) \in
    \mathcal D_{\vec x}} \Smyth e_i (\upc E)$ must be such that $z_i =
  p_i (\vec z) \in \Smyth p_i (\bigcap_{\upc E \cll p_i (\vec x)}
  \Smyth e_i (\upc E)) \subseteq \bigcap_{\upc E \cll p_i (\vec x)}
  \Smyth p_i (\Smyth e_i (\upc E)) = \bigcap_{\upc E \cll p_i (\vec
    x)} \upc E = \upc p_i (\vec x) = \upc x_i$, since $X_i$ is
  quasi-continuous.  As this holds for every $i$, $\vec x \leq \vec
  z$.  So $\bigcap_{\Smyth e_i (\upc E) \in \mathcal D_{\vec x}}
  \Smyth e_i (\upc E) \subseteq \upc \vec x$.

  In particular, $X$ is a quasi-continuous dcpo.

  We check that the set of quasi-deflations on $X$ is qfs.  Consider a
  finite collection of pairs $(\upc \vec D_k, \vec x_k) \in \Fin (X)
  \times X$ with $\upc \vec D_k \cll \vec x_k$, $1\leq k\leq n$.
  Recall that $\upc \vec D_k \cll \vec x_k$ can be rephrased
  equivalently as: $\vec x_k$ is in the open subset $\uuarrow \vec
  D_k$.
  Since $\bigcap_{\Smyth e_i (\upc E) \in \mathcal D_{\vec x_k}}
  \Smyth e_i (\upc E) = \upc \vec x_k$, by
  Proposition~\ref{prop:heckmann}, for each $k$, pick $\Smyth e_i
  (\upc E_k) \in \mathcal D_{\vec x_k}$ included in $\uuarrow \vec
  D_k$, in particular above $\upc \vec D_k$.  I.e., pick $i \in I$ and
  $\upc E_k \in \Fin (X_i)$ such that $\upc E_k \cll p_i (\vec x_k)$,
  and such that $\upc \vec D_k \supseteq \Smyth e_i (\upc E_k)$.
  (We can pick the same $i$ for every $k$, by directedness, as above.)
  Since $X_i$ is a $\QRB$-domain, and $\upc E_k \cll p_i (\vec x_k)$,
  using Proposition~\ref{prop:heckmann}, there is a quasi-deflation
  $\varphi$ on $X_i$ such that $\varphi (p_i (\vec x_k)) \subseteq
  \uuarrow E_k$.  So $\varphi (p_i (\vec x_k)) \subseteq \upc E_k$,
  for every $k$, $1\leq k\leq n$.  Consider $\psi : X \to \Fin (X)$
  defined as $\Smyth e_i \circ \varphi \circ p_i$.  $\Smyth e_i$,
  restricted to $\Fin (X_i)$, takes its values in $\Fin (X)$, using
  Lemma~\ref{lemma:dagger:new} and the fact that $\Smyth e_i =
  {(\eta_X \circ e_i)}^\dagger$.  Moreover, $\psi$ is continuous from
  $X$ to $\Fin_\V (X)$, hence to $\Fin_\sigma (X)$ since $X$ is
  quasi-continuous, by Corollary~\ref{corl:BoxU}.  For every $\vec x
  \in X$, $p_i (\vec x) \in \varphi (p_i (\vec x))$, since $\varphi$
  is a quasi-deflation.  Then $e_i (p_i (\vec x))$ is below $\vec x$,
  and is in $\psi (\vec x)$, so $\vec x \in \psi (\vec x)$.  So $\psi$
  is a quasi-deflation.

  Moreover, by construction, for each $k$, $1\leq k\leq n$, $\varphi
  (p_i (\vec x_k)) \subseteq \upc E_k$, so $\psi (\vec x_k) \subseteq
  \Smyth e_i (\upc E_k)$, so $\psi (\vec x_k) \subseteq \upc \vec
  D_k$, since $\upc \vec D_k \supseteq \Smyth e_i (\upc E_k)$.  So the set
  of quasi-deflations on $X$ is qfs.

  By Proposition~\ref{prop:qfs}, $X$ is then a $\QRB$-domain.

  To deal with $\omega$-bilimits of $\omega\QRB$-domains, observe that
  any bilimit of a countable expanding system (in particular, an
  expanding sequence) of countably-based quasi-continuous dcpos is
  countably-based.  Indeed, a countably based quasi-continuous dcpo
  $X_i$ has a countable base of sets of the form $\uuarrow E_{ik}$,
  $\upc E_{ik} \in \Fin (X_i)$, $k \in \nat$.  The $\mathcal D_{\vec
    x}$ construction above, suitably modified, shows that the sets
  $\uuarrow \vec {E'}_{ik}$, where $\upc \vec {E'}_{ik} = \Smyth e_i
  (E_{ik})$, $i, k \in \nat$, form a, necessarily countable, base of
  the topology on $X$.  By Proposition~\ref{prop:omega:qrb:2}, $X$ is
  an $\omega\QRB$-domain.
\end{proof}

\section{The Probabilistic Powerdomain}
\label{sec:qretr:V}

Let $X$ be a fixed topological space, and let $\Open (X)$ be the
lattice of open subsets of $X$.
A {\em continuous valuation\/} $\nu$ on $X$ \cite{JP:proba} is a map
from $\Open (X)$ to $\real^+$ such that $\nu (\emptyset)=0$, which is
{\em monotonic\/} ($\nu (U) \leq \nu (V)$ whenever $U \subseteq V$),
{\em modular\/} ($\nu (U \cup V) + \nu (U \cap V) = \nu (U) + \nu (V)$
for all opens $U, V$), and {\em continuous\/} ($\nu (\bigcup_{i \in
  I}^\uparrow U_i)= \sup_{i \in I} \nu (U_i)$ for every directed
family ${(U_i)}_{i \in I}$ of opens).  A {\em (sub)probability\/}
valuation $\nu$ is additionally such that $\nu$ is {\em
  (sub)normalized\/}, i.e., that $\nu (X)=1$ ($\nu (X) \leq 1$).  Let
$\Val_1 (X)$ ($\Val_{\leq 1} (X)$) be the dcpo of all (sub)probability
valuations on $X$, ordered pointwise, i.e., $\nu \leq \nu'$ iff $\nu
(U) \leq \nu' (U)$ for every open $U$.  $\Val_1$ ($\Val_{\leq 1}$)
defines a endofunctor on the category of dcpos, and its action is
defined on morphisms $f$ by $\Val_1 f (\nu) (U) = \nu (f^{-1} (U))$.

We write $\delta_x$ for the \emph{Dirac valuation} at $x$, a.k.a., the
point mass at $x$.  This is the continuous valuation such that
$\delta_x (U)=1$ if $x \in U$, $\delta_x (U) = 0$ otherwise.

The probabilistic powerdomain construction $\Val_1$ is an elusive one,
and natural intuitions are often wrong.  For example, one might
imagine that if $X$ has all binary least upper bounds, then so has
$\Val_1 (X)$.  This was dispelled by Jones and Plotkin
\cite{JP:proba}.  Consider $X = \{\bot, a, b, \top\}$, with $a$ and
$b$ incomparable, $\bot$ below every element and $\top$ above every
element (see Figure~\ref{fig:v3}, right).  Then the upper bounds of
$\frac 1 2 \delta_\bot + \frac 1 2 \delta_a$ and $\frac 1 2
\delta_\bot + \frac 1 2 \delta_b$ in $\Val_1 (X)$ are the probability
valuations of the form $(1 - \alpha_a - \alpha_b - \alpha_\top)
\delta_\bot + \alpha_a \delta_a + \alpha_b \delta_b + \alpha_\top
\delta_\top$ where $\alpha_a + \alpha_\top \geq \frac 1 2$, $\alpha_b
+ \alpha_\top \geq \frac 1 2$, and $\alpha_a + \alpha_b + \alpha_\top
\leq 1$.  The minimal upper bounds are those of the form $\alpha
\delta_\bot + (\frac 1 2 - \alpha) \delta_a + (\frac 1 2 - \alpha)
\delta_b + \alpha \delta_\top$, $\alpha \in [0,1]$.  So there is no
unique least upper bound; in fact, there are uncountably many of them,
even on this small example.

\begin{figure}
  \centering
  \ifpdf
  \input{V-RB.pdftex_t}
  \else
  \input{V-RB.pstex_t}
  \fi
  \caption{Discretizations of $\Val_1 (X)$, $X = \{\bot, a, b, \top\}$}
  \label{fig:V-RB}
\end{figure}

It is unknown whether $\Val_1 (X)$, with $X = \{\bot, a, b, \top\}$ is
an $\RB$-domain, although it is an $\FS$-domain, as a consequence of
\cite[Theorem~17]{JT:troublesome}.  Again, some of the most natural
ideas one can have about $\Val_1 (X)$ are flawed.  It seems obvious
indeed that $\Val_1 (X)$ should be the bilimit of the sequence of
finite posets $\Val_1^{\frac 1 n} (X)$, defined as those probability
valuations $(1 - \alpha_a - \alpha_b - \alpha_\top) \delta_\bot +
\alpha_a \delta_a + \alpha_b \delta_b + \alpha_\top \delta_\top$ where
$\alpha_a$, $\alpha_b$, $\alpha_\top$ are integer multiples of $\frac
1 n$.  See Figure~\ref{fig:V-RB} for Hasse diagrams of a few of these
posets, for $n$ small.

That $\Val_1 (X)$ is such a bilimit is necessarily wrong, because any
bilimit of finite posets is an $\omega\B$-domain, hence is algebraic,
but $\Val_1 (X)$ is not algebraic, since no element except
$\delta_\bot$ is finite.

However, one may imagine to define (non-idempotent) deflations $f_n$
on $\Val_1 (X)$ directly, which would send $\nu \in \Val_1 (X)$ to
some discretized probability valuation in $\Val_1^{\frac 1 n} (X)$.
However, all known attempts fail.  A careful study of
\cite{JT:troublesome} will make this precise.  Let us only note that
if we decide to define $f_n (\nu)$ through its values on open sets,
typically letting $f_n (\nu) (U)$ be the largest integer multiple of
$\frac 1 n$ that is zero-or-strictly-below $\nu (U)$, we obtain a set
function that is not modular.  If we decide to define $f_n (\sum_{x
  \in X} \alpha_x \delta_x)$ as $\sum_{x \in X} \beta_x \delta_x$
where for each $x \neq \bot$ $\beta_x$ is the largest integer multiple
of $\frac 1 n$ that is zero-or-strictly-below $\alpha_x$, then $f_n$
is not monotonic.  If we decide to define $f_n (\nu)$ as the largest
probability valuation way-below $\nu$ in $\Val_1^{\frac 1 n} (X)$, we
run into the problem that there is no {\em unique\/} such largest
probability valuation.  For example, $\nu = \frac 1 3 \delta_a + \frac
1 3 \delta_b + \frac 1 3 \delta_\top$ admits four largest probability
valuations in $\Val_1^{\frac 1 3} (X)$ way-below it: $\frac 1 3
\delta_\bot + \frac 2 3 \delta_a$, $\frac 1 3 \delta_\bot + \frac 1 3
\delta_a + \frac 1 3 \delta_b$, $\frac 2 3 \delta_\bot + \frac 1 3
\delta_\top$, and $\frac 1 3 \delta_\bot + \frac 2 3 \delta_b$, see
Figure~\ref{fig:V-approx}.

\begin{figure}
  \centering
  \ifpdf
  \input{V-approx.pdftex_t}
  \else
  \input{V-approx.pstex_t}
  \fi
  \caption{Largest discretizations below $\nu$ fail to be unique}
  \label{fig:V-approx}
\end{figure}


Observe that the number of largest discretizations of $\nu$ in
$\Val_1^{\frac 1 n} (X)$ is always finite, provided $X$ is finite.
This was our original intuition that replacing deflations by
quasi-deflations, hence moving from $\RB$-domains to $\QRB$-domains,
might provide a nice enough category of domains that would be stable
under the probabilistic powerdomain functor $\Val_1$.  However,
defining quasi-deflations directly, as hinted above, does not work
either: monotonicity fails again.  This is where the characterization
of $\QRB$-domains as quasi-retracts of bifinite domains (up to details
we have already mentioned) will be decisive.

If $Y$ is a retract of $X$, then $\Val_1 (Y)$ is easily seen to be a
retract of $\Val_1 (X)$, using the $\Val_1$ endofunctor.  We wish to
show a similar result for quasi-retracts.  We have not managed to do
so.  Instead we shall rely on the stronger assumptions that $X$ is
stably compact, that $Y$ is a quasi-projection of $X$, not just a
quasi-retract (i.e., the image of $X$ under a proper map).

Moreover, we shall need to replace the Scott topology on $\Val_1 (X)$
by the {\em weak topology\/}, which is the smallest one containing the
subbasic opens $[U > a]$, defined as $\{\nu \in \Val_1 (X) \mid \nu
(U) > a\}$, for each open subset $U$ of $X$ and $a \in \real$.  When
$X$ is a continuous pointed dcpo, the \emph{Kirch-Tix Theorem} states
that it coincides with the Scott topology (see \cite{AMJK:scs:prob},
who attribute it to Tix \cite[Satz~4.10]{Tix:bewertung}, who in turn
attributes it to Kirch \cite[Satz~8.6]{Kirch:bewertung}).

However, the weak topology is better behaved in the general case.  For
example, writing $\creal$ for $\real^+ \cup \{+\infty\}$ with the
Scott topology, and $[X \to \creal]_{\texttt{i}}$ for the space of all
continuous maps from $X$ to $\creal$ with the Isbell topology, there
is a natural homeomorphism between the space of linear continuous maps
from $[X \to \creal]_{\texttt{i}}$ to $\creal$ and the space of of
(extended, i.e., possibly taking the value $+\infty$) continuous
valuations on $X$, with the weak topology
\cite[Theorem~8.1]{Heckmann:space:val}.  This is an analog of the
Riesz Representation Theorem in measure theory, of which one can find
variants in \cite{Tix:bewertung,Gou-csl07} among others, and which we
shall use silently in the proof of Theorem~\ref{thm:qretr:V}.  Let
$\Val_{1\;wk} (X)$ be $\Val_1 (X)$ with its weak topology.

$\Val_{1\;wk}$ defines an endofunctor on the category of topological
spaces, by $\Val_{1\;wk} (f) (\nu) (V) = \nu (f^{-1} (V))$, where $f :
X \to Y$, $\nu \in \Val_{1\;wk} (X)$, and $V \in \Open (Y)$.  That
$\Val_{1\;wk} (f)$ is continuous for every continuous $f$, in
particular, is obvious, since for every open subset $V$ of $Y$,
$\Val_{1\;wk} (f)^{-1} [V > a] = [f^{-1} (V) > a]$.

As we have said above, we shall also require $X$ to be stably compact.
If this is so, then the \emph{cocompact topology} on $X$ consists of
all complements of compact saturated subsets.  Write $X^\dG$, the
\emph{de Groot dual} of $X$, for $X$ with its cocompact topology.
Then $X^\dG$ is again stably compact, and $X^{\dG\dG} = X$ (see
\cite[Corollary~12]{AMJK:scs:prob} or
\cite[Corollary~VI-6.19]{GHKLMS:contlatt}).  The patch topology on
$X$, mentioned earlier, is nothing else than the join of the two
topologies of $X$ and $X^\dG$.

Write $X^\patch$ for $X$ equipped with its patch topology.  If $X$ is
stably compact, then $X^\patch$ is not only compact Hausdorff, but the
graph of the specialization preorder $\leq$ of $X$ is closed in
$X^\patch$: one says that $(X^\patch, \leq)$ is a \emph{compact
  pospace}.  The study of compact pospaces originates in Nachbin's
classic work \cite{Nachbin:toporder}.  Conversely, given a compact
pospace $(Z, \preceq)$, i.e., a compact space with a closed ordering
$\preceq$ on it, the \emph{upwards topology} on $Z$ consists of those
open subsets of $Z$ that are upward closed in $\preceq$.  The space
$Z^\uparrow$, obtained as $Z$ with the upwards topology, is then
stably compact.  Moreover, the two constructions are inverse of each
other.  (See \cite[Section~VI-6]{GHKLMS:contlatt}.)

If $X$ and $Y$ are stably compact, then $f : X \to Y$ is proper if and
only if $f : X^\patch \to Y^\patch$ is continuous, and monotonic with
respect to the specialization orderings of $X$ and $Y$
\cite[Proposition~VI.6.23]{GHKLMS:contlatt}, i.e., if and only if $f$
is a morphism of compact pospaces.

Now, the structure of the cocompact topology on $\Val_{1\;wk} (X)$,
when $X$ is stably compact, is as follows.  For every continuous
valuation $\nu$ on $X$, following Tix \cite{Tix:bewertung}, define
$\nu^\dagger (Q)$ as $\inf_{U \in \Open (X), U \supseteq Q} \nu (U)$,
for every compact saturated subset $Q$ of $X$.
Define $\langle Q \geq a \rangle$ as the set of probability valuations
$\nu$ such that $\nu^\dagger (Q) \geq a$.  The sets $\langle Q \geq
a\rangle$ are compact saturated in $\Val_{1\;wk} (X)$, and
Proposition~6.8 of \cite{JGL-mscs09} even states that they form a
subbase of compact saturated subsets.  This means that the complements
of the sets of the form $\langle Q \geq a \rangle$, $Q$ compact
saturated in $X$, $a \in \real$, form a base of the topology of
$\Val_{1\;wk} (X)^\dG$.  A similar claim was already stated in
\cite[last lines]{Jung:scs:prob}.

\begin{lem}
  \label{lemma:Vf:dagger}
  Let $X$, $Y$ be stably compact spaces, and $r$ be a proper
  surjective map from $X$ to $Y$.  Then $\Val_{1\;wk} (r)
  (\nu)^\dagger (Q) = \nu^\dagger (r^{-1} (Q))$, for every compact
  saturated subset $Q$ of $Y$.
\end{lem}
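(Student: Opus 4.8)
The plan is to unfold the definitions of $\Val_{1\;wk}(r)$ and of the dagger operation on both sides and reduce the asserted equality to one topological fact about proper maps: every open neighbourhood of $r^{-1}(Q)$ in $X$ contains the $r$-preimage of some open neighbourhood of $Q$ in $Y$.

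First I would spell out the two sides. By definition of the pushforward, $\Val_{1\;wk}(r)(\nu)(V) = \nu(r^{-1}(V))$ for every open $V$ of $Y$, so the left-hand side is $\inf_{V \in \Open(Y),\, V \supseteq Q} \nu(r^{-1}(V))$, while the right-hand side is $\nu^\dagger(r^{-1}(Q)) = \inf_{U \in \Open(X),\, U \supseteq r^{-1}(Q)} \nu(U)$; here $r^{-1}(Q)$ is compact saturated in $X$, since $r$ is proper and $X,Y$ are stably compact (so $r$ is continuous for the patch topologies and monotone), which makes the right-hand infimum of the required shape. The easy inequality $\nu^\dagger(r^{-1}(Q)) \le \Val_{1\;wk}(r)(\nu)^\dagger(Q)$ is immediate: for every open $V \supseteq Q$ the set $r^{-1}(V)$ is open and contains $r^{-1}(Q)$, hence it ranges over a subfamily of the opens occurring in the right-hand infimum, and passing to a smaller index family can only enlarge the infimum.

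For the converse inequality I would fix an open $U$ of $X$ with $r^{-1}(Q) \subseteq U$ and construct an open $V$ of $Y$ with $Q \subseteq V$ and $r^{-1}(V) \subseteq U$; monotonicity of $\nu$ then gives $\nu(r^{-1}(V)) \le \nu(U)$, and since $U$ is arbitrary this yields $\Val_{1\;wk}(r)(\nu)^\dagger(Q) \le \nu^\dagger(r^{-1}(Q))$, completing the proof. To build $V$, set $F = X \setminus U$ (closed in $X$) and $V = Y \setminus \dc r(F)$, which is open precisely because $r$ is proper, so $\dc r(F)$ is closed. Since $r^{-1}(Q) \cap F = \emptyset$, no point of $F$ is sent into $Q$; as $Q$ is saturated (upward closed), $\dc r(F)$ remains disjoint from $Q$, whence $Q \subseteq V$. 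And $r^{-1}(V) = X \setminus r^{-1}(\dc r(F))$, while $F \subseteq r^{-1}(r(F)) \subseteq r^{-1}(\dc r(F))$, so $r^{-1}(V) \subseteq X \setminus F = U$, as needed.

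The only delicate point I anticipate is this construction of $V$: it is exactly where the non-trivial half of properness (closedness of $\dc r(F)$) is used, together with the saturation of $Q$ to pass from $r(F)\cap Q=\emptyset$ to $\dc r(F)\cap Q=\emptyset$. Everything else is routine manipulation of preimages and infima.
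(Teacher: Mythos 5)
Your proof is correct and follows essentially the same route as the paper: both directions reduce to constructing, for each open $U\supseteq r^{-1}(Q)$, an open $V\supseteq Q$ with $r^{-1}(V)\subseteq U$. Your $V=Y\setminus \dc r(F)$ is in fact literally the same set the paper uses (it takes $V=\qs^{-1}(\Box U)$ for the quasi-section $\qs(y)=r^{-1}(\upc y)$, whose complement is shown in Lemma~\ref{lemma:qproj:unique} to be exactly $\dc r(F)$); you merely bypass the quasi-section machinery and invoke the closedness clause of properness directly.
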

\begin{proof}
  We must show that $\inf_{V \supseteq Q} \nu (r^{-1} (V)) = \inf_{U
    \supseteq r^{-1} (Q)} \nu (U)$, where $V$ ranges over opens in $Y$
  and $U$ over opens in $X$.

  For every open $V$ containing $Q$, $U = r^{-1} (V)$ is an open
  subset of $X$ containing the compact saturated subset $r^{-1} (Q)$,
  so $\inf_{V \supseteq Q} \nu (r^{-1} (V)) \geq \inf_{U \supseteq
    r^{-1} (Q)} \nu (U)$.

  Conversely, for every open $U$ containing $r^{-1} (Q)$, we shall
  build an open subset $V$ containing $Q$ such that $r^{-1} (V)
  \subseteq U$.  This will establish $\inf_{V \supseteq Q} \nu (r^{-1}
  (V)) \leq \inf_{U \supseteq r^{-1} (Q)} \nu (U)$, hence the
  equality.

  Recall from Lemma~\ref{lemma:qproj:unique} that $r$ forms a
  quasi-retraction, with a unique matching quasi-section $\qs : Y \to
  \SV (X)$ such that $x \in \qs (r (x))$ for every $x \in X$, and such
  that $\qs (y) = r^{-1} (\upc y)$ for every $y \in Y$.  We let $V =
  \qs^{-1} (\Box U)$.  Since $r^{-1} (Q) \subseteq U$, $r^{-1} (Q)$ is
  in $\Box U$.  For every $y \in Q$, $\qs (y) = r^{-1} (\upc y)
  \subseteq r^{-1} (Q)$ is then also in $\Box U$, so $y$ is in
  $\qs^{-1} (\Box U) = V$.  So $Q \subseteq V$.  On the other hand,
  for every element $x$ of $r^{-1} (V)$, $r (x)$ is in $V = \qs^{-1}
  (\Box U)$, so $\qs (r (x))$ is in $\Box U$.  Then $x \in \qs (r (x))
  \subseteq U$.  So $r^{-1} (V) \subseteq U$, and we are done.
\end{proof}

Similarly to the formula $\Val_{1\;wk} (f)^{-1} [V > a] = [f^{-1} (V)
> a]$, which allowed us to conclude that $\Val_{1\;wk} (f)$ was
continuous for every continuous $f$, we obtain:
\begin{lem}
  \label{lemma:Vf:dagger:inv}
  Let $X$, $Y$ be stably compact spaces, and $r$ be a proper
  surjective map from $X$ to $Y$.  Then $\Val_{1\;wk} (r)^{-1} \langle
  Q \geq a \rangle = \langle r^{-1} (Q) \geq a \rangle$ for every
  compact saturated subset $Q$ of $Y$, and $a \in \real$.
\end{lem}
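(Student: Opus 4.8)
The plan is to read this off directly from Lemma~\ref{lemma:Vf:dagger}: the present statement is to Lemma~\ref{lemma:Vf:dagger} what the identity $\Val_{1\;wk}(f)^{-1}[V>a]=[f^{-1}(V)>a]$ is to the defining equation $\Val_{1\;wk}(f)(\nu)(V)=\nu(f^{-1}(V))$, namely its ``inverse-image'' reformulation.

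First I would observe that $r^{-1}(Q)$ is itself compact saturated in $X$, so that the set $\langle r^{-1}(Q)\geq a\rangle$ is meaningful. Saturation is immediate, since $Q$ is upward closed and $r$ is monotone, being continuous. For compactness, recall that a proper map between stably compact spaces is a morphism of compact pospaces, i.e.\ $r$ is continuous from $X^\patch$ to $Y^\patch$; since $Q$ is compact saturated in $Y$ it is closed in the compact Hausdorff space $Y^\patch$, hence $r^{-1}(Q)$ is closed in $X^\patch$, therefore patch-compact, therefore compact in the coarser topology of $X$. (This is in any case implicit in Lemma~\ref{lemma:Vf:dagger}, which already applies $\nu^\dagger$ to $r^{-1}(Q)$.)

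Then I would simply chain equivalences: for every $\nu\in\Val_1(X)$, $\nu\in\Val_{1\;wk}(r)^{-1}\langle Q\geq a\rangle$ iff $\Val_{1\;wk}(r)(\nu)\in\langle Q\geq a\rangle$, iff $\Val_{1\;wk}(r)(\nu)^\dagger(Q)\geq a$ by the definition of $\langle Q\geq a\rangle$, iff $\nu^\dagger(r^{-1}(Q))\geq a$ by Lemma~\ref{lemma:Vf:dagger}, iff $\nu\in\langle r^{-1}(Q)\geq a\rangle$ by definition. Since this holds for all $\nu$, the two subsets of $\Val_{1\;wk}(X)$ coincide.

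I do not expect any real obstacle: the whole content sits in Lemma~\ref{lemma:Vf:dagger}, and the only step beyond a rote unfolding of definitions is the remark that $r^{-1}(Q)$ stays compact saturated, which is a standard consequence of properness.
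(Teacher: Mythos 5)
Your proof is correct and coincides with the paper's: the paper also obtains the identity by a one-line chain of equivalences unwinding the definition of $\langle Q \geq a\rangle$ and invoking Lemma~\ref{lemma:Vf:dagger}. Your extra remark that $r^{-1}(Q)$ remains compact saturated (via properness) is a reasonable sanity check that the paper leaves implicit.
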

\begin{proof}
  Using Lemma~\ref{lemma:Vf:dagger}, $\Val_{1\;wk} (r)^{-1} \langle Q
  \geq a \rangle = \{\nu \in \Val_{1\;wk} (X) \mid \Val_{1\;wk} (r)
  (\nu)^\dagger (Q) \geq a\} = \{\nu \in \Val_{1\;wk} (X) \mid
  \nu^\dagger (r^{-1} (Q)) \geq a\} = \langle r^{-1} (Q) \geq a
  \rangle$.
\end{proof}

\begin{prop}
  \label{prop:Vf:proper}
  Let $X$ be a stably compact space, $Y$ be a $T_0$ space, and $r$ be
  a proper surjective map from $X$ to $Y$.  Then $\Val_{1\;wk} (r)$ is
  a proper map from $\Val_{1\;wk} (X)$ to $\Val_{1\;wk} (X)$.
\end{prop}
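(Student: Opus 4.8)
The plan is to use the characterization of proper maps between stably compact spaces as the morphisms of compact pospaces: a map $f$ is proper if and only if it is monotonic for the specialization orderings and continuous from $X^\patch$ to $Y^\patch$ \cite[Proposition~VI.6.23]{GHKLMS:contlatt}. To invoke this for $\Val_{1\;wk}(r) : \Val_{1\;wk}(X) \to \Val_{1\;wk}(Y)$, I first need $\Val_{1\;wk}(X)$ and $\Val_{1\;wk}(Y)$ to be stably compact. The space $X$ is stably compact by assumption, and $Y$ is too: by Lemma~\ref{lemma:qproj:unique} a proper surjective map is a quasi-retraction, so $Y$ is a $T_0$ quasi-retract of the stably compact space $X$, hence stably compact by Proposition~\ref{prop:qretr:scomp}. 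Therefore $\Val_{1\;wk}(X)$ and $\Val_{1\;wk}(Y)$ are both stably compact \cite{JT:troublesome,AMJK:scs:prob}.

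Now $\Val_{1\;wk}(r)$ is continuous from $\Val_{1\;wk}(X)$ to $\Val_{1\;wk}(Y)$ (using $\Val_{1\;wk}(r)^{-1}[V > a] = [r^{-1}(V) > a]$, as already noted), hence monotonic for the specialization orderings. Since the patch topology on $\Val_{1\;wk}(Y)$ is the join of its weak topology and its cocompact topology, it only remains to show that $\Val_{1\;wk}(r)$ is also continuous from $\Val_{1\;wk}(X)^\dG$ to $\Val_{1\;wk}(Y)^\dG$. By Proposition~6.8 of \cite{JGL-mscs09}, the complements of the sets $\langle Q \geq a\rangle$, $Q$ compact saturated in $Y$ and $a \in \real$, form a base of $\Val_{1\;wk}(Y)^\dG$, so it suffices to check that the preimage of each such complement is open in $\Val_{1\;wk}(X)^\dG$; since preimage commutes with complementation and the opens of $\Val_{1\;wk}(X)^\dG$ are exactly the complements of compact saturated subsets, this amounts to showing that $\Val_{1\;wk}(r)^{-1}\langle Q \geq a\rangle$ is compact saturated in $\Val_{1\;wk}(X)$.

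Fix $Q$ compact saturated in $Y$ and $a \in \real$. Since $r$ is proper and surjective, $r^{-1}(Q)$ is a non-empty compact saturated subset of $X$: writing $\qs : Y \to \SV(X)$ for the quasi-section of Lemma~\ref{lemma:qproj:unique}, which satisfies $\qs(y) = r^{-1}(\upc y)$, one has $r^{-1}(Q) = \bigcup_{y \in Q}\qs(y) = \qs^\dagger(Q)$, which lies in $\SV(X)$ because $\qs^\dagger : \SV(Y) \to \SV(X)$. Hence $\langle r^{-1}(Q) \geq a\rangle$ is a well-defined compact saturated subset of $\Val_{1\;wk}(X)$, and Lemma~\ref{lemma:Vf:dagger:inv} gives $\Val_{1\;wk}(r)^{-1}\langle Q \geq a\rangle = \langle r^{-1}(Q) \geq a\rangle$, which is exactly what was needed. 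Thus $\Val_{1\;wk}(r)$ is a morphism of compact pospaces, hence proper.

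The only real subtleties I anticipate are the preliminary observation that $Y$, and therefore $\Val_{1\;wk}(Y)$, is stably compact (so that the compact pospace characterization of properness applies at all) and the bookkeeping that $r^{-1}(Q)$ is compact saturated (so that $\langle r^{-1}(Q) \geq a\rangle$ is meaningful); once these are settled, the argument reduces to a direct application of Lemma~\ref{lemma:Vf:dagger:inv} together with the known description of the cocompact topology on $\Val_{1\;wk}$.
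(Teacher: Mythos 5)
Your proof is correct and follows essentially the same route as the paper's: establish that $Y$ (hence $\Val_{1\;wk}(Y)$) is stably compact via Lemma~\ref{lemma:qproj:unique} and Proposition~\ref{prop:qretr:scomp}, then verify patch-continuity of $\Val_{1\;wk}(r)$ on the subbasic opens using Lemma~\ref{lemma:Vf:dagger:inv}, and conclude by the compact-pospace characterization of proper maps. Your explicit check that $r^{-1}(Q) = \qs^\dagger(Q)$ is compact saturated is a useful detail the paper leaves implicit, but it does not change the argument.
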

\begin{proof}
  First, since $r$ is proper and surjective, $r$ is a quasi-retraction
  (Lemma~\ref{lemma:qproj:unique}), so $Y$ is stably compact by
  Proposition~\ref{prop:qretr:scomp}.  $\Val_{1\;wk} (r)$ is
  continuous from $\Val_{1\;wk} (X)$ to $\Val_{1\;wk} (Y)$.
  Lemma~\ref{lemma:Vf:dagger:inv} implies that $\Val_{1\;wk} (r)$ is
  also continuous from $\Val_{1\;wk} (X)^\patch$ to $\Val_{1\;wk}
  (Y)^\patch$: it suffices to check that the inverse images of
  subbasic patch-open subsets, of the form $[U > a]$ or whose
  complements are of the form $\langle Q \geq a\rangle$, are
  patch-open.  Also, $\Val_{1\;wk} (r)$ is monotonic with respect to
  the specialization orderings of $\Val_{1\;wk} (X)$ and $\Val_{1\;wk}
  (Y)$, being continuous.  So $\Val_{1\;wk} (r)$ is proper.
\end{proof}

Let us establish surjectivity.  One possible proof goes as follows.
Let $\mathcal M_1 (Z)$ denote the space of all Radon probability
measures on the space $Z$.  If $X$ is stably compact, then $\mathcal
M_1 (X^\patch)$ is compact in the vague topology, and forms a compact
pospace with the stochastic ordering, where $\mu$ is below $\mu'$ if
and only if $\mu (U) \leq \mu' (U)$ for every open subset $U$ of $X$
\cite[Theorem~31]{AMJK:scs:prob}.  By
\cite[Theorem~36]{AMJK:scs:prob}, there is an isomorphism between
$\Val_{1\;wk} (X)$ and $\mathcal M_1^\uparrow (X^\patch)$.

Now assume a second stably compact space $Y$.  For two measurable
spaces $A$ and $B$, and $f : A \to B$ measurable, let $\mathcal M (f)$
map the Radon measure $\mu$ to its image measure, whose value on the
Borel subset $E$ of $B$ is $\mu (f^{-1} (E))$.  A standard result
\cite[2.4, Lemma~1]{Bourbaki:int:IX} states that for any two compact
Hausdorff spaces $A$ and $B$, if $r$ is continuous surjective from $A$
to $B$, then $\mathcal M (r)$ is surjective.  The desired result
follows, up to a few technical details, by taking $A=X^\patch$,
$B=Y^\patch$, remembering that since $r$ is proper from $X$ to $Y$, it
is continuous from $X^\patch$ to $Y^\patch$.

Instead of working out the---technically subtle but boring---technical
details, let us give a direct proof, similar to the above cited
Lemma~1, 2.4 \cite{Bourbaki:int:IX}.  Instead of using the Hahn-Banach
Theorem, we rest on the following Keimel Sandwich Theorem
\cite[Theorem~8.2]{Keimel:topcones}: let $C$ be a topological cone, $q
: C \to \creal$ be a continuous superlinear map, $p : C \to \creal$ be
a sublinear map, and assume $q \leq p$; then there is a continuous
linear map $\Lambda : C \to \creal$ such that $q \leq \Lambda \leq p$.
Here, a \emph{cone} is an additive commutative monoid, with a scalar
multiplication by elements of $\real^+$ satisfying $a (x+y)=ax+ay$,
$(a+b)x = ax+bx$, $(ab)x=a(bx)$, $1x=x$, $0x=0$ for all $a, b \in
\real^+$, $x, y \in C$.  A cone is \emph{topological} if and only if
addition and multiplication are continuous.  The continuous maps $f :
C \to \creal$ are sometimes called lower semi-continuous in the
literature.  Such a map is \emph{superlinear} (resp.,
\emph{sublinear}, \emph{linear}) if and only if $f (ax)=af (x)$ for
all $a \in \real^+$, $x \in C$ and $f (x+y) \geq f (x)+f (y)$ for all
$x, y \in C$ (resp., $\leq$, $=$).  It is easy to see that the space
$[X \to \creal]$ of all continuous maps from $X$ to $\creal$, equipped
with the obvious addition and scalar multiplication and with the Scott
topology of the pointwise ordering, is a topological cone.

\begin{prop}
  \label{prop:qretr:surj}
  Let $X$, $Y$ be stably compact spaces, and $r$ be a proper
  surjective map from $X$ to $Y$.  Then $\Val_{1\;wk} (r)$ is
  surjective.
\end{prop}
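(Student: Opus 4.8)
The plan is to reduce the claim to a cone-theoretic Hahn--Banach extension, which the Keimel Sandwich Theorem quoted above will settle. Fix a probability valuation $\mu$ on $Y$. By the Riesz-type representation theorem \cite[Theorem~8.1]{Heckmann:space:val}, used here as announced (a continuous valuation on a space $Z$ is the same thing as a Scott-continuous linear functional on the topological cone $[Z\to\creal]$, the valuation $\nu$ being recovered from its functional $\Lambda$ by $\nu(U)=\Lambda(\chi_U)$ for $U$ open, where $\chi_U\in[Z\to\creal]$ is the indicator map of $U$), $\mu$ corresponds to a Scott-continuous linear $\Lambda_\mu\colon[Y\to\creal]\to\creal$ with $\Lambda_\mu(\chi_V)=\mu(V)$ and in particular $\Lambda_\mu(\chi_Y)=\mu(Y)=1$. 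Precomposition with $r$ gives a Scott-continuous linear map $r^{*}\colon[Y\to\creal]\to[X\to\creal]$, $g\mapsto g\circ r$; since $\chi_V\circ r=\chi_{r^{-1}(V)}$ and in particular $\chi_Y\circ r=\chi_X$, it suffices to produce a Scott-continuous linear $\Lambda\colon[X\to\creal]\to\creal$ with $\Lambda\circ r^{*}=\Lambda_\mu$. Indeed, if $\nu$ is the valuation represented by such a $\Lambda$, then $\nu(X)=\Lambda(\chi_X)=\Lambda_\mu(\chi_Y)=1$, so (by monotonicity $\nu(U)\le\nu(X)=1$) $\nu$ is a genuine probability valuation, and $\Val_{1\;wk}(r)(\nu)(V)=\nu(r^{-1}(V))=\Lambda(\chi_V\circ r)=\Lambda_\mu(\chi_V)=\mu(V)$ for every open $V$, i.e.\ $\Val_{1\;wk}(r)(\nu)=\mu$.

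To build $\Lambda$ I would use that, $r$ being proper and surjective, $r$ is a quasi-retraction whose unique matching quasi-section is $\qs(y)=r^{-1}(\upc y)$ (Lemma~\ref{lemma:qproj:unique}); each $\qs(y)$ is non-empty (Lemma~\ref{lemma:qretr:surj}) and compact saturated, and $\qs$ is continuous. For $h\in[X\to\creal]$ put $\check h(y)=\min_{x\in\qs(y)}h(x)$; the minimum is attained since $h$ is lower semicontinuous and $\qs(y)$ is compact, and $\check h\colon Y\to\creal$ is continuous because $\{y\mid\check h(y)>a\}=\qs^{-1}(\Box\,h^{-1}((a,+\infty]))$ is open for every $a\in\real$. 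Define $q(h)=\Lambda_\mu(\check h)$ and $p(h)=\inf\{\Lambda_\mu(g)\mid g\in[Y\to\creal],\ h\le g\circ r\}$, the infimum being over a non-empty set since a sufficiently large (possibly $+\infty$) constant function is admissible. Then $p$ is sublinear; $q$ is superlinear, using $\check{(h_1+h_2)}\ge\check h_1+\check h_2$; and $q$ is Scott-continuous. The one step that is not a routine check is the identity $\check{(\sup_i h_i)}=\sup_i\check h_i$ for a directed family $(h_i)_i$, which is what forces Scott-continuity of $q$: it holds because $\qs(y)$ is compact, for if $\check{(\sup_i h_i)}(y)>a$ then $(h_i^{-1}((a,+\infty]))_i$ is a directed open cover of $\qs(y)$, so already some $h_i$ exceeds $a$ throughout $\qs(y)$, i.e.\ $\check h_i(y)>a$. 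Finally $q\le p$: if $h\le g\circ r$ then, $g$ being monotone and some $x_0\in\qs(y)$ satisfying $r(x_0)=y$ (Lemma~\ref{lemma:qretr:surj}), one has $\min_{x\in\qs(y)}g(r(x))=g(y)$, whence $\check h\le g$ and $q(h)=\Lambda_\mu(\check h)\le\Lambda_\mu(g)$; taking the infimum over $g$ yields $q(h)\le p(h)$.

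Now the Keimel Sandwich Theorem \cite[Theorem~8.2]{Keimel:topcones}, applied to the topological cone $[X\to\creal]$ with $q$ and $p$, produces a Scott-continuous linear $\Lambda\colon[X\to\creal]\to\creal$ with $q\le\Lambda\le p$. It remains to check $\Lambda\circ r^{*}=\Lambda_\mu$: for $g\in[Y\to\creal]$ the computation above gives $\check{(g\circ r)}(y)=\min_{x\in\qs(y)}g(r(x))=g(y)$, hence $q(g\circ r)=\Lambda_\mu(g)$, while $g$ itself witnesses $p(g\circ r)\le\Lambda_\mu(g)$; the sandwich $q\le\Lambda\le p$ then forces $\Lambda(g\circ r)=\Lambda_\mu(g)$. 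Taking $\nu$ to be the valuation that $\Lambda$ represents finishes the proof, by the computation in the first paragraph. I expect the main obstacle to be exactly the Scott-continuity of $q$, i.e.\ the $\min$--$\sup$ interchange for $\check h$; it is precisely here that properness of $r$ is essential, since it is what makes the fibres $\qs(y)=r^{-1}(\upc y)$ compact and hence the lower envelope $h\mapsto\check h$ a Scott-continuous, order-preserving construction into $[Y\to\creal]$.
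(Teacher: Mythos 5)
Your proposal is correct and follows essentially the same route as the paper's proof: the lower envelope $\check h(y)=\min_{x\in\qs(y)}h(x)$ is exactly the paper's $h_*(\qs(y))$, the superlinear/sublinear pair $(q,p)$ is the same, and the Keimel Sandwich Theorem is invoked in the same way, the only cosmetic difference being that you make the Riesz-type correspondence between valuations and Scott-continuous linear functionals explicit at both ends (writing $\Lambda_\mu$ instead of $\int\cdot\,d\nu$ and verifying $\Lambda\circ r^*=\Lambda_\mu$ on all of $[Y\to\creal]$ rather than only on characteristic functions), which the paper does silently. The step you single out as delicate, the Scott-continuity of $q$ via compactness of the fibres $\qs(y)=r^{-1}(\upc y)$, is indeed where the paper also spends its effort.
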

\begin{proof}
  Fix some continuous probability valuation $\nu$ on $Y$.  Let $C$ be
  $[X \to \creal]$.  Since $r$ is proper, it has an associated
  quasi-section $\qs$, with $x \in \qs (r (x))$ for every $x \in X$,
  by Lemma~\ref{lemma:qproj:unique}.  Define $q : C \to \creal$ by $q
  (h) = \int_{y \in Y} h_* (\qs (y)) d\nu$, where $h_* (Q) = \min_{x
    \in Q} h (x)$, and integration of continuous maps from $Y$ to
  $\creal$ is defined by a Choquet formula
  \cite{Tix:bewertung,JGL-icalp07}, or equivalently by Heckmann's
  general construction \cite{Heckmann:space:val}.

  Note that $h_* (Q)$ is well-defined as $\min \Smyth h (Q)$, since
  $\Smyth h (Q)$ is compact saturated hence of the form $[a, +\infty]$
  for some $a \in \creal$---then $h_* (Q)=a$.  Moreover, $h_*$ is
  continuous from $\SV (X)$ to $\creal$, because $h_*^{-1} (a,
  +\infty] = \Box h^{-1} (a, +\infty]$.  So $h_* \circ \qs$ is
  continuous, whence the integral defining $q$ makes sense.  We now
  claim that the map $h \mapsto h_*$ is (Scott-)continuous.  First, $h
  \mapsto h_*$ is clearly monotonic.  Now let ${(h_i)}_{i \in I}$ be a
  directed family in $[X \to \creal]$ with a least upper bound $h$.
  By monotonicity, for every $Q \in \Smyth (X)$, ${h_i}_* (Q) \leq h_*
  (Q)$, so $\sup_{i \in I} {h_i}_* (Q)$ exists and is below $h_* (Q)$.
  Conversely, we must show that for every $a \in \real^+$ such that $a
  < h_* (Q)$, $a < \sup_{i \in I} {h_i}_* (Q)$.  The elements $Q \in
  \Smyth (X)$ such that $a < h_* (Q)$ are those such that for every $x
  \in Q$, there is an $i \in I$ such that $h_i (x) \in (a, +\infty)$,
  i.e., they are the elements of $\Box \bigcup_{i \in I} h_i^{-1} (a,
  +\infty)$.  Since $\Box$ commutes with directed unions, if $a < h_*
  (Q)$ then $Q \in \Box h_i^{-1} (a, +\infty)$ for some $i \in I$,
  i.e., ${h_i}_* (Q) > a$, and we are done.  Since $h \mapsto h^*$ is
  continuous, and since the Choquet integral is Scott-continuous in
  the integrated function (see \cite[Satz~4.4]{Tix:bewertung}, or
  \cite[Theorem~7.1~(3)]{Heckmann:space:val}), we obtain that $q$ is
  (Scott-)continuous.

  For every $a \in \real^+$, $q (ah) = a q(h)$.  Moreover, since
  $(h_1+h_2)_* \geq {h_1}_* + {h_2}_*$, and integration is linear, $q$
  is superlinear.

  Define $p (h)$ as $\inf \left\{\int_{y \in Y} h' (y) d\nu \mathbin{\Big|} h'
    \in [Y \to \creal],\: h \leq h' \circ r\right\}$.  Clearly, $p$ is
  sublinear.  Notably,
  \[
  \begin{array}{rclcl}
    p (h_1) + p (h_2) & = &
    \inf
    \Big\{\int_{y \in Y} \left[h'_1 (y) + h'_2 (y)\right] d\nu
    & \mathbin{\Big|}{} &
    h'_1 \in [Y \to \creal], \: h_1 \leq h'_1 \circ r, \\
    &&&&
    h'_2 \in [Y \to \creal], \: h_2 \leq h'_2 \circ r\Big\} \\
    & \geq &
    \multicolumn{3}{l}{
      \inf
      \left\{\int_{y \in Y} h' (y)\: d\nu \mathbin{\Big|}
        h' \in [Y \to \creal],\: h_1 + h_2 \leq h' \circ r
      \right\}
      = p (h_1 + h_2).
    }
  \end{array}
  \]
  Whenever $h \leq h' \circ r$, we claim that $h_* (\qs (y)) \leq h'
  (y)$.  Indeed, since $y = r (x)$ for some $x \in X$
  (Lemma~\ref{lemma:qretr:surj}), and since $x \in \qs (r (x)) = \qs
  (y)$, $h_* (\qs (y)) \leq h (x) \leq h' (r (x)) = h' (y)$.

  It follows that $q (h) = \int_{y \in Y} h_* (\qs (y)) d\nu \leq
  \int_{y \in Y} h' (y) d\nu$.  By taking infs over $h'$, $q \leq p$.

  So Keimel's Sandwich Theorem applies.  There is a continuous linear
  map $\Lambda : C \to \creal$ such that $q \leq \Lambda \leq p$.
  Define $\nu_0 : \Open (X) \to \creal$ by $\nu_0 (U) = \Lambda
  (\chi_U)$, where $\chi_U$ is the characteristic function of $U$.
  Then $\nu_0$ is a continuous valuation on $X$; in particular, $\nu_0
  (U \cup V) + \nu_0 (U \cap V) = \nu_0 (U) + \nu_0 (V)$ because
  $\chi_{U \cup V} + \chi_{U \cap V} = \chi_U + \chi_V$.

  Now, given an open subset $V$ of $Y$, take $h = \chi_{r^{-1} (V)}$.
  Then $h_* (Q) = 1$ iff $Q \subseteq r^{-1} (V)$, so $h_* =
  \chi_{\Box r^{-1} (V)}$, and therefore $h_* (\qs (y)) = \chi_{\Box
    r^{-1} (V)} (r^{-1} (\upc y)) = \chi_V (y)$, using the fact that
  $r$ is surjective.  It follows that $q (h) = \int_{y \in Y} \chi_V
  d\nu = \nu (V)$.  On the other hand, take $h' = \chi_V$ in the
  definition of $p$, and check that $h \leq h' \circ r$.  It follows
  that $p (h) \leq \int_{y \in Y} \chi_V (y) d\nu = \nu (V)$.  Since
  $q (h) \leq \nu_0 (r^{-1} (V)) \leq p (h)$, $\nu_0 (r^{-1} (V)) =
  \nu (V)$.  This holds for every open subset $V$ of $Y$.  In
  particular, taking $V = Y$, we obtain that $\nu_0$ is a probability
  valuation: $\nu_0 (X) = \nu_0 (r^{-1} (Y)) = \nu (Y) = 1$.  And
  finally, that $\nu_0 (r^{-1} (V)) = \nu (V)$ holds for every open
  $V$ of $Y$ means that $\nu = \Val_{1\;wk} (\nu_0)$.
\end{proof}

Putting together Proposition~\ref{prop:Vf:proper} and
Proposition~\ref{prop:qretr:surj}, we obtain:
\begin{thm}[Key Claim]
  \label{thm:qretr:V}
  Let $X$ be a stably compact space, and $Y$ be a $T_0$ space.  If $r$
  is a proper surjective map from $X$ to $Y$, then $\Val_{1\;wk} (r)$
  is a proper surjective map from $\Val_{1\;wk} (X)$ to $\Val_{1\;wk}
  (X)$.
\end{thm}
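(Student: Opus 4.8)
The plan is to deduce the statement by assembling Proposition~\ref{prop:Vf:proper} with Proposition~\ref{prop:qretr:surj}, once the standing hypothesis on $Y$ has been strengthened from being merely $T_0$ to being stably compact. First I would observe that, $r$ being proper and surjective, Lemma~\ref{lemma:qproj:unique} exhibits $r$ as a quasi-retraction of $X$ onto $Y$, with quasi-section $\qs (y) = r^{-1} (\upc y)$. Hence $Y$ is a $T_0$ quasi-retract of the stably compact space $X$, and Proposition~\ref{prop:qretr:scomp} shows that $Y$ is itself stably compact.

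With $X$ and $Y$ both stably compact, Proposition~\ref{prop:Vf:proper} applies and yields that $\Val_{1\;wk} (r)$ is a proper map from $\Val_{1\;wk} (X)$ to $\Val_{1\;wk} (Y)$, while Proposition~\ref{prop:qretr:surj} yields that $\Val_{1\;wk} (r)$ is surjective. Putting the two together gives the theorem, with no further computation required.

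I do not expect any genuine obstacle at this stage: the entire substance of the argument lies in the two cited propositions, and above all in Proposition~\ref{prop:qretr:surj}, whose proof uses Keimel's Sandwich Theorem to manufacture, from a probability valuation $\nu$ on $Y$, a preimage valuation $\nu_0$ on $X$ --- by sandwiching the linear functional $\Lambda$ defining $\nu_0$ between the superlinear map $q (h) = \int_{y \in Y} h_* (\qs (y))\, d\nu$, built from the quasi-section, and the sublinear map $p (h) = \inf\{\int_{y \in Y} h'\, d\nu \mid h \leq h' \circ r\}$, built from composition along $r$. The only point I would be careful not to skip is precisely the preliminary remark that $Y$ is stably compact, since Proposition~\ref{prop:qretr:surj} is stated for stably compact $Y$ and not merely for $T_0$ spaces $Y$; once that is in hand, the combination is immediate.
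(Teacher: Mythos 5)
Your proposal is correct and follows essentially the same route as the paper: the theorem is obtained by combining Proposition~\ref{prop:Vf:proper} (properness) with Proposition~\ref{prop:qretr:surj} (surjectivity), and the preliminary step you flag---deducing that $Y$ is stably compact from Lemma~\ref{lemma:qproj:unique} and Proposition~\ref{prop:qretr:scomp} so that Proposition~\ref{prop:qretr:surj} applies---is exactly the opening move of the paper's proof of Proposition~\ref{prop:Vf:proper}. No gap.
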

In particular, if $Y$ is a quasi-projection of $X$, then $\Val_{1\;wk}
(Y)$ is a quasi-projection of $\Val_{1\;wk} (X)$.

\begin{figure}
  \centering
  \ifpdf
  \input{ex1-path.pdftex_t}
  \else
  \input{ex1-path.pstex_t}
  \fi
  \caption{The path space of Figure~\ref{fig:ex1}~$(i)$}
  \label{fig:ex1-path}
\end{figure}

We shall apply this theorem twice, and first, to finite pointed
posets.  Let $<$ be the strict part of $\leq$.
\begin{defi}[Path Space]
  \label{defn:path}
  Let $Y$ be any finite pointed poset.  Write $y \to y'$ iff %
  $y$ is {\em immediately below\/} $y'$, i.e., $y < y'$ and there is
  no $z \in Y$ such that $y < z < y'$.  A {\em path\/} $\pi$ in $Y$ is
  any set $\{y_0, y_1, \ldots, y_n\} \subseteq Y$ with $y_0 = \bot \to
  y_1 \to \ldots \to y_n$.  The {\em path space\/} $\Pi (Y)$ is the
  set of paths in $Y$, ordered by $\subseteq$.
\end{defi}
Alternatively, the ordering on paths $y_0 \to y_1 \to \ldots \to y_n$
is the prefix ordering on sequences $y_0 y_1 \ldots y_n$.

Note that $\Pi (Y)$ is always a finite tree, i.e., a finite pointed
poset such that the downward closure of a point is always totally
ordered.  Up to questions of finiteness, this is exactly how we built
a tree from an ordering in the proof of Lemma~\ref{lemma:qs:nonempty},
by the way.

We observe that every finite pointed poset $Y$ is a quasi-projection
of its path space $\Pi (Y)$.
\begin{lem}
  \label{lemma:path:qretr}
  For every finite pointed poset $Y$, the map $r : \Pi (Y) \to Y$
  defined by $r (\pi) = \max \pi$ is proper and surjective.
\end{lem}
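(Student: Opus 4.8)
The plan is to exploit the fact that both $\Pi (Y)$ and $Y$ are \emph{finite} posets, so that all topological conditions degenerate and only elementary order-theoretic bookkeeping remains. Recall from the discussion before the lemma that $r$ is proper iff it is continuous, $\dc r (F)$ is closed in $Y$ for every closed $F \subseteq \Pi (Y)$, and $r^{-1} (\upc y)$ is compact in $\Pi (Y)$ for every $y \in Y$ \cite[Lemma~VI-6.21~$(i)$]{GHKLMS:contlatt}.

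First I would check surjectivity. Given $y \in Y$, the interval $\dc y = \{z \in Y \mid z \leq y\}$ is a finite pointed poset, so it contains a maximal chain $\bot = y_0 < y_1 < \ldots < y_n = y$; since consecutive elements of a maximal chain in a finite poset stand in the immediate-predecessor relation $\to$, the set $\pi = \{y_0, y_1, \ldots, y_n\}$ is a path in $Y$, and $r (\pi) = \max \pi = y$. (When $y = \bot$ this is just $\pi = \{\bot\}$.) Hence $r$ is onto. This existence-of-a-path step, trivial as it is, is the one place where finiteness of $Y$ is genuinely used; everything else is formal.

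Next I would note that $r$ is monotone, hence continuous: a monotone map between finite posets is automatically Scott-continuous, since the Scott topology of a finite poset is just the topology of upward closed sets. Monotonicity is clear from Definition~\ref{defn:path}: if $\pi \subseteq \pi'$ in $\Pi (Y)$, then, identifying a path with the sequence $y_0 y_1 \ldots$ it determines, $\pi$ is a prefix of $\pi'$, so $\max \pi \leq \max \pi'$.

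Finally, for properness it remains to check the last two clauses. For every closed (i.e., downward closed) subset $F$ of $\Pi (Y)$, the set $\dc r (F)$ is downward closed by construction, hence closed in the Scott topology of $Y$. And for every $y \in Y$, $r^{-1} (\upc y)$ is a subset of the finite set $\Pi (Y)$, hence compact. Therefore $r$ is proper. (Alternatively, one may invoke \cite[Proposition~VI.6.23]{GHKLMS:contlatt}: $\Pi (Y)$ and $Y$ are trivially stably compact, and the patch topology of a finite poset is discrete since each singleton $\{x\} = \upc x \cap \dc x$ is patch-open, so $r : \Pi (Y)^\patch \to Y^\patch$ is automatically continuous and $r$ is monotone, whence proper.) I do not expect any real obstacle in this argument.
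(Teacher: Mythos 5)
Your proof is correct and follows essentially the same route as the paper's, which simply asserts that $r$ is surjective and monotonic and that properness is then trivial by finiteness of $\Pi(Y)$ and $Y$; you merely fill in the details (maximal chains for surjectivity, and the degenerate form of the properness conditions on finite posets). No issues.
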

\begin{proof}
  See Figure~\ref{fig:ex1-path}, which displays the path space of the
  space $Y$ of Figure~\ref{fig:ex1}~$(i)$.  Each gray region is
  labeled with an element from $Y$, which is the image by $r$ of every
  point in the region; e.g., the top right, $5$-element region is
  mapped to $j$ in $Y$.

  Formally, let $X = \Pi (Y)$, and define $r : X \to Y$ by $r (\pi) =
  \max \pi$, i.e., $r (y_0\to y_1 \to \ldots \to y_n) = y_n$.  The map
  $r$ is surjective, and monotonic.  Since $X$ and $Y$ are finite, $r$ is
  then trivially proper.
\end{proof}
$Y$ is certainly not a retract of $\Pi (Y)$ in general: it is, if and
only if $Y$ is a tree.  Indeed, if $Y$ is a tree, then $Y$ is
isomorphic to $\Pi (Y)$, and conversely, every retract of a tree is a
tree.

Finite trees are very special.  Jung and Tix proved that $\Val_{\leq
  1} (T)$ is an $\RB$-domain \cite[Theorem~13]{JT:troublesome} for
every finite tree $T$.  They noted (comment after op.cit.) that
$\Val_{\leq 1} (T)$ is even a bc-domain in this case, i.e., a pointed
continuous dcpo in which every pair of elements with an upper bound
has a least upper bound.  It is well-known that every bc-domain is an
$\RB$-domain: given any finite subset $A$ of a basis $B$ of a
bc-domain $X$, the map $f_A (x) = \sup (A \cap \ddarrow x)$ is a
deflation, the family of these deflations is directed, and their least
upper bound is the identity map.
\begin{lem}
  \label{lemma:V1:bc}
  For every finite tree $T$, $\Val_1 (T)$ is a countably-based
  bc-domain.
\end{lem}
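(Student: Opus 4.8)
The plan is to realise $\Val_1 (T)$ as a retract of $\Val_{\leq 1} (T)$ and to transfer the bc-domain structure across that retraction. Jung and Tix proved that $\Val_{\leq 1} (T)$ is a bc-domain for every finite tree $T$ \cite[Theorem~13, and the comment that follows]{JT:troublesome}; since $T$ is finite, hence $\omega$-continuous, $\Val_{\leq 1} (T)$ is moreover countably based, the rational-coefficient simple subprobability valuations over $T$ forming a countable basis (by standard results on the probabilistic powerdomain, see e.g.\ \cite{JP:proba,Edalat:int}). Let $\bot$ be the least element of $T$. The one elementary fact we need is that, $T$ being finite and pointed, the only open subset of $T$ containing $\bot$ is $T$ itself (the opens are the upward closed sets, and $\upc \bot = T$). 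I would then define $r \colon \Val_{\leq 1} (T) \to \Val_1 (T)$ by $r (\nu) = \nu + (1 - \nu (T)) \delta_\bot$, which makes sense since $1 - \nu (T) \geq 0$, with $r (\nu) (T) = \nu (T) + (1 - \nu (T)) = 1$. By the fact above, $r (\nu)$ agrees with $\nu$ on every open set other than $T$ and takes value $1$ on $T$; since monotonicity and directed suprema in $\Val_{\leq 1}$ and $\Val_1$ are computed pointwise on opens, it follows at once that $r$ is Scott-continuous. With the (continuous) inclusion $s \colon \Val_1 (T) \hookrightarrow \Val_{\leq 1} (T)$ as section, $r (s (\nu)) = \nu + (1 - 1) \delta_\bot = \nu$ for every probability valuation $\nu$, so $r$ is a retraction.

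It then remains to observe that a retract of a countably based bc-domain is again one. Continuity transfers because every retract of a continuous dcpo is again a continuous dcpo, a fact recalled earlier in the paper. The retract is pointed, with bottom $r (\mathbf 0) = \delta_\bot$ where $\mathbf 0$ is the zero valuation. For bounded completeness, if $A \subseteq \Val_1 (T)$ is bounded then $r (\sup s (A))$, the supremum being taken in the bounded-complete $\Val_{\leq 1} (T)$, is the least upper bound of $A$ in $\Val_1 (T)$; alternatively, one notes that $\Val_1 (T)$ is upward closed in $\Val_{\leq 1} (T)$ (if $\nu \leq \mu$ and $\nu (T) = 1$ then $\mu (T) = 1$) and that its existing suprema coincide with those of $\Val_{\leq 1} (T)$, so that bounded completeness is inherited. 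Countable-basedness transfers as well: $\{r (b) \mid b \in B\}$ is a countable basis whenever $B$ is, since a Scott-continuous retraction sends elements way-below $s (\nu)$ to elements way-below $\nu$. (One could also bypass this last point by invoking that $T$ is finite, hence $\omega$-continuous, and that $\Val_1$ preserves $\omega$-continuity \cite{JP:proba,Edalat:int}, the probability valuations with rational coefficients forming a countable basis of $\Val_1 (T)$ directly.)

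I expect the only genuinely non-bookkeeping step to be the Scott-continuity of $r$, and this is precisely where finiteness and pointedness of $T$ enter: they force $\bot$ to lie in no proper open set, so that padding the missing mass at $\bot$ cannot disturb the value of $\nu$ on any open set capable of detecting it. Everything else reduces to the Jung--Tix theorem together with standard preservation properties of retracts of (countably based) continuous dcpos.
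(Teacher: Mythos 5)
Your argument is correct, but it runs in exactly the opposite direction from the paper's. The paper proves the lemma from scratch: it identifies $\Val_1 (T)$ with the poset of \emph{admissible} maps $f : T \to [0,1]$ (where $f (t) = \nu (\upc t)$, $f (\bot) = 1$, and $f (t) \geq \sum_{t \to t'} f (t')$, using the fact that on a tree the minimal elements of an open set have disjoint upward closures), and then computes the bounded binary join explicitly by descending induction, $f (t) = \max \bigl(f_1 (t), f_2 (t), \sum_{t \to t'} f (t')\bigr)$, the bound $f_0$ being needed only to check $f (\bot) = 1$. It then \emph{derives} the Jung--Tix statement that $\Val_{\leq 1} (T)$ is a bc-domain as a corollary, via $\Val_{\leq 1} (T) \cong \Val_1 (T_\bot)$. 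You instead take the Jung--Tix result as a black box and transfer it through the retraction $r (\nu) = \nu + (1 - \nu (T))\, \delta_\bot$ with the inclusion as section; all the steps you give check out (well-definedness and Scott-continuity of $r$ do hinge, as you say, on $T$ being the only open containing $\bot$; retracts of pointed, bounded-complete, countably-based continuous dcpos inherit all four properties by the standard arguments you sketch). What your route buys is brevity; what it costs is self-containedness, and it would make the paper's remark immediately following the lemma --- ``we retrieve the Jung--Tix result'' --- circular as a piece of exposition, since that remark deduces from this lemma precisely the fact you are importing. Mathematically there is no circularity, because Jung and Tix's proof is independent, but you should be aware that the paper's intent is to reprove their result rather than to rely on it.
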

\begin{proof}
  Since $T$ is a finite tree, it is trivially a continuous pointed
  dcpo, so $\Val_1 (T)$ is again continuous
  \cite[Section~3]{Edalat:int}.  A basis is given by the valuations of
  the form $\sum_{t \in T} a_t \delta_t$ with $a_t \in [0, 1]$,
  $\sum_{t \in T} a_t=1$ and each $a_t$ rational.  Since $\Val_1 (T)$
  has a countable basis $B$, its topology has a countable base
  consisting of the subsets $\uuarrow b$, $b \in B$.  So $\Val_1 (T)$
  is countably-based.

  To show that $\Val_1 (T)$ is a bc-domain, we observe that every
  probability valuation $\nu$ on $T$ is entirely characterized by the
  values $\nu (\upc t)$, $t \in T$.  Indeed, for every open subset $U$
  of $T$, let $\Min U$ be the (finite) set of minimal elements of $U$;
  the sets $\upc t$, $t \in \Min U$, are pairwise disjoint, so $\nu
  (U) = \sum_{t \in \Min U} \nu (\upc t)$.  The map $f : T \to [0, 1]$
  defined by $f (t) = \nu (\upc t)$ satisfies $f (\bot)=1$ and $f (t)
  \geq \sum_{t' \in T, t \to t'} f (t')$ for every $t \in T$.  Let us
  call such maps {\em admissible\/}.  Given any admissible map $f$,
  there is a unique probability valuation $\nu$ such that $f (t) = \nu
  (\upc t)$ for every $t \in T$, namely $\sum_{t \in T} a_t \delta_t$
  with $a_t = f (t) - \sum_{t' \in T, t \to t'} f (t')$.  So $\Val_1
  (T)$ is order-isomorphic to the poset of admissible maps, with the
  pointwise ordering.  Therefore we only have to show that any two
  admissible maps $f_1$, $f_2$ below a third one $f_0$ have a least
  upper bound $f$.  As a least upper bound, $f (t)$ must be above $f_1
  (t)$, $f_2 (t)$, and $\sum_{t' \in T, t \to t'} f (t')$, so define
  $f (t)$ by descending induction on $t$ by $f (t) = \max (f_1 (t),
  f_2 (t), \sum_{t' \in T, t \to t'} f (t'))$.  (By descending
  induction, we mean induction on the largest length $n$ of a sequence
  $t_0 \to t_1 \to \ldots \to t_n$ in $T$ such that $t_0=t$.)  This is
  admissible if and only if $f (\bot)=1$, and in this case will be the
  least upper bound of $f_1$, $f_2$.  By definition $f (\bot) \geq 1$.
  It is easy to see that $f (t) \leq f_0 (t)$ for every $t$, by
  descending induction on $t$: so $f (\bot) \leq f_0 (\bot)=1$, hence
  $f$ is admissible.
\end{proof}
We retrieve the Jung-Tix result that $\Val_{\leq 1} (T)$ is a
bc-domain for every tree $T$: let $T_\bot$ be $T$ with an extra bottom
element added below all elements of $T$, and apply
Lemma~\ref{lemma:V1:bc} to $\Val_1 (T_\bot) \cong \Val_{\leq 1} (T)$.

\begin{prop}
  \label{prop:path:V}
  For every finite pointed poset $Y$, $\Val_1 (Y)$ is a continuous
  $\omega\QRB$-domain.
\end{prop}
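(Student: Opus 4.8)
The plan is to realize $\Val_1(Y)$ as a quasi-projection of $\Val_1(\Pi(Y))$, where $\Pi(Y)$ is the path space of $Y$, and then to invoke the closure of $\omega\QRB$-domains under $T_0$ quasi-retracts (Proposition~\ref{prop:qretr:qrb}), after having checked that $\Val_1(\Pi(Y))$ is itself an $\omega\QRB$-domain.

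First I would note that $\Pi(Y)$ is a finite tree, hence a finite pointed poset, hence a stably compact space in its Scott topology, and that $Y$ is a $T_0$ space. By Lemma~\ref{lemma:path:qretr}, the map $r : \Pi(Y) \to Y$, $r(\pi) = \max\pi$, is proper and surjective. The Key Claim (Theorem~\ref{thm:qretr:V}) then gives that $\Val_{1\;wk}(r)$ is a proper surjective map from $\Val_{1\;wk}(\Pi(Y))$ to $\Val_{1\;wk}(Y)$.

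Next comes the step that requires care: passing between the weak and Scott topologies. Since $\Pi(Y)$ and $Y$ are finite pointed posets, they are continuous pointed dcpos, so $\Val_1(\Pi(Y))$ and $\Val_1(Y)$ are continuous dcpos by Edalat's theorem \cite{Edalat:int}; hence, by the Kirch--Tix Theorem, on each of them the weak topology coincides with the Scott topology. Thus $\Val_{1\;wk}(\Pi(Y))$ and $\Val_{1\;wk}(Y)$ are precisely the dcpos $\Val_1(\Pi(Y))$ and $\Val_1(Y)$ equipped with their Scott topologies, and $r$ induces a proper surjective map of dcpos $\Val_1(\Pi(Y)) \to \Val_1(Y)$. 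By Lemma~\ref{lemma:qproj:unique} this exhibits $\Val_1(Y)$ as a $T_0$ quasi-retract (in fact a quasi-projection) of $\Val_1(\Pi(Y))$.

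It remains to see that $\Val_1(\Pi(Y))$ is an $\omega\QRB$-domain. By Lemma~\ref{lemma:V1:bc} it is a countably-based bc-domain; every bc-domain is an $\RB$-domain, hence a $\QRB$-domain by Proposition~\ref{prop:FS:QRB}, and being countably-based it is an $\omega\QRB$-domain by Proposition~\ref{prop:omega:qrb:2}. Proposition~\ref{prop:qretr:qrb} then yields that $\Val_1(Y)$ is an $\omega\QRB$-domain, and $\Val_1(Y)$ is continuous directly by \cite{Edalat:int} since $Y$ is a continuous pointed dcpo. The main obstacle is not the domain-theoretic bookkeeping but this passage through the weak topology: Theorem~\ref{thm:qretr:V} is stated for $\Val_{1\;wk}$, so one must be certain that on the spaces at hand this topology is the Scott topology, which is exactly where the continuity of $\Val_1$ on continuous dcpos and the Kirch--Tix Theorem are needed.
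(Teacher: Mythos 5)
Your proposal is correct and follows essentially the same route as the paper: realize $\Val_1(Y)$ as the image of $\Val_1(\Pi(Y))$ under a proper surjective map via Lemma~\ref{lemma:path:qretr} and Theorem~\ref{thm:qretr:V}, identify the weak and Scott topologies by Edalat's continuity result and the Kirch--Tix Theorem, show $\Val_1(\Pi(Y))$ is a countably-based bc-domain hence an $\omega\QRB$-domain, and conclude by Proposition~\ref{prop:qretr:qrb}. Your explicit handling of the weak-versus-Scott topology passage and the appeal to Lemma~\ref{lemma:qproj:unique} merely spell out steps the paper leaves implicit.
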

\begin{proof}
  $Y$ is trivially a continuous pointed dcpo.  Then we know that
  $\Val_1 (Y)$ is again continuous \cite[Section~3]{Edalat:int}, and
  that $\Val_1 (Y) = \Val_{1\;wk} (Y)$ by the Kirch-Tix Theorem.
  Similarly for $\Val_1 (\Pi (Y))$.  $\Pi (Y)$ is clearly stably
  compact, since finite.  By Theorem~\ref{thm:qretr:V}, using
  Lemma~\ref{lemma:path:qretr}, $\Val_1 (Y)$ is the image of $\Val_1
  (\Pi (Y))$ under some proper surjective map.  But $\Pi (Y)$ is a tree,
  so $\Val_1 (\Pi (Y))$ is a countably-based bc-domain by
  Lemma~\ref{lemma:V1:bc}, hence a countably-based $\RB$-domain, hence
  an $\omega\QRB$-domain, by Proposition~\ref{prop:FS:QRB} and
  Proposition~\ref{prop:omega:qrb:2}.  By
  Proposition~\ref{prop:qretr:qrb}, $\Val_1 (Y)$ must also be an
  $\omega\QRB$-domain.
\end{proof}
We can finally prove the main theorem of this paper.
\begin{thm}
  \label{thm:qrb:V}
  The probabilistic powerdomain of any $\omega\QRB$-domain is an
  $\omega\QRB$-domain.
\end{thm}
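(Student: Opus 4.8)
The plan is to reduce the statement, using the structural results of the preceding sections together with the Key Claim (Theorem~\ref{thm:qretr:V}), to the case of finite pointed posets already settled in Proposition~\ref{prop:path:V}. Let $Z$ be an $\omega\QRB$-domain. By Theorem~\ref{thm:qrb:qretr} (the implication $(i)\limp(iii)$) there are an $\omega\B$-domain $X$ and a proper surjective map $r\colon X\to Z$; and, being an $\omega\B$-domain, $X$ is the $\omega$-bilimit of an expanding sequence $(X_n)_{n\in\nat}$ of finite pointed posets, with projections $p_{nm}\colon X_m\to X_n$, embeddings $e_{nm}\colon X_n\to X_m$, and limiting families $p_n\colon X\to X_n$, $e_n\colon X_n\to X$. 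First I would verify that $\Val_1$ is \emph{locally continuous} as an endofunctor on dcpos: it is plainly locally monotone (if $f\leq g$ pointwise then $f^{-1}(U)\subseteq g^{-1}(U)$ for $U$ open), and if $(f_k)_k$ is a directed family of Scott-continuous maps with pointwise supremum $f$, then $f^{-1}(U)=\bigcup_k^\uparrow f_k^{-1}(U)$ for every Scott-open $U$, whence $\Val_1(f)(\nu)(U)=\nu(f^{-1}(U))=\sup_k\nu(f_k^{-1}(U))=\sup_k\Val_1(f_k)(\nu)(U)$ by continuity of the valuation $\nu$; this is essentially already in \cite{JP:proba}. It follows that $\Val_1$ carries $(X_n,p_{nm},e_{nm})$ to an expanding sequence $(\Val_1 X_n,\Val_1 p_{nm},\Val_1 e_{nm})$ of pointed dcpos, and that, by local continuity, $\sup_n\Val_1 e_n\circ\Val_1 p_n=\Val_1(\sup_n e_n\circ p_n)=\Val_1(\identity X)=\identity{\Val_1 X}$; so by the limit--colimit coincidence $\Val_1 X$ is, up to isomorphism, the $\omega$-bilimit of the $\Val_1 X_n$.

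Next, each $X_n$ is a finite pointed poset, so each $\Val_1 X_n$ is a continuous $\omega\QRB$-domain by Proposition~\ref{prop:path:V}, and hence $\Val_1 X$ is an $\omega\QRB$-domain by Theorem~\ref{thm:qrb:bilimit}. Moreover $X$ is a continuous dcpo, so $\Val_1 X$ is continuous \cite{Edalat:int}, and by the Kirch--Tix Theorem its weak topology agrees with its Scott topology; thus $\Val_{1\;wk}(X)=\Val_1(X)$ as spaces, and this space is an $\omega\QRB$-domain carrying its Scott topology. On the other hand $X$, being an $\omega\B$-domain, is stably compact, and $Z$, being a dcpo, is $T_0$; so Theorem~\ref{thm:qretr:V} applies to $r$ and shows that $\Val_{1\;wk}(r)\colon\Val_{1\;wk}(X)\to\Val_{1\;wk}(Z)$ is proper and surjective, hence a quasi-retraction (Lemma~\ref{lemma:qproj:unique}). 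The specialization preorder of $\Val_{1\;wk}(Z)$ is the pointwise order on $\Val_1(Z)$, which is a partial order, so $\Val_{1\;wk}(Z)$ is $T_0$. Therefore $\Val_{1\;wk}(Z)$ is a $T_0$ quasi-retract of the $\omega\QRB$-domain $\Val_{1\;wk}(X)$, and Proposition~\ref{prop:qretr:qrb} shows it is an $\omega\QRB$-domain.

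The delicate point, which I expect to be the main obstacle, is this last step: ``$\Val_1(Z)$ is an $\omega\QRB$-domain'' is a statement about the dcpo $\Val_1(Z)$ with its \emph{Scott} topology, whereas the argument above only puts an $\omega\QRB$-structure on $\Val_{1\;wk}(Z)$, which carries the a priori strictly coarser weak topology. The resolution is that the proof of Proposition~\ref{prop:qretr:qrb} does more than assert that the quasi-retract is an $\omega\QRB$-domain: it shows that a $T_0$ quasi-retract of a $\QRB$-domain is a quasi-continuous dcpo \emph{whose topology is the Scott topology of its specialization order}. Applied to $\Val_{1\;wk}(Z)$, whose specialization order is the pointwise order on $\Val_1(Z)$, this says precisely that the weak topology on $\Val_1(Z)$ coincides with its Scott topology; hence the pointed dcpo $\Val_1(Z)$ admits a countable generating family of quasi-deflations, i.e.\ is an $\omega\QRB$-domain, as desired. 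The only other verification of note is the local continuity (equivalently, the $\omega$-bilimit preservation) of $\Val_1$ used in the first step, which is routine.
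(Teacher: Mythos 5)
Your proposal is correct and follows essentially the same route as the paper's own proof: reduce via Theorem~\ref{thm:qrb:qretr} to a proper surjection from an $\omega\B$-domain, use local continuity of $\Val_1$ with Proposition~\ref{prop:path:V} and Theorem~\ref{thm:qrb:bilimit} on the bilimit, invoke Kirch--Tix, apply Theorem~\ref{thm:qretr:V} and Proposition~\ref{prop:qretr:qrb}, and finally identify the specialization order of $\Val_{1\;wk}$ with the pointwise order so that the weak and Scott topologies coincide. Your explicit treatment of that last weak-vs-Scott subtlety matches exactly what the paper does.
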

\begin{proof}
  Let $Y$ be an $\omega\QRB$-domain.  By Theorem~\ref{thm:qrb:qretr},
  $Y$ is the image of some $\omega\B$-domain $X = \lim_{i \in \nat}
  X_i$ under some proper surjective map.  Since $\Val_1$ is a locally
  continuous functor on the category of dcpos, (as mentioned in proof
  of \cite[Lemma~11]{JT:troublesome}), $\Val_1 (X)$ is also a bilimit
  of the spaces $\Val_1 (X_i)$, $i \in I$.  Each $\Val_1 (X_i)$ is a
  continuous $\omega\QRB$-domain by Proposition~\ref{prop:path:V},
  hence so is $\Val_1 (X)$, by Theorem~\ref{thm:qrb:bilimit} and since
  bilimits of continuous dcpos are continuous
  \cite[Theorem~3.3.11]{AJ:domains}.

  Since $X$ is bifinite, it is stably compact, (use, e.g.,
  Theorem~\ref{thm:qrb:scomp}), and $\Val_1 (X) = \Val_{1\;wk} (X)$
  because $X$ is continuous and pointed, using the Kirch-Tix Theorem.
  So $\Val_{1\;wk} (Y)$ is the image of $\Val_1 (X)$ under a proper
  surjective map, by Theorem~\ref{thm:qretr:V}.  It is clear that
  $\Val_{1\;wk} (Y)$ is $T_0$, so by Proposition~\ref{prop:qretr:qrb}
  $\Val_{1\;wk} (Y))$ is an $\omega\QRB$-domain in its specialization
  preorder $\preceq$, and its topology must be the Scott topology of
  $\preceq$.

  But it is easy to see that $\preceq$ is the usual ordering on
  $\Val_1 (Y)$, i.e., $\nu \preceq \nu'$ iff $\nu (U) \leq \nu' (U)$
  for every open $U$ of $Y$: note that if $\nu \preceq \nu'$, then
  $\nu' \in [U > r]$ for every $r < \nu (U)$.  So $\Val_{1\;wk} (Y)) =
  \Val_1 (Y)$, and we conclude.
\end{proof}

Using the fact that $\Val_1 (X)$ is continuous whenever $X$ is
continuous and pointed \cite[Section~3]{Edalat:int}, it also follows:
\begin{cor}
  \label{corl:qrb:V}
  The probabilistic powerdomain of any continuous $\omega\QRB$-domain
  (in particular, every $\RB$-domain) is again a continuous
  $\omega\QRB$-domain.  
\end{cor}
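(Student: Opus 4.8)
The plan is to obtain this as an immediate consequence of Theorem~\ref{thm:qrb:V} together with the classical fact that $\Val_1$ preserves continuity of pointed dcpos. First I would let $Y$ be a continuous $\omega\QRB$-domain. By Theorem~\ref{thm:qrb:V}, $\Val_1 (Y)$ is already an $\omega\QRB$-domain; in particular it is a pointed dcpo with a countable generating family of quasi-deflations, and it is quasi-continuous and stably compact. So the only thing left to establish is that $\Val_1 (Y)$ is in fact \emph{continuous}, not merely quasi-continuous.

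For that step I would invoke the result of Edalat \cite[Section~3]{Edalat:int}, building on Jones \cite{JP:proba}: if $Z$ is a pointed continuous dcpo, then $\Val_1 (Z)$ is a continuous dcpo. Since $Y$, being an $\omega\QRB$-domain, is a pointed dcpo, and is continuous by hypothesis, this applies with $Z = Y$ and gives that $\Val_1 (Y)$ is a continuous dcpo. Combining with the previous paragraph, $\Val_1 (Y)$ is a continuous $\omega\QRB$-domain. The one remark worth making explicit is the coincidence of topologies: the proof of Theorem~\ref{thm:qrb:V} routes through $\Val_{1\;wk} (Y)$, but because $Y$ is continuous and pointed the Kirch--Tix Theorem gives $\Val_{1\;wk} (Y) = \Val_1 (Y)$ on the nose, so the weak and Scott topologies agree here and no reconciliation is needed; the continuity just obtained is continuity with respect to the Scott topology of the usual pointwise ordering.

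For the parenthetical clause about $\RB$-domains, I would note that every countably-based $\RB$-domain is a pointed continuous dcpo that is countably-based, hence a $\QRB$-domain by Proposition~\ref{prop:FS:QRB} and therefore an $\omega\QRB$-domain by Proposition~\ref{prop:omega:qrb:2}; thus it is in particular a continuous $\omega\QRB$-domain and the statement applies to it. I do not expect any real obstacle in this corollary: it is purely a packaging of Theorem~\ref{thm:qrb:V} with the known external continuity result of Jones and Edalat, with the topology identification handled by Kirch--Tix exactly as in the proof of Theorem~\ref{thm:qrb:V}.
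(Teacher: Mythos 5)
Your proposal is correct and is essentially the paper's own argument: the corollary is stated there as an immediate consequence of Theorem~\ref{thm:qrb:V} together with Edalat's result that $\Val_1 (X)$ is continuous whenever $X$ is continuous and pointed, with the Kirch--Tix identification already absorbed into the proof of Theorem~\ref{thm:qrb:V}. Your extra care about countable bases in the $\RB$-domain remark is sound (and if anything slightly more precise than the paper's parenthetical).
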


\section{Conclusion, Failures and Perspectives}
\label{sec:conc}

We have shown that the category $\omega\QRB$ of $\omega\QRB$-domains
and continuous maps is a category of quasi-continuous, stably compact
dcpos that is closed, not only under finite products, bilimits of
expanding sequences, retracts (and even quasi-retracts), but also
under the probabilistic powerdomain functor $\Val_1$.  It is thus
reasonably well-behaved.

\begin{figure}
  \centering
  \ifpdf
  \input{T.pdftex_t}
  \else
  \input{T.pstex_t}
  \fi
  \caption{Plotkin's Domain $T$}
  \label{fig:T}
\end{figure}

But $\omega\QRB$ is {\em not\/} Cartesian-closed.  Consider the
space~$T$ of \cite[Figure~12]{AJ:domains}, see Figure~\ref{fig:T}.
This is an $\omega\QRB$-domain: define the quasi-deflations
$\varphi_i$, $i \in \nat$, as mapping $\bot$ to $\upc \{\bot\}$, any
element $(j, n)$ to $\upc \{(j,n)\}$ if $n < i$, and any other
element to $\upc \{(0,i), (1,i)\}$.

However, $[T \to T]$ is not an $\omega\QRB$-domain.

Assume ${(\varphi_i)}_{i \in \nat}$ were a generating sequence of
quasi-deflations on $[T \to T]$.  For each function $f : \nat \to
\{0,1\}$, there is a continuous map $\hat f : T \to T$ that sends
$\bot$ to $\bot$, $\top$ to $\top$, $(0,n)$ to $(f (n), n)$ and
$(1,n)$ to $(1-f(n), n)$ ($\hat f$ exchanges $(0,n)$ and $(1,n)$ if
$f(n)=1$, leaves them unswapped if $f (n)=0$).  Write $\varphi_i (\hat
f)$ as $\upc E_{i,f}$, where $E_{i,f}$ is finite.

We claim that: $(*)$ for each $f : \nat \to \{0,1\}$, there is an
index $i \in \nat$ such that $\hat f \in E_{i, f}$.  If there were an
element $g$ of $\varphi_i (\hat f)$ such that $g (0,0) = \bot$, for
infinitely many values of $i \in \nat$, then this would hold for every
$i$; but the map sending $\bot$ and $(0,0)$ to $\bot$, and all other
elements to $\top$ would be in $\bigcap_{i \in \nat} {\varphi_i (\hat
  f)} = \upc \hat f$, which is impossible.  So, for $i$ large enough,
no element $g$ of $\varphi_i (\hat f)$ maps $(0,0)$ to $\bot$.
Similarly, for $i$ large enough, no element $g$ of $\varphi_i (\hat
f)$ maps $(1,0)$ to $\bot$.  Since $\hat f \in \varphi_i (\hat f)$,
for $i$ large enough we find $g \in E_{i,f}$ with $g (0,0) \neq \bot$,
$g (1,0) \neq \bot$, and $g \leq \hat f$.

We check that $g=\hat f$.  First, $g (\bot) \leq \hat f (\bot) = \bot$
so $g (\bot)=\hat f (\bot)$.  Next, $g (0, 0)$ is an element below
$\hat f (0, 0) = (f (0), 0)$ and different from $\bot$, and the only
element satisfying this is $\hat f (0, 0)$.  Similarly, $g (1, 0) =
\hat f (1, 0)$.  By induction on $n \in \nat$, we show that $g (j, n)
= \hat f (j, n)$.  At rank $n+1$, $g (0, n+1)$ is an element below
$\hat f (0, n+1) = (f (n+1), n+1)$ and above both $g (0, n) = \hat f
(0, n) = (f (n), n)$ and $g (1, n) = \hat f (1, n) = (1-f(n), n)$.
The only such element is $(f (n+1), n+1) = \hat f (0, n+1)$.
Similarly, $g (1, n+1) = \hat f(1, n+1)$.  Finally, $g (\top)$ is an
element above all $g (j, n)$, hence must equal $\top = \hat f (\top)$.

Since $g \in E_{i, f}$, and $g = \hat f$, Claim $(*)$ is proved.

However, there are uncountably many functions of the form $\hat f$,
and only countably many elements of $\bigcup_{\substack{i \in \nat\\f
    : \nat \to \{0, 1\}}} E_{i,f}$, since each set $E_{i,f}$ is
finite, and for each $i \in \nat$, there are only finitely many
distinct sets $E_{i, f}$ with $f : \nat \to \{0, 1\}$.  We have
reached a contradiction.

Since exponentials in any full subcategory of the category of dcpos
must be isomorphic to the ordinary continuous function space
\cite{Smyth:CCC}, it follows:
\begin{prop}
  $\omega\QRB$ is not Cartesian-closed.
\end{prop}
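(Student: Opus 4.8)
The plan is to argue by contradiction, leveraging the two facts just established — that $T$ (Figure~\ref{fig:T}) is an $\omega\QRB$-domain, while $[T \to T]$ is not — together with the cited theorem of Smyth \cite{Smyth:CCC}. First I would record that $\omega\QRB$, with continuous (equivalently Scott-continuous) maps as morphisms, is a \emph{full} subcategory of the category $\mathbf{DCPO}$ of dcpos and Scott-continuous maps: this is immediate from the definitions, since an $\omega\QRB$-domain is in particular a dcpo and the morphisms are exactly the continuous maps. Next I would check that finite products in $\omega\QRB$ are computed as in $\mathbf{DCPO}$: the one-point dcpo is an $\omega\QRB$-domain and is terminal, and for any two $\omega\QRB$-domains $X$, $Y$ the dcpo product $X \times Y$ with the componentwise ordering is again an $\omega\QRB$-domain by Lemma~\ref{lemma:qrb:prod}, and it carries the categorical product structure since the projections and pairing maps are continuous. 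Hence $\omega\QRB$ is a full subcategory of $\mathbf{DCPO}$ whose finite products agree with those of $\mathbf{DCPO}$, which is exactly the setting of Smyth's result.

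Now suppose, for contradiction, that $\omega\QRB$ were Cartesian-closed. Then the exponential object $T \Rightarrow T$ would exist in $\omega\QRB$, since $T$ is an $\omega\QRB$-domain. By Smyth's theorem \cite{Smyth:CCC}, in any Cartesian-closed full subcategory of $\mathbf{DCPO}$ (with products as in $\mathbf{DCPO}$) the exponential of two objects must be isomorphic, as a dcpo, to the ordinary continuous function space between them, ordered pointwise. Therefore $T \Rightarrow T$ would be isomorphic to $[T \to T]$, and in particular $[T \to T]$ would itself be (isomorphic to) an object of $\omega\QRB$, i.e.\ an $\omega\QRB$-domain. This contradicts the fact proved just above — that $[T \to T]$ is not an $\omega\QRB$-domain — via the counting argument comparing the uncountably many maps $\hat f$ against the countably many elements of $\bigcup_{i \in \nat,\, f : \nat \to \{0,1\}} E_{i,f}$. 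This contradiction establishes the proposition.

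I do not expect any genuine obstacle here: all the substance has already been invested in constructing $T$ and analysing $[T \to T]$. The only point deserving a moment's attention is the verification that the terminal object and binary products of $\omega\QRB$ coincide with those of $\mathbf{DCPO}$, which is precisely what makes Smyth's characterisation of exponentials in full Cartesian-closed subcategories of $\mathbf{DCPO}$ applicable; this is handled by Lemma~\ref{lemma:qrb:prod} and the triviality of the one-point domain.
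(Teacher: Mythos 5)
Your proposal is correct and follows exactly the paper's own route: the substance lies in the preceding facts that $T$ is an $\omega\QRB$-domain while $[T \to T]$ is not, and the proposition then follows by Smyth's characterisation of exponentials in full sub-Cartesian-closed categories of dcpos, just as you argue. Your extra care in checking that products and the terminal object agree with those of $\mathbf{DCPO}$ (via Lemma~\ref{lemma:qrb:prod}) is a reasonable explicitation of a step the paper leaves implicit.
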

The above argument also shows that, although $T$ is both continuous
(even algebraic) and an $\omega\QRB$-domain, $T$ is not an
$\RB$-domain: so Corollary~\ref{corl:qrb:V} is not enough to settle
the Jung-Tix problem in the positive either.

One might hope that countability would be the problem.  However, we
required countability in at least two places.  The first one is
Lemma~\ref{lemma:qs:nonempty}, which would fail in case we allowed for
directed families ${(E^0_i)}_{i \in I}$ instead of non-decreasing
sequences.  The second one is in the $(i) \limp (iii)$ direction of
Theorem~\ref{thm:qrb:qretr}, where we need countability to obtain $Y$
as a quasi-projection, and not just a quasi-retract.  (This is similar
to an open problem in the theory of $\RB$-domains, see \cite[Remark
after Theorem~4.9]{Jung:CCC}.)  In turn, we need quasi-projections,
not just quasi-retracts, in the Key Claim, Theorem~\ref{thm:qretr:V}.

To get around the Jung-Tix problem using our results, one might shift
the focus towards the Kleisli category $\omega\QRB_{\Smyth}$, for
example.  This is a full subcategory of Jung, Kegelmann and Moshier's
pleasing category $\pmb{SCS}^*$ of stably compact spaces and closed
relations \cite{DBLP:journals/entcs/JungKM01}.

On the other hand, we would like to point out the following deep
connection between $\QRB$-domains and $\FS$-domains.
\begin{defi}
  \label{defi:ctrlQRB}
  A {\em controlled quasi-deflation\/} on a poset $X$ is a pair of a
  Scott-continuous map $f : X \to X$ and of a quasi-deflation $\varphi
  : X \to \Fin (X)$ such that $\varphi (x) \subseteq \upc f (x)$ for
  every $x \in X$.  The map $f$ is the {\em control\/}.

  A {\em controlled $\QRB$-domain\/} is a pointed dcpo $X$ with a {\em
    generating family of controlled quasi-deflations\/}, i.e., a
  directed family of controlled quasi-deflations (in the pointwise
  ordering) ${(f_i, \varphi_i)}_{i \in I}$ such that $x = \sup_{i \in
    I} f_i (x)$ for every $x \in X$.
\end{defi}
So a controlled quasi-deflation is a pair $(f, \varphi)$ with the
property that $\upc f (x) \supseteq \varphi (x) \supseteq \upc x$ for
every $x \in X$.  Every controlled $\QRB$-domain is a $\QRB$-domain:
given a generating family of controlled quasi-deflations ${(f_i,
  \varphi_i)}_{i \in I}$, $\bigcap_{i \in I}^\downarrow {\varphi_i (x)}
\subseteq \bigcap_{i \in I}^\downarrow \upc f_i (x) = \upc \sup_{i \in
  I} f_i (x) = \upc x$, and the converse inclusion $\upc x \subseteq
\bigcap_{i \in I}^\downarrow {\varphi_i (x)}$ is obvious; so
${(\varphi_i)}_{i \in I}$ is a generating family of quasi-deflations.

\begin{thm}
  \label{thm:ctrlQRB}
  The controlled $\QRB$-domains are exactly the $\FS$-domains, and
  hence form a Cartesian-closed category.
\end{thm}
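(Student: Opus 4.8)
The plan is to prove the two inclusions separately: every controlled $\QRB$-domain is an $\FS$-domain, and every $\FS$-domain is a controlled $\QRB$-domain. The Cartesian-closedness then follows immediately, since $\FS$-domains form a Cartesian-closed category \cite[Section~II.2]{GHKLMS:contlatt}\cite[Section~4.2.2]{AJ:domains}.

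\medskip

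\textbf{$\FS$-domains are controlled $\QRB$-domains.}
Let $X$ be an $\FS$-domain, with a directed family ${(f_i)}_{i \in I}$ of continuous maps, each finitely separated from $\identity X$ by a finite set $M_i$, and with $\sup_{i \in I} f_i = \identity X$. The idea is to take $\varphi_i (x) = \upc (M_i \cap \dc x)$ --- or, more carefully, $\upc \{m \in M_i \mid f_i (x) \leq m \leq x\}$. For every $x$ there is $m \in M_i$ with $f_i (x) \leq m \leq x$, so this set is non-empty, hence $\varphi_i (x) \in \Fin (X)$ and $x \in \varphi_i (x) \subseteq \upc f_i (x)$. One must check: $\varphi_i$ is monotonic (if $x \leq y$ then $f_i(x) \leq f_i(y)$, and every $m$ witnessing membership for $y$, once we also know $m \leq x$\ldots\ --- this needs care, and is exactly the standard argument showing $f_i (x) \ll x$ in \cite[Lemma~II-2.16]{GHKLMS:contlatt}, which I would mirror); $\varphi_i$ is Scott-continuous into $\Fin_\sigma (X)$; $\img \varphi_i$ is finite (it is indexed by subsets of the finite set $M_i$); and ${(\varphi_i)}_{i\in I}$ is directed with $\bigcap_{i\in I}^\downarrow \varphi_i (x) = \upc x$, the latter because $\varphi_i (x) \subseteq \upc f_i (x)$ and $\bigcap_i \upc f_i (x) = \upc \sup_i f_i (x) = \upc x$. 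Then $(f_i, \varphi_i)$ is a generating family of controlled quasi-deflations. The delicate point here is establishing monotonicity and Scott-continuity of $\varphi_i$; I expect one has to define $\varphi_i (x)$ slightly differently (perhaps $\upc (\upc f_i(x) \cap M_i \cap \dc x)$, or intersecting with $\uuarrow f_i(x)$-type sets) to make these go through cleanly, much as in Jung and Tix's treatment.

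\medskip

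\textbf{Controlled $\QRB$-domains are $\FS$-domains.}
Let $X$ be a controlled $\QRB$-domain with generating family ${(f_i, \varphi_i)}_{i \in I}$; we already know (from the paragraph before the theorem) that $X$ is a $\QRB$-domain, hence quasi-continuous, stably compact, Lawson-compact (Corollaries~\ref{corl:qrb:qcont}, \ref{corl:qrb:lcomp}, Theorem~\ref{thm:qrb:scomp}). The goal is to manufacture, for each $i$, a continuous map $g_i : X \to X$ finitely separated from $\identity X$, with the $g_i$ directed and $\sup_i g_i = \identity X$. The natural candidate is to build $g_i$ from $f_i$ and a choice function picking, for each $x$, an element of $\varphi_i (x)$ below $x$ --- but such a choice will not be continuous in general, exactly the monotonicity failure discussed in the introduction. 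The better route: enumerate $\img \varphi_i = \{\upc E_{i1}, \ldots, \upc E_{in_i}\}$, let $E_i = \bigcup_k E_{ik}$, and for $x \in X$ set $g_i (x) = f_i(x)$; the finitely separating set is then $M_i = \upc f_i(\text{something})$\ldots\ Actually the cleaner statement is: take $g_i = f_i$ directly. Since $\varphi_i (x) \subseteq \upc f_i (x)$, for each $x$ pick (non-continuously, just for the witness) an element $m_x \in \varphi_i(x)$ with $f_i(x) \leq m_x \leq x$ --- this exists because $x \in \varphi_i(x) \subseteq \upc f_i(x)$, so $f_i(x) \leq m_x$ for some $m_x$ in the generating set $E_i$, and we need $m_x \leq x$, which holds by choosing $m_x \in \varphi_i(x)$ correctly: since $x \in \varphi_i(x) = \upc E_{ik}$ for the appropriate $k$, there is $m_x \in E_{ik} \subseteq E_i$ with $m_x \leq x$, and $m_x \in \varphi_i(x) \subseteq \upc f_i(x)$ gives $f_i(x) \leq m_x$. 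Thus $M_i := E_i$ (a single finite set, independent of $x$!) finitely separates $f_i$ from $\identity X$: for every $x$, there is $m \in M_i$ with $f_i(x) \leq m \leq x$. The family ${(f_i)}_{i\in I}$ is directed (inherited from the controlled quasi-deflations' pointwise order) and $\sup_i f_i = \identity X$ by the defining condition of a controlled $\QRB$-domain. Hence $X$ is an $\FS$-domain.

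\medskip

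\textbf{Main obstacle.}
The genuinely delicate direction is the first one, $\FS \Rightarrow$ controlled $\QRB$: defining $\varphi_i$ from the finitely separating data $M_i$ so that $\varphi_i$ is actually a \emph{quasi-deflation}, i.e.\ Scott-continuous from $X$ to $\Fin_\sigma (X)$ with finite image, while still trapped between $\upc x$ and $\upc f_i(x)$. The reverse direction, by contrast, is nearly immediate once one notices that the union $E_i$ of all the finite sets appearing in $\img\varphi_i$ serves as a single finitely separating set for the control $f_i$. I would therefore write the easy direction in a few lines and devote the bulk of the proof to verifying continuity and finiteness of the $\varphi_i$'s in the $\FS \Rightarrow$ direction, following the pattern of \cite[Lemma~II-2.16]{GHKLMS:contlatt} and \cite[Section~3.2]{JT:troublesome}.
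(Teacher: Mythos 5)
Your second direction (controlled $\QRB$-domains are $\FS$-domains) is exactly the paper's argument: the union $M_i$ of the finite sets generating $\img \varphi_i$ finitely separates the control $f_i$ from $\identity X$, because $x \in \varphi_i (x) \subseteq \upc f_i (x)$ produces an $m \in M_i$ with $f_i (x) \leq m \leq x$. No issue there.

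The gap is in the direction you yourself flag as delicate, and your hedge does not close it. Your candidate $\varphi_i (x) = \upc \{m \in M_i \mid f_i (x) \leq m \leq x\}$ is not monotonic: the constraint $m \leq x$ varies covariantly with $x$ while the constraint $f_i (x) \leq m$ varies contravariantly, so for $x \leq y$ neither inclusion between $\varphi_i (x)$ and $\varphi_i (y)$ holds in general, and mirroring the proof that $f_i (x) \ll x$ will not rescue this. The paper's fix consists of two ideas absent from your sketch. First, define $\varphi_i (x) = \upc (M_i \cap \upc f_i (x))$ with \emph{no} intersection with $\dc x$ at all: then $\varphi_i (x)$ depends on $x$ only through $f_i (x)$, so monotonicity and Scott-continuity are immediate (the filtered intersection of the finite sets $M_i \cap \upc f_i (x_k)$ stabilizes), while $x \in \varphi_i (x)$ still holds because separation supplies some $m \in M_i \cap \upc f_i (x)$ with $m \leq x$. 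Second---and this is the part you do not address at all---the family ${(\varphi_i)}_{i \in I}$ must be \emph{directed}, and this does not follow from directedness of ${(f_i)}_{i \in I}$, since the sets $M_i$ vary with $i$. The paper obtains it by first replacing each separated map $g_i$ by $f_i = g_i \circ g_i$, which by Jung's lemma is \emph{strongly} finitely separated from $\identity X$ (for each $x$ there is a pair $(m, m')$ in a finite set with $f_i (x) \leq m \ll m' \leq x$); compactness of $\upc m'$ then yields an index $j$ with $f_i \prec_{M_i} f_j$, i.e.\ for all $x$ some $m \in M_i$ satisfies $f_i (x) \leq m \leq f_j (x)$, and this relation between the maps themselves is what proves that a suitable $\varphi_\ell$ refines both $\varphi_i$ and $\varphi_{i'}$. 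Without strong separation and the relation $\prec_{M_i}$, the directedness of the quasi-deflations---and hence the fact that they form a generating family---remains unproven.
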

\begin{proof}
  If ${(f_i, \varphi_i)}_{i \in I}$ is a generating family of
  controlled quasi-deflations on a pointed dcpo $X$, then $f_i$ is
  finitely separated from $\identity X$: indeed, let $\img \varphi_i =
  \{E_{i1}, \ldots, E_{in_i}\}$ and $M_i = \bigcup_{j=1}^{n_i}
  E_{ij}$, then for every $x \in X$, since $x \in \varphi_i (x)$,
  there is a point $m \in E_{ij}$ where $\upc E_{ij} = \varphi_i (x)$
  such that $m \leq x$, and since $\upc E_{ij} \subseteq \upc f_i
  (x)$, we have $f_i (x) \leq m$; so $M_i$ is a finitely separating
  set for $f_i$ on $X$.

  Conversely, assume $X$ is an $\FS$-domain, and let ${(g_i)}_{i \in
    I}$ be a directed family of continuous maps, finitely separated
  from $\identity X$, and whose pointwise least upper bound is
  $\identity X$.  Let $f_i = g_i \circ g_i$.  By
  \cite[Lemma~2]{Jung:CCC:LICS}, $f_i$ is {\em strongly finitely
    separated\/} from $\identity X$, i.e., there is a finite set $E_i$
  of pairs of elements $m \ll m'$ such that for every $x \in X$, one
  can find a pair $(m, m') \in E_i$ such that $f_i (x) \leq m \ll m'
  \leq x$.  Moreover, the pointwise least upper bound of ${(f_i)}_{i
    \in I}$ is again $\identity X$.

  For each pair $(m, m') \in E_i$ with $m \ll m'$, $\upc m' \subseteq
  \uuarrow m \subseteq \bigcup_{j \in I} f_j^{-1} (\uuarrow m)$.
  Indeed, for every $x \in \uuarrow m$, since $x = \sup_{j \in I} f_j
  (x)$ and $\uuarrow m$ is Scott-open, $f_j (x) \in \uuarrow m$ for
  some $j \in I$.  Since $\upc m'$ is compact, and the family
  ${(f_j^{-1} (\uuarrow m))}_{j \in I}$ is directed, $\upc m'
  \subseteq f_j^{-1} (\uuarrow m)$ for some $j \in I$.  By
  directedness again, we can take the same $j$ for all pairs $m \ll
  m'$ in $E_i$.  But now $\upc m' \subseteq f_j^{-1} (\uuarrow m)$
  implies that whenever $m' \leq x$, then $m \ll f_j (x)$.  Using the
  separation property of $E_i$, for every $x \in X$, one can find a
  pair $(m, m') \in E_i$ such that $f_i (x) \leq m \ll f_j (x)$.  In
  particular, letting $M_i$ be the set of elements $m$ such that $(m,
  m') \in E_i$ for some $m' \in X$, $f_i$ is {\em finitely separated
    from $f_j$\/}, with separating set $M_i$, with the obvious
  meaning: for every $x \in X$, there is an $m \in M_i$ such that $f_i
  (x) \leq m \leq f_j (x)$.  In this case, we write $f_i \prec_{M_i}
  f_j$.

  We now define $\varphi_i (x)$ as $\upc (M_i \cap \upc f_i (x))$.
  Since $M_i$ is finite, $\varphi_i$ is a map from $X$ to $\Fin (X)$.
  It is monotonic, and we claim it is Scott-continuous.  Let
  ${(x_k)}_{k \in K}$ be a directed family in $X$.  Then ${\varphi_i
    (\sup_{k \in K} x_k)} = \upc (M_i \cap \upc \sup_{k \in K} f_i
  (x_k))$ (since $f_i$ is continuous) $= \upc (M_i \cap \bigcap_{k \in
    K} \upc f_i (x_k)) = \upc \bigcap_{k \in K} (M_i \cap \upc f_i
  (x_k))$.  The latter intersection is an intersection of finite sets,
  hence there is a $k \in K$ such that ${\varphi_i (\sup_{k \in K}
    x_k)} = \varphi_i (x_k)$, from which Scott-continuity is
  immediate.

  We must now check that ${(\varphi_i)}_{i \in I}$ is directed.  Given
  $i, i' \in I$, one can find $j, j' \in I$ so that $f_i \prec_{M_i}
  f_j$ and $f_{i'} \prec_{M_{i'}} f_{j'}$.  By directedness, there is
  an $\ell \in I$ such that $f_j, f_{j'} \leq f_\ell$.  We claim that
  for every $x \in X$, $\varphi_i (x), \varphi_{i'} (x) \supseteq
  \varphi_\ell (x)$.  For every $y \in \varphi_\ell (x)$, $f_\ell (x)
  \leq y$.  So $f_j (x) \leq y$.  Since $f_i \prec_{M_i} f_j$, there
  is an element $m \in M_i$ such that $f_i (x) \leq m \leq f_j (x)
  \leq y$.  So $m$ is in $M_i \cap \upc f_i (x)$, and below $y$,
  whence $y \in \varphi_i (x)$.  Similarly, $y$ is in $\varphi_{i'}
  (x)$.

  Finally, $x \in {\varphi_i (x)}$ for every $x \in X$: by definition,
  there is a pair $(m, m') \in E_i$ such that $f_i (x) \leq m \ll m'
  \leq x$; so $m \in M_i$, $m \in \upc f_i (x)$, and $m$ is below $x$.
  And $\bigcap_{i \in I}^\downarrow {\varphi_i (x)} \subseteq
  \bigcap_{i \in I}^\downarrow \upc f_i (x) = \upc x$, while the
  converse inclusion is obvious.  So ${(\varphi_i)}_{i \in I}$ is a
  generating family of quasi-deflations.
\end{proof}

Defining the {\em controlled $\omega\QRB$-domains\/} as the
$\omega\QRB$-domains, except with sequences of controlled
quasi-deflations instead of directed families, and similarly for the
$\omega\FS$-domains (a.k.a., the countably-based $\FS$-domains, again
a Cartesian-closed category \cite[Theorem~11]{Jung:CCC:LICS}), we
prove similarly:
\begin{thm}
  \label{thm:omega:ctrlQRB}
  The controlled $\omega\QRB$-domains are exactly the
  $\omega\FS$-domains, and hence form a Cartesian-closed category.
\end{thm}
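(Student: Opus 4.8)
The plan is to mirror the proof of Theorem~\ref{thm:ctrlQRB}, paying attention to the fact that we must now produce \emph{sequences}, not merely directed families, of controlled quasi-deflations. The easy direction is immediate: if $X$ is a controlled $\omega\QRB$-domain with generating sequence $(f_i,\varphi_i)_{i\in\nat}$ of controlled quasi-deflations, then, exactly as in the first paragraph of the proof of Theorem~\ref{thm:ctrlQRB}, writing $\img\varphi_i=\{\upc E_{i1},\ldots,\upc E_{in_i}\}$ and $M_i=\bigcup_j E_{ij}$, each $f_i$ is finitely separated from $\identity X$ with separating set $M_i$. Since $(f_i)_{i\in\nat}$ is non-decreasing with pointwise least upper bound $\identity X$ by hypothesis, $X$ is an $\omega\FS$-domain.

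For the converse, let $X$ be an $\omega\FS$-domain, witnessed by a non-decreasing sequence $(g_i)_{i\in\nat}$ of continuous maps finitely separated from $\identity X$ with $\sup_i g_i=\identity X$. Put $f_i=g_i\circ g_i$. By \cite[Lemma~2]{Jung:CCC:LICS}, each $f_i$ is strongly finitely separated from $\identity X$, witnessed by a finite set $E_i$ of pairs $m\ll m'$; moreover $(f_i)_{i\in\nat}$ is again non-decreasing (each $g_i$ is monotone and $g_i\le g_{i+1}$) with $\sup_i f_i=\identity X$. I will then rerun the argument of Theorem~\ref{thm:ctrlQRB} producing, for each $i$, an index $j$ with $f_i\prec_{M_i}f_j$ (finite separation with $M_i=\{m\mid (m,m')\in E_i\}$); the only extra observation is that, because the family $(f_k^{-1}(\uuarrow m))_{k\in\nat}$ is \emph{non-decreasing}, the index $j$ obtained from compactness of $\upc m'$ may be taken arbitrarily large, so we may fix $j_i\ge i$ working simultaneously for all the finitely many pairs in $E_i$. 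Setting $\varphi_i(x)=\upc\bigl(M_i\cap\upc f_i(x)\bigr)$, the verifications of Theorem~\ref{thm:ctrlQRB} show that $\varphi_i$ is a Scott-continuous map $X\to\Fin(X)$ with $x\in\varphi_i(x)\subseteq\upc f_i(x)$ and $\img\varphi_i$ finite, so each $(f_i,\varphi_i)$ is a controlled quasi-deflation, and $\bigcap_{i\in\nat}^\downarrow\varphi_i(x)\subseteq\bigcap_{i\in\nat}^\downarrow\upc f_i(x)=\upc x$, so $(\varphi_i)_{i\in\nat}$ is already a generating (directed) family of quasi-deflations.

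It remains to extract a non-decreasing generating \emph{sequence}, in the spirit of Proposition~\ref{prop:omega:qrb}. Set $k_0=0$ and $k_{i+1}=\max(j_{k_i},k_i+1)$, so $(k_i)_{i\in\nat}$ is strictly increasing (hence $k_i\to\infty$) and $k_{i+1}\ge j_{k_i}$. Using $f_{k_i}\prec_{M_{k_i}}f_{j_{k_i}}$ together with $f_{k_{i+1}}\ge f_{j_{k_i}}$, one checks that any $y\in\varphi_{k_{i+1}}(x)$ lies in $\varphi_{k_i}(x)$, so $(\varphi_{k_i})_{i\in\nat}$ is non-decreasing in the quasi-deflation ordering; likewise $f_{k_i}\le f_{k_{i+1}}$. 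Each $f_{k_i}(x)\le x$ (since $x\in\varphi_{k_i}(x)\subseteq\upc f_{k_i}(x)$), and $(f_{k_i})_{i\in\nat}$ is a cofinal subsequence of the non-decreasing sequence $(f_i)_{i\in\nat}$, whence $\sup_i f_{k_i}(x)=\sup_i f_i(x)=x$. Finally $\bigcap_{i\in\nat}^\downarrow\varphi_{k_i}(x)=\upc x$: the inclusion $\supseteq$ is clear, and for $\subseteq$, given any $i_0$, picking $m$ with $k_m\ge j_{i_0}$ and arguing as above gives $\varphi_{k_m}(x)\subseteq\varphi_{i_0}(x)$, so $\bigcap_i\varphi_{k_i}(x)\subseteq\bigcap_{i_0}\varphi_{i_0}(x)=\upc x$. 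Thus $(f_{k_i},\varphi_{k_i})_{i\in\nat}$ is a generating sequence of controlled quasi-deflations and $X$ is a controlled $\omega\QRB$-domain.

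The Cartesian-closedness statement is then immediate from the fact that the $\omega\FS$-domains form a Cartesian-closed category \cite[Theorem~11]{Jung:CCC:LICS}. The only genuinely new point beyond Theorem~\ref{thm:ctrlQRB}, and the main place where care is needed, is the diagonalization $k_{i+1}=\max(j_{k_i},k_i+1)$: one must choose the subsequence so that it can ``catch up'' with the separation indices $j_{k_i}$ while remaining cofinal, so that the extracted family is simultaneously a non-decreasing sequence, generating, and with controls whose supremum is still $\identity X$; everything else is a transcription of the earlier proof.
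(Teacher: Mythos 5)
Your proposal is correct and is exactly the argument the paper intends: the paper gives no separate proof of this theorem, merely stating that one "proves similarly" to Theorem~\ref{thm:ctrlQRB}, and your writeup is the faithful sequence-indexed version of that proof, with the one genuinely new ingredient (choosing the separation indices $j_i$ arbitrarily large thanks to monotonicity of the sequence, then diagonalizing to a cofinal non-decreasing subsequence) handled correctly. The only cosmetic remark is that the final inclusion $\bigcap_{i}^\downarrow\varphi_{k_i}(x)\subseteq\upc x$ follows even more directly from $\varphi_{k_i}(x)\subseteq\upc f_{k_i}(x)$ and $\sup_i f_{k_i}(x)=x$, but your detour through $\varphi_{i_0}$ is also fine.
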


Using this last observation, Corollary~\ref{corl:qrb:V} settles half
of the conjecture that the probabilistic powerdomain of an
$\omega\FS$-domain would be an $\omega\FS$-domain again.  We are only
lacking {\em control\/}.

\section*{Open Problems}

\begin{enumerate}[(1)]
\item Is countability necessary in Theorem~\ref{thm:qrb:qretr}?
  Precisely, can one show that the $\QRB$-domains are exactly the
  quasi-retracts of $\B$-domains?  The main difficulty seems to lie in
  the fact that a non-countable analog of
  Lemma~\ref{lemma:qs:nonempty} is missing---and Rudin's Lemma does
  not quite give us what we need, as discussed before the statement
  of the lemma.
\item If $Y$ is a quasi-retract of $X$, $X$ is stably compact, and $Y$
  is $T_0$, then is $\Val_{1\;wk} (Y)$ a quasi-retract of
  $\Val_{1\;wk} (X)$?  This would be the analog of
  Theorem~\ref{thm:qretr:V}, only with quasi-retractions instead of
  quasi-projections.
\item Is stable compactness necessary to derive Theorem~\ref{thm:qretr:V}?
\item One way of trying to prove that the probabilistic powerdomain of
  an $\omega\FS$-domain is again an $\omega\FS$-domain would be by
  inventing a new notion, say of \emph{good maps}, and show that the
  $\omega\FS$-domains, or alternatively the controlled
  $\omega\QRB$-domains, are exactly the images under good maps of
  $\omega\B$-domains.  Good maps should intuitively be intermediate
  between projections and proper surjective maps, in the sense that
  every projection should be good, and that every good map should be
  proper and surjective.  Indeed surjective proper maps preserve the
  $\QRB$ part, but not the control, while projections preserve too
  much, in the sense that not all $\omega\QRB$-domains, only the
  $\omega\RB$-domains, are retracts of $\omega\B$-domains.  Such a
  characterization of $\omega\FS$-domains would be of independent
  interest, too.
\end{enumerate}

\section*{Acknowledgment}

I would like to thank G. Plotkin for discussions, the anonymous
referees at LICS'2010, and the anonymous referees of this paper, who
offered counterexamples to conjectures, and numerous simplifications
and improvements in almost every part of the paper.  I am greatly
indebted to them.

\bibliographystyle{alpha}
\bibliography{qrb}

\end{document}